\newtheorem{assump}{Assumption}
\newtheorem{theorem}{Theorem}[section]
\newtheorem{definition}[theorem]{Definition}
\newtheorem{lemma}[theorem]{Lemma}
\newtheorem{corollary}[theorem]{Corollary}
\newtheorem{remark}[theorem]{Remark}
\newtheorem{claim}[theorem]{Claim}
\newcommand{\real}{{\mathbb{R}}}
\newcommand{\reals}{\real}
\newcommand{\Asym}{A_\mathrm{sym}}
\newcommand{\CR}{\mathcal{C}_{L\otimes R}}
\title{Decentralized Algorithms for 3D Symmetric Formations in Robotic Networks:\\ a Contraction Theory Approach}
\author{Sumeet~Singh,~\IEEEmembership{Student Member,~IEEE,}
	  Edward~Schmerling,~\IEEEmembership{Student Member,~IEEE,}
	 Marco~Pavone,~\IEEEmembership{Member,~IEEE}%
\thanks{Sumeet Singh and Marco Pavone are with the Department of Aeronautics and Astronautics; Edward Schmerling is with the Institute for Computational and Mathematical Engineering, Stanford University, Stanford, CA, 94305, \{{\tt\small ssingh19, schmrlng, pavone}\} {\tt \small @stanford.edu}.}%
\thanks{This work was supported in part by the Stanford Graduate Fellowship (SGF) and the King Abdulaziz City for Science and Technology (KACST).}}
\begin{document}
\maketitle

\begin{abstract}
This paper presents decentralized algorithms for formation control of multiple robots in three dimensions. Specifically, we leverage the mathematical properties of cyclic pursuit along with results from contraction and partial contraction theory to design decentralized control algorithms that ensure global convergence  to symmetric formations. We first consider regular polygon formations as a base case, and then extend the results to Johnson solid and other polygonal mesh formations. The algorithms are further augmented to allow control over formation size and avoid collisions with other robots in the formation. The robustness properties of the algorithms are assessed in the presence of bounded additive disturbances and their effect on the \emph{quality} of the formation is quantified. Finally, we present a general methodology for embedding the control laws on complex dynamical systems, in this case, quadcopters, and validate this approach via simulations and experiments on a fleet of quadcopters.
\end{abstract}

\section{Introduction}
Recently there has been much interest in decentralized, multi-robot systems due to the plethora of possible applications and performance advantages. These systems have many attractive properties such as robustness to single-point failures, scalable implementation, and potentially lower operating costs when compared with monolithic approaches. Furthermore, multiple robots can together accomplish tasks which may be impossible or very difficult for any single robot on its own \cite{WS-QY-JT-NX:06, RSS-FYH:02, FA-HH-SC-GS:10}.

A prototypical problem in this context is decentralized formation control, that is, the problem of ensuring convergence to a desired formation via control algorithms amenable to a distributed implementation \cite{AJ-JL-ASM:03, JRTL-RWB-BJY:03, WR-RB:04}. Most existing results consider two-dimensional formations \cite{JM:05, JC-DS-JY-HC:10, LC-FM-DP-MT:08}, however, a number of robotic applications would require three-dimensional (3D) formations. For example, it may be desired to deploy a set of small satellites  to surround a larger damaged spacecraft to provide complete 3D reconstruction and visualization of the external structure. Additional examples include differential atmospheric, deep space, or underwater measurements of environmental phenomena, and filming sport and entertainment events \cite{DPS-FYH-SRP:04, YC-WY-WR-GC:13,FG-RS-VL-EN:14}. 

Accordingly, the objective of this paper is to design and rigorously analyze decentralized control algorithms for 3D formations. Our approach leverages the simple, yet effective strategy of cyclic pursuit. Essentially, the cyclic pursuit strategy entails letting each robot $i$ follow its leading neighbor $i+1$ modulo $n$, where $n$ is the number of robots. This approach is attractive due to its decentralized nature and low information requirements, namely relative position, and has led to its popularity in recent years \cite{JAM-MEB-BAF:04, MP-EF:07, JLR-MP-EF-DWM:10}. In this paper, we adopt the cyclic pursuit strategy proposed in~\cite{JLR-JJES:12} (that, in turn, generalizes earlier results in \cite{MP-EF:07, JLR-MP-EF-DWM:10}), where each robot follows its neighbors on both sides, that is, robots $i+m$ and $i-m$, where $m = \{1,\ 2,\ \ldots N\}$ is a look-ahead parameter up to a horizon $N$. This is referred to as the \emph{symmetric} cyclic control algorithm. To prove stability (i.e., convergence to a desired formation), we exploit contraction theory.  Contraction theory is a relatively recent innovation in control system design~\cite{WL-JJES:98}, and hinges upon  an exact differential analysis of convergence. At its core, contraction theory studies exponential convergence of pairs of system trajectories towards each other, and by extension, to a desired target trajectory. This represents a generalization with respect to traditional Lyapunov analysis which studies convergence to the origin, i.e., to a zero trajectory \cite{QCP-NT-JJES:09}. Partial contraction theory extends this concept by considering the convergence of trajectories to a ``set of properties," for example, a flow-invariant subspace~\cite{WW-JJES:05, QCP-JJES:07}. 

The contributions of this paper are as follows. First, we  study convergence of the symmetric cyclic control algorithm to a polygon formation in  3D. Our analysis differs from the analysis in~\cite{JLR-JJES:12} in that it considers a refined notion of the target formation (necessary to properly constrain a polygon formation). Our analysis (as in~\cite{JLR-JJES:12}) leverages partial contraction theory and, in addition, yields insights into the various tuning parameters for the controller and their impact on performance metrics such as convergence rate and control saturation. Second, we extend the definition of the target formation to accommodate combinations of regular polygons such as the set of convex regular polyhedra known as Johnson solids, as well as more general polygonal mesh formations. Control laws and sufficient conditions for convergence are derived for these relatively complex polyhedron formations, paving the way for innovative mission concepts \cite{OW-KE-RF:07}. Third, we provide extensions to the algorithm to control the size of the formation, and prevent collisions with other robots. Fourth, we analyze the robustness properties of the symmetric cyclic controller under bounded additive disturbances and provide performance bounds with respect to the formation error. These results provide theoretical insights into experiments performed on board the International Space Station which highlighted the remarkable robustness properties of cyclic controllers \cite{JLR-MP-EF-DWM:10}. Finally, we present a control hierarchy for implementing the formation control laws on a complex dynamical system, in this case, a fleet of quadcopters.

The rest of the paper proceeds as follows: Section \ref{sec:math_prelim} introduces the mathematical background for contraction and partial contraction theory and circulant matrices which forms the basis for all subsequent analysis. Section \ref{sec:planar} formalizes the problem, introduces the formation subspace and symmetric cyclic control law, and provides conditions for convergence to a  regular polygon. In Section \ref{sec:complex}, we extend the results to polyhedral formations. Section \ref{sec:additional} provides extensions to the algorithm for controlling formation size and preventing collisions among the robots. We also assess the robustness properties of the symmetric cyclic controller under bounded additive disturbances. In Section \ref{quad_cyclic}, we demonstrate the application of our formation control laws on quadcopters, providing verification via simulations and hardware experiments. Finally, in Section \ref{sec:conclusions}, we draw our conclusions and provide avenues for future work.

A preliminary version of this paper appeared as  \cite{SS-ES-MP:15}. This extended and revised version contains as additional contributions: (1) extensions to the control algorithms for forming more complex polygonal mesh formations, controlling formation size, and avoiding collisions, (2) additional numerical results, and (3) hardware experimental results on a fleet of quadcopters.

\section{Mathematic Preliminaries} \label{sec:math_prelim}
\subsection{Notation}
Given a square matrix $A$, the symmetric part of $A$, i.e.,  $(1/2)(A + A^T)$, is denoted by $\Asym$. The smallest and largest eigenvalues of $\Asym$ are denoted, respectively, by $\lambda_{\mathrm{min}}(A)$ and $\lambda_{\mathrm{max}}(A)$. Accordingly, the matrix $A$ is positive definite (denoted $A\succ0$) if $\lambda_{\mathrm{min}}(A)>0$, and negative definite (denoted $A\prec0$) if $\lambda_{\mathrm{max}}(A)<0$. Let $I_k$ denote the $k$-by-$k$ identity matrix. Let $A \otimes B$ denote the Kronecker product between matrices $A$ and $B$. The null space of a matrix $A$ is denoted by $ \mathcal{N}(A)$. Finally, the set of eigenvalues of $A$ is denoted by $\lambda(A)$.

\subsection{Contraction and Partial Contraction Theory}\label{contraction_theory}

This section reviews basic results in nonlinear contraction theory \cite{WL-JJES:98, JJ-JJES:04}. The basic stability result in contraction theory reads as follows.

\begin{theorem}[Contraction \cite{WL-JJES:98}]\label{them:contr}
Consider a general system of the form 
\begin{equation}
\label{eq:System}
\dot{\bm{x}} = \bm{f}(\bm{x},t),
\end{equation}
where $\bm{x}$ is the $n \times 1$ state vector and $\bm{f}$ is a $n \times 1$ nonlinear, continuously differentiable vector function with Jacobian $\partial \bm{f}/\partial \bm{x}$. If there exists a square matrix $\Theta(\bm{x},t)$ such that $\Theta(\bm{x},t)^{T} \, \Theta(\bm{x},t)$ is uniformly positive definite and the matrix 
\[
F := \left(\dot{\Theta} + \Theta \frac{\partial \bm{f}}{\partial \bm{x}} \right) \Theta^{-1}
\]
is uniformly negative definite, then all system trajectories converge exponentially to a single trajectory.  In this case, the system is said to be contracting.
\end{theorem}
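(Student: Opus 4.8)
The plan is to carry out the classical virtual-displacement argument of Lohmiller and Slotine. First I would introduce an infinitesimal virtual displacement $\delta\bm{x}$ between two neighboring solutions of \eqref{eq:System}; differentiating the flow with respect to initial conditions, $\delta\bm{x}$ obeys the variational (first-variation) equation $\delta\dot{\bm{x}} = (\partial\bm{f}/\partial\bm{x})\,\delta\bm{x}$. Next I would pass to the transformed displacement $\delta\bm{z} := \Theta(\bm{x},t)\,\delta\bm{x}$. Note that $\Theta$ is invertible, since $\det(\Theta)^2 = \det(\Theta^T\Theta) > 0$ by uniform positive definiteness of $\Theta^T\Theta$. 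Applying the product rule, and using the chain rule $\dot\Theta = \partial_t\Theta + \sum_i (\partial_{x_i}\Theta)\,\dot{x}_i$ for the total time derivative of $\Theta$ along the trajectory, one obtains
\[
\delta\dot{\bm{z}} = \dot\Theta\,\delta\bm{x} + \Theta\,\delta\dot{\bm{x}} = \Big(\dot\Theta + \Theta\tfrac{\partial\bm{f}}{\partial\bm{x}}\Big)\delta\bm{x} = F\,\delta\bm{z}.
\]

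Second, I would estimate the squared length $\|\delta\bm{z}\|^2 = \delta\bm{x}^T(\Theta^T\Theta)\,\delta\bm{x}$. Differentiating and symmetrizing, $\frac{d}{dt}\|\delta\bm{z}\|^2 = \delta\bm{z}^T(F + F^T)\,\delta\bm{z} \le 2\lambda_{\mathrm{max}}(F)\,\|\delta\bm{z}\|^2 \le -2\beta\,\|\delta\bm{z}\|^2$, where $\beta > 0$ is a uniform bound furnished by the uniform negative definiteness of $F$ (i.e.\ $\lambda_{\mathrm{max}}(F)\le-\beta$ for all $\bm{x},t$). Grönwall's inequality then gives $\|\delta\bm{z}(t)\| \le e^{-\beta t}\|\delta\bm{z}(0)\|$, and the uniform lower bound $\lambda_{\mathrm{min}}(\Theta^T\Theta)\ge\underline{m}>0$ converts this back to $\|\delta\bm{x}(t)\| \le \underline{m}^{-1/2}\,e^{-\beta t}\,\|\delta\bm{z}(0)\| \to 0$ exponentially.

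Third, to lift this infinitesimal contraction to the finite distance between two arbitrary solutions $\bm{x}_1(t),\bm{x}_2(t)$, I would connect them at each time by a smooth path $\bm{\gamma}(\cdot,t):[0,1]\to\real^n$ with $\bm{\gamma}(0,t)=\bm{x}_1(t)$ and $\bm{\gamma}(1,t)=\bm{x}_2(t)$, and measure its length in the Riemannian metric $\Theta^T\Theta$, namely $L(t) = \int_0^1 \left\| \Theta\big(\bm{\gamma}(s,t),t\big)\,\partial_s\bm{\gamma}(s,t) \right\|\,ds$. Each tangent vector $\partial_s\bm{\gamma}$ is transported (in $t$) by the variational equation, so by the estimate above the integrand contracts pointwise at rate $\beta$, giving $L(t)\le e^{-\beta t}L(0)$; taking the infimum over connecting paths shows the geodesic distance between the two trajectories decays exponentially, and equivalence of that distance with the Euclidean norm (again via the uniform two-sided bounds on $\Theta^T\Theta$) yields $\|\bm{x}_1(t)-\bm{x}_2(t)\| \to 0$ exponentially. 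A completeness/Cauchy argument over the family of all solutions then identifies a common limiting trajectory to which every solution converges.

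I expect the main obstacle to be precisely this passage from the infinitesimal variational estimate to a bound on the finite distance between trajectories: one must select the connecting path $\bm{\gamma}$, justify differentiating $L(t)$ under the integral sign, interchange the $s$- and $t$-derivatives of $\bm{\gamma}$, and exploit the \emph{uniform} positive definiteness (and boundedness) of $\Theta^T\Theta$ to relate the state-dependent metric to the Euclidean one; the uniformity hypotheses are exactly what make all the resulting bounds independent of $\bm{x}$ and $t$. A secondary technical point is the chain-rule bookkeeping for $\dot\Theta$ along a solution, which is where the $\dot\Theta$ term in $F$ originates.
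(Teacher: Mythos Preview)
The paper does not supply its own proof of this theorem: it is quoted as a background result from \cite{WL-JJES:98} in the preliminaries and used as a black box thereafter. Your proposal reproduces the classical Lohmiller--Slotine virtual-displacement argument from that reference---variational equation for $\delta\bm{x}$, change of coordinates $\delta\bm{z}=\Theta\,\delta\bm{x}$, exponential decay of $\|\delta\bm{z}\|^2$ via uniform negative definiteness of $F$, and the path-integral lift to finite separations---and is correct as stated, so there is nothing to compare against in the present paper.
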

A few remarks are in order. First, a matrix $\Theta(\bm{x},t)$ is uniformly positive definite if there exists $\beta>0$ such that $\forall (\bm{x},t),\  \lambda_{\mathrm{min}}(\Theta(\bm{x},t)) \geq \beta$, and is  uniformly negative definite if there exists $\beta>0$ such that $\forall (\bm{x}, t),\  \lambda_{\mathrm{max}}(\Theta(\bm{x},t)) \leq -\beta$. Second, the matrix $F$ is referred to as the \emph{generalized Jacobian} for system \eqref{eq:System}. Third, the matrix $M(\bm{x},t):= \Theta(\bm{x},t)^{T} \, \Theta(\bm{x},t)$ is referred to as the \emph{contraction metric}. 

We next discuss partial contraction theory, which allows one to address questions more general than trajectory
convergence. Consider system \eqref{eq:System}, and assume there exists a flow-invariant linear subspace $\mathcal{M} \subset \reals^n$, i.e., a linear subspace with the property that, for all $t$,
\[
\bm{x} \in \mathcal{M} \Rightarrow  \bm{f}(\bm{x},t) \in \mathcal M.
\]
Assume that the dimension of $\mathcal M$ is $p$ and let $(\bm{e}_1,\ldots, \bm{e}_n)$ be an orthonormal basis of $\reals^n$ where the first $p$ vectors form a basis for $\mathcal M$. Let $\bar{U}$ be a $p\times n$ matrix whose rows are   $\bm{e}_{1}^T, \ldots, \bm{e}_{p}^T$. Let $\bar{V}$ be an $(n-p)\times n$ matrix whose rows are   $\bm{e}_{p+1}^T, \ldots, \bm{e}_{n}^T$. One can easily verify that matrix $\bar{V}$ is sub-unitary and satisfies the properties $\bar{V}^T \bar{V} + \bar{U}^T  \bar{U} = I_n$, $\bar{V}\bar{V}^T = I_{n-p}$, and $\bm{x}\in \mathcal M$ if and only if $\bar{V} \bm{x} = \bm{0}$. We will refer to $\bar{V}$ as the projection matrix of $\mathcal M$. The main theorem in partial contraction theory can be stated as:

\begin{theorem}[Partial Contraction \cite{QCP-JJES:07}]
\label{manifolds_and_partial}
Consider a flow-invariant linear subspace $\mathcal M$ and its associated projection matrix $\bar{V}$. A particular solution $\bm{x}_p(t)$ of system \eqref{eq:System} converges exponentially to $\mathcal M$ if the auxiliary system
\[
\dot{\bm{y}} = \bar{V}\bm{f} \left (\bar{V}^T\bm{y}+\bar{U}^T \bar{U} \, \bm{x}_p(t),t \right) 
\]
is contracting with respect to $\bm{y}$.  If this is true for all particular solutions $\bm{x}_p$, all trajectories of system (\ref{eq:System})  will exponentially converge to $\mathcal M$ from all initial conditions. 
\end{theorem}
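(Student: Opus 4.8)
The plan is to reduce the convergence of $\bm{x}_p(t)$ toward the subspace $\mathcal{M}$ to an ordinary trajectory-convergence question for the auxiliary $\bm{y}$-system, so that Theorem~\ref{them:contr} applies directly. The algebraic facts I would rely on are the ones recorded just before the statement: the orthogonal splitting $I_n = \bar{U}^T\bar{U} + \bar{V}^T\bar{V}$ (so every state decomposes into a component in $\mathcal{M}$ and one in $\mathcal{M}^\perp$), the equivalence $\bar{V}\bm{w}=\bm{0} \iff \bm{w}\in\mathcal{M}$, and the identity $\bar{V}\bar{V}^T = I_{n-p}$, which makes $\bar{V}^T\bar{V}$ the orthogonal projector onto $\mathcal{M}^\perp$; in particular $\|\bar{V}\bm{w}\|$ equals the Euclidean distance from $\bm{w}$ to $\mathcal{M}$.

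First I would fix a particular solution $\bm{x}_p(t)$ of \eqref{eq:System} and set $\bm{y}_p(t) := \bar{V}\bm{x}_p(t)$. Differentiating and inserting the decomposition $\bm{x}_p = (\bar{U}^T\bar{U} + \bar{V}^T\bar{V})\bm{x}_p = \bar{V}^T\bm{y}_p + \bar{U}^T\bar{U}\,\bm{x}_p$ into the argument of $\bm{f}$ gives
\[
\dot{\bm{y}}_p = \bar{V}\bm{f}(\bm{x}_p,t) = \bar{V}\bm{f}\big(\bar{V}^T\bm{y}_p + \bar{U}^T\bar{U}\,\bm{x}_p(t),\, t\big),
\]
so $\bm{y}_p$ is a solution of the auxiliary system. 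Next I would verify that $\bm{y}\equiv\bm{0}$ is also a solution of that same auxiliary system: when $\bm{y}=\bm{0}$ the argument of $\bm{f}$ is $\bar{U}^T\bar{U}\,\bm{x}_p(t)$, which lies in $\mathcal{M}$ because $\bar{U}^T\bar{U}$ is the orthogonal projector onto $\mathcal{M}$; by flow-invariance $\bm{f}(\bar{U}^T\bar{U}\,\bm{x}_p(t),t)\in\mathcal{M}$, hence $\bar{V}\bm{f}(\bar{U}^T\bar{U}\,\bm{x}_p(t),t)=\bm{0}$, and the zero curve is indeed invariant for the auxiliary dynamics.

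Then I would invoke the hypothesis that the auxiliary system is contracting in $\bm{y}$ (with the $\bm{x}_p(t)$-dependence absorbed into the time argument): by Theorem~\ref{them:contr} all of its trajectories converge exponentially to a single one, so the trajectory $\bm{y}_p(t)$ converges exponentially to the zero trajectory. Translating back, $\bar{V}\bm{x}_p(t)\to\bm{0}$ exponentially, and since $\|\bar{V}\bm{x}_p(t)\|$ is precisely the distance from $\bm{x}_p(t)$ to $\mathcal{M}$, the particular solution converges exponentially to $\mathcal{M}$. For the last assertion, if the auxiliary system is contracting for \emph{every} particular solution, one may take an arbitrary trajectory of \eqref{eq:System} as $\bm{x}_p$ itself, yielding exponential convergence to $\mathcal{M}$ from every initial condition.

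I expect the only genuine subtlety — the main obstacle — to be the step establishing $\bm{y}\equiv\bm{0}$ as an auxiliary trajectory, since that is where flow-invariance of $\mathcal{M}$ is actually used and where one must be careful that $\bar{U}^T\bar{U}$ (not $\bar{U}$, which is not square) is the relevant projector; everything else is bookkeeping with the projection identities and an appeal to Theorem~\ref{them:contr}. A secondary point worth stating explicitly is that Theorem~\ref{them:contr} is being applied to a \emph{time-varying} system (through $\bm{x}_p(t)$), which is permitted by its statement.
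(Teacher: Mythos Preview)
The paper does not supply its own proof of this theorem; it is quoted as a background result from \cite{QCP-JJES:07}. Your argument is correct and is precisely the standard one: exhibit $\bm{y}_p=\bar V\bm{x}_p$ and $\bm{y}\equiv\bm{0}$ as two trajectories of the auxiliary system (the latter via flow-invariance of $\mathcal M$), then let contraction collapse them.
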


Combining Theorems and \ref{them:contr} and \ref{manifolds_and_partial}, one obtains a powerful tool to prove convergence to a desired flow-invariant subspace.

\begin{corollary}[Convergence to Flow-invariant Subspace \cite{QCP-JJES:07}]\label{co:applied_contraction}
A sufficient condition for global exponential convergence to $\mathcal{M}$ is
	\begin{equation*}
	\bar{V} \, \frac{\partial \bm{f}}{\partial \bm{x}}\, \bar{V}^T \prec 0, \qquad \text{uniformly}.
	\end{equation*}
\end{corollary}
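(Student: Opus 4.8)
The plan is to obtain the corollary as an immediate consequence of Theorem~\ref{manifolds_and_partial} (partial contraction) combined with Theorem~\ref{them:contr} (contraction) applied with the trivial metric $\Theta \equiv I_{n-p}$. Fix an arbitrary particular solution $\bm{x}_p(t)$ of \eqref{eq:System} and consider the associated auxiliary system
\[
\dot{\bm{y}} = \bm{g}(\bm{y},t) := \bar{V}\,\bm{f}\!\left(\bar{V}^T\bm{y} + \bar{U}^T\bar{U}\,\bm{x}_p(t),\, t\right).
\]
Since $\bm{f}$ is continuously differentiable and the inner map $\bm{y}\mapsto \bar{V}^T\bm{y} + \bar{U}^T\bar{U}\,\bm{x}_p(t)$ is affine with \emph{constant} linear part $\bar{V}^T$ (here the fact that $\mathcal{M}$ is a linear subspace is what keeps $\bar{V}^T$ independent of $\bm{y}$), the chain rule yields the Jacobian of the auxiliary system as
\[
\frac{\partial \bm{g}}{\partial \bm{y}}(\bm{y},t) = \bar{V}\,\frac{\partial \bm{f}}{\partial \bm{x}}\!\left(\bar{V}^T\bm{y} + \bar{U}^T\bar{U}\,\bm{x}_p(t),\,t\right)\bar{V}^T,
\]
with no additional terms.

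Next I would invoke the hypothesis. The assumed condition, read carefully, states that $\bar{V}\,(\partial \bm{f}/\partial \bm{x})(\bm{z},t)\,\bar{V}^T \prec 0$ holds \emph{uniformly} over all $(\bm{z},t)$, i.e., there is $\beta>0$ with $\lambda_{\mathrm{max}}\!\big(\bar{V}\,(\partial \bm{f}/\partial \bm{x})(\bm{z},t)\,\bar{V}^T\big)\le -\beta$ for every $\bm{z}$ and every $t$. Specializing $\bm{z} = \bar{V}^T\bm{y} + \bar{U}^T\bar{U}\,\bm{x}_p(t)$ shows $\partial \bm{g}/\partial \bm{y} \prec 0$ uniformly in $(\bm{y},t)$, with the same bound $-\beta$. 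Applying Theorem~\ref{them:contr} with $\Theta(\bm{y},t)\equiv I_{n-p}$ — so that $\Theta^T\Theta = I_{n-p}$ is uniformly positive definite, $\dot{\Theta}=0$, and the generalized Jacobian $F = (\dot{\Theta}+\Theta\,\partial\bm{g}/\partial\bm{y})\Theta^{-1}$ collapses to $\partial\bm{g}/\partial\bm{y}$ — we conclude that the auxiliary system is contracting with respect to $\bm{y}$.

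Finally, since $\bm{x}_p$ was arbitrary and the uniform bound $-\beta$ did not depend on it, the auxiliary system is contracting for \emph{every} particular solution; Theorem~\ref{manifolds_and_partial} then delivers global exponential convergence of all trajectories of \eqref{eq:System} to $\mathcal{M}$, which is the claim. There is no genuinely difficult step; the only point requiring care is the bookkeeping around the word ``uniformly'' — it is precisely this qualifier that makes the auxiliary Jacobian uniformly negative definite irrespective of which base point $\bar{U}^T\bar{U}\,\bm{x}_p(t)$ enters the auxiliary dynamics, which is what lets the single hypothesis cover all particular solutions at once.
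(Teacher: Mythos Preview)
Your proposal is correct and follows precisely the route the paper indicates: the paper does not spell out a proof but simply remarks that the corollary is obtained by ``combining Theorems~\ref{them:contr} and~\ref{manifolds_and_partial},'' which is exactly what you do---compute the auxiliary Jacobian via the chain rule, invoke the uniform hypothesis, and apply contraction with the identity metric. Your careful handling of ``uniformly'' (ensuring the bound is independent of $\bm{x}_p$) is the right point of emphasis.
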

In this paper, the subspace $\mathcal M$ will represent a desired robot formation. 

\begin{remark}[Non-orthonormal $V$]\label{V_Vbar_T}
Note that the application of partial contraction theory requires the rows of the projection matrix $\bar{V}$ to be orthonormal. However, the  matrix $V$ characterizing a subspace $\mathcal{M}$ may not be row-wise orthonormal, e.g., when it is obtained by combining a set of linearly independent equations. Nonetheless, as long as the equations are independent, the matrix $V$ will be full row rank and can be transformed via an invertible matrix $T$ into an orthonormal counterpart $\bar{V}$ which satisfies (1) $V = T\,\bar{V}$, (2) $\bar{V} \bm{x} = \bm{0} \Leftrightarrow\bm{x}\in \mathcal{M}$, and (3) $\bar{V} \, \frac{\partial \bm{f}}{\partial \bm{x}}\, \bar{V}^T \prec 0 \Leftrightarrow V \, \frac{\partial \bm{f}}{\partial \bm{x}}\, V^T \prec 0$ \cite{QCP-JJES:07}.
\end{remark}


\subsection{Circulant and Block-Circulant Rotational Matrices}\label{circulant_section}
A circulant matrix of order $n$ is a square matrix with the following structure:
\begin{equation}
C = \begin{bmatrix}c_1 & c_2 & \cdots & c_{n} \\c_{n} & c_1 & \cdots & c_{n-1}\\\vdots & \vdots &  & \vdots \\c_2 & c_3 & \cdots & c_1\end{bmatrix}.
\end{equation}
The elements of each row are identical to the row above, but shifted one position to the right and wrapped around. Thus, a circulant matrix can be compactly denoted as:
\[ C = \mathrm{circ}[c_1 \  c_2 \  \ldots \  c_{n}]. \]
A useful circulant matrix with dimensions $n\times n$ used in this paper is defined below, parametrized by the integer $m \in\{0,\ldots, n-1\}$:
\[
L_m := \mathrm{circ}[1,0,\ldots,0,\underbrace{-1}_{(m+1)\text{st element}},0,\ldots,0].
\]
\subsection{Kronecker Product}
Let $A$ and $B$ be matrices of size $m\times n$ and $p\times q$ respectively. The Kronecker product of $A$ and $B$ is defined as:
\[
 A\otimes B = \begin{bmatrix}a_{11}B & \cdots & a_{1n}B \\\vdots & & \vdots \\a_{m1}B & \cdots & a_{mn}B\end{bmatrix},
 \]
where $A\otimes B$ has dimensions $mp\times nq$. If $\lambda_{A}$ is an eigenvalue of $A$ with eigenvector ${\bf v_{A}}$, and similarly $\lambda_{B}$ and ${\bf v_{B}}$ are an eigenvalue and eigenvector pair for $B$ then the corresponding eigenvalue and eigenvector pair for $A\otimes B$ is $\lambda_{A}\lambda_{B}$ and $\bf v_{A}\otimes \bf v_{B}$. We also note the following identity: $(A\otimes B)(C\otimes D) = AC\otimes BD $, where $A,B,C$ and $D$ have compatible dimensions in order for the products $AC$ and $BD$ to be well defined.

In the following, we will denote by $\CR$ the set of block-circulant matrices that can be written as $L\otimes R_{\beta}$, where $L$ is a circulant matrix and $R_{\beta}$ is a rotation matrix about the axis $\bm{e}_z:=(0,0,1)^T$ with rotation angle $\beta$. From the properties of the Kronecker product, the eigenvalues and eigenvectors of a matrix in $\CR$ are given, respectively, by the product of eigenvalues and Kronecker product of eigenvectors of the circulant matrix $L$ and the rotation matrix $R_{\beta}$. 

\section{Symmetric Planar Formations}\label{sec:planar}

\subsection{Problem Setup}
Consider $n\geq 3$ mobile robots (uniquely labelled by an integer $i \in  \{1, \dots, n\})$. Denote the position of robot $i$ at time $t$ as $\bm{x}_i(t)$, where $\bm{x}_i(t) \in \reals^3$. The overall state vector is denoted by $\bm{x} = (\bm{x}_1^T, \bm{x}_2^T, \ldots, \bm{x}_n^T)^T$.  The dynamics of each robot are given by 
\begin{equation}\label{eq:sys_dynamics}
\bm {\dot{x}}_i = \bm{g}_i(\bm{x}) + \bm{u}_i(\bm{x},t),
\end{equation}
where $\bm{u}_i$ is the control action. It is desired to design the control actions so that (a) they drive the global state vector $\bm{x}$ to a desired symmetric formation, and (b) they are amenable to a decentralized implementation. As in~\cite{JLR-JJES:12}, our strategy is to ``encode" a symmetric formation as a formation subspace as discussed next. The proofs for all theorems and lemmas introduced in this section are provided in \iftoggle{arxiv}{the appendix}{\cite{SS-ES-MP:15EVb}}.

\subsection{Formation Subspace}\label{subsec:subspace}
In this section we consider \emph{regular polygons} in 3D as desired symmetric formations -- the extension to non-planar formations (which represents the main contribution of this paper) is discussed in Section \ref{sec:complex}. Consider the case where the direction normal to the desired formation polygon is aligned with the vector $\bm{e}_z$  (the general case can be reduced to this case via a coordinate transformation). We encode such a formation via the subspace: 
\begin{subequations}
\begin{align}
&\mathcal{M}_n = \{\bm{x} \in \reals^{3n}: \nonumber\\
&\, \, \, (\bm{x}_{i+1}\! -\! \bm{x}_i) = R_{2\pi/n}( {\bm{x}_{i+2}\! - \!\bm{x}_{i+1}} ), i=1, \ldots, n\!-\!2, \label{manifold_def1}\\ 
&\,\, \,  \bm{e}_z^T (\bm{x}_{n}\! -\! \bm{x}_{n-1})  = \bm{e}_z^T ( {\bm{x}_{1}\! - \!\bm{x}_{n}} )\}, \,  \label{manifold_def2}
\end{align}
\end{subequations}
where the indices are considered modulo $n$, and $R_{2\pi/n}$ denotes a counterclockwise rotation around $\bm{e}_z$. The $n-2$ constraints in \eqref{manifold_def1} will be referred to as \emph{rotational constraints} while the single constraint in \eqref{manifold_def2} will be referred to as the \emph{in-plane} constraint. The in-plane constraint, not considered in \cite{JLR-JJES:12}, is needed in order to ensure that all robots lie in the same plane (which is not ensured by the rotational constraints alone as they permit spiral formations). The following lemma shows that the constraints  \eqref{manifold_def1} and \eqref{manifold_def2} are indeed necessary and sufficient for the definition of a regular polygon. 

\begin{lemma}[Polygon Constraints]\label{Lemma:manifold_defined}
The set of constraints  \eqref{manifold_def1} and \eqref{manifold_def2} are necessary and sufficient for the definition of a regular polygon with normal direction $\bm{e}_{z}$. \end{lemma}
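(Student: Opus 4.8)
The plan is to argue both implications through the \emph{edge vectors} $\bm{d}_i := \bm{x}_{i+1} - \bm{x}_i$, $i = 1,\dots,n$ (indices mod $n$), which by telescoping automatically satisfy the closure identity $\sum_{i=1}^{n}\bm{d}_i = \bm{0}$, and through the fact that $R_{2\pi/n}$ fixes $\bm{e}_z$ and restricts on the $xy$-plane to a planar rotation $Q$ with no eigenvalue equal to $1$ (since $n\geq 3$), so that $Q-I$ is invertible and the geometric-series identity $(I-Q)\sum_{k=0}^{n-1}Q^{k} = I - Q^{n} = 0$ forces $\sum_{k=0}^{n-1}Q^{k} = 0$.

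\emph{Sufficiency.} Assume $\bm{x}\in\mathcal{M}_n$. The rotational constraints \eqref{manifold_def1} read $\bm{d}_i = R_{2\pi/n}\bm{d}_{i+1}$, hence $\bm{d}_i = R_{2\pi/n}^{\,1-i}\bm{d}_1$ for $i=1,\dots,n-1$; in particular $\bm{d}_1,\dots,\bm{d}_{n-1}$ all have the same norm and the same $z$-component $c$ (rotations about $\bm{e}_z$ preserve the $z$-coordinate). The in-plane constraint \eqref{manifold_def2} forces $\bm{e}_z^T\bm{d}_n = \bm{e}_z^T\bm{d}_{n-1} = c$ as well, so the $z$-component of the closure identity reads $nc = 0$, i.e.\ $c = 0$; thus all edges, and hence all vertices, lie in a common affine plane orthogonal to $\bm{e}_z$, and we may replace $R_{2\pi/n}$ by $Q$. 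Then $\bm{d}_n = -\sum_{i=1}^{n-1}\bm{d}_i = -\big(\sum_{k=0}^{n-2}Q^{-k}\big)\bm{d}_1 = Q^{-(n-1)}\bm{d}_1 = Q\,\bm{d}_1$, the third equality using $\sum_{k=0}^{n-1}Q^{-k}=0$, so the relation $\bm{d}_i = Q^{\,1-i}\bm{d}_1$ in fact holds for \emph{all} $i$ mod $n$. Consequently every edge has the common length $\|\bm{d}_1\|$ and is obtained from its predecessor by the fixed rotation $Q^{-1}$ (turning angle $2\pi/n$); equivalently, setting $\bm{w} := (Q^{-1}-I)^{-1}Q\,\bm{d}_1$ and $\bm{c} := \bm{x}_i - Q^{-i}\bm{w}$ (independent of $i$ precisely because of the edge relation), one verifies $\bm{x}_i = \bm{c} + Q^{-i}\bm{w}$, so the $\bm{x}_i$ are equally spaced on the circle of radius $\|\bm{w}\|$ about $\bm{c}$ in the plane normal to $\bm{e}_z$ — a regular $n$-gon (degenerating to a single coincident point when $\bm{d}_1=\bm{0}$).

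\emph{Necessity.} Conversely, if the $\bm{x}_i$ are the consecutively labelled vertices of a regular $n$-gon whose supporting plane has normal $\bm{e}_z$, then — with the traversal orientation matched to the sign convention of $R_{2\pi/n}$ — there exist $\bm{c}\in\reals^3$, $\rho\geq 0$, and a unit vector $\bm{v}$ in the $xy$-plane with $\bm{x}_i = \bm{c} + \rho\,R_{2\pi/n}^{-i}\bm{v}$. Substituting gives $\bm{x}_{i+1}-\bm{x}_i = \rho\,R_{2\pi/n}^{-i}(R_{2\pi/n}^{-1}-I)\bm{v}$ and, since powers of $R_{2\pi/n}$ commute, $R_{2\pi/n}(\bm{x}_{i+2}-\bm{x}_{i+1}) = \rho\,R_{2\pi/n}^{-i}(R_{2\pi/n}^{-1}-I)\bm{v}$, which is \eqref{manifold_def1}; and since all vertices lie in a plane orthogonal to $\bm{e}_z$, both $\bm{x}_n-\bm{x}_{n-1}$ and $\bm{x}_1-\bm{x}_n$ have zero $z$-component, which is \eqref{manifold_def2}.

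The crux — and the reason \eqref{manifold_def2} is indispensable — is the coplanarity step: the rotational constraints alone leave the common $z$-drift $c$ of the edges unconstrained, which is exactly the family of helical (``spiral'') configurations, and only when \eqref{manifold_def2} is combined with the automatic closure identity does $c=0$ follow. The only other non-bookkeeping ingredient is the roots-of-unity cancellation $\sum_{k=0}^{n-1}Q^k=0$, needed to establish that the last edge $\bm{d}_n$ is itself a rotated copy of $\bm{d}_1$ and thus has the right length.
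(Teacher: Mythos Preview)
Your proof is correct and follows essentially the same route as the paper: first establish coplanarity by combining the $\bm{e}_z$-invariance of $R_{2\pi/n}$ with the in-plane constraint and the cyclic closure (you phrase this via the telescoping identity $\sum_i \bm{d}_i=\bm{0}$, the paper via direct index wrap-around, but the content is identical), then deduce the regular polygon from the rotational constraints restricted to the plane --- you make this last step explicit through the roots-of-unity cancellation $\sum_{k=0}^{n-1}Q^k=0$, whereas the paper simply asserts it. One minor note: the paper's proof additionally argues linear independence of the $3n-5$ constraints, which is needed later for the full-row-rank property of $V$ (Remark~\ref{V_Vbar_T} and the discussion preceding Theorem~\ref{contraction_th_orig}); this is not part of the lemma statement itself, so your argument is complete as written, but be aware the independence is invoked downstream.
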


\begin{remark}[Polygon Degrees of Freedom]\label{rot_freedom}
The constraints \eqref{manifold_def1} and \eqref{manifold_def2} together form a set of $3n -5$ linearly independent equations in $3n$ variables. The five missing equations correspond to five distinct degrees of freedom: three in translation, one in scaling, and one in in-plane rotation. That is, the polygon may be translated anywhere in space, scaled in size, or rotated within the desired plane about the plane normal.
\end{remark}

Both \eqref{manifold_def1} and \eqref{manifold_def2} can be compactly represented as the null space of a certain matrix, as shown in the following lemma. 

\begin{lemma}[Compact Constraints]\label{Lemma:rot}
Let $W_{r_{n}} := [I_{n-2}, 0_{(n-2)\times 2}]$ and  $W_{p_{n}} := [0_{1\times (n-2)}, 1, 0]$. Define the $3 (n-2)+1 \times 3n$ matrix
\begin{equation}\label{eq:Vr_formula}
V := \underbrace{\begin{bmatrix}W_{r_{n}}\otimes I_{3} \\ W_{p_{n}} \otimes \bm{e}_{z}^{T} \end{bmatrix}}_{:= \mathcal{W}_n} \underbrace{(L_1 \otimes I_3 + (L_1 - L_2)\otimes R_{2\pi/n})}_{:=\mathcal{P}_{n}\in\CR}. 
\end{equation}
Then the rotational and in-plane constraints in equation \eqref{manifold_def1} and \eqref{manifold_def2} are equivalent to the equation $V \bm{x} = 0$.
\end{lemma}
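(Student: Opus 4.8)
The plan is to show that the linear map $\bm{x}\mapsto V\bm{x}$ defined in \eqref{eq:Vr_formula} produces exactly the left-hand sides of the rotational constraints \eqref{manifold_def1} and the in-plane constraint \eqref{manifold_def2}, so that $V\bm{x}=0$ is literally the system of equations defining $\mathcal{M}_n$. First I would interpret the inner matrix $\mathcal{P}_n = L_1\otimes I_3 + (L_1-L_2)\otimes R_{2\pi/n}$ acting on $\bm{x}=(\bm{x}_1^T,\dots,\bm{x}_n^T)^T$ blockwise: recalling that $L_m=\mathrm{circ}[1,0,\dots,0,-1,0,\dots,0]$ with the $-1$ in the $(m+1)$st position, the $i$-th block-row of $L_1\otimes I_3$ maps $\bm{x}$ to $\bm{x}_i-\bm{x}_{i+1}$ (indices mod $n$), and the $i$-th block-row of $(L_1-L_2)\otimes R_{2\pi/n}$ maps $\bm{x}$ to $R_{2\pi/n}(\bm{x}_i-\bm{x}_{i+1}) - R_{2\pi/n}(\bm{x}_{i+1}-\bm{x}_{i+2}) = R_{2\pi/n}(\bm{x}_i - 2\bm{x}_{i+1} + \bm{x}_{i+2})$. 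Hence the $i$-th $3\times 1$ block of $\mathcal{P}_n\bm{x}$ equals $(\bm{x}_i-\bm{x}_{i+1}) + R_{2\pi/n}(\bm{x}_i-2\bm{x}_{i+1}+\bm{x}_{i+2})$. A short rearrangement — adding and subtracting $R_{2\pi/n}(\bm{x}_{i+1}-\bm{x}_i)$ is unnecessary; instead just group — gives $(\bm{x}_i-\bm{x}_{i+1}) - R_{2\pi/n}(\bm{x}_{i+1}-\bm{x}_{i+2}) + R_{2\pi/n}(\bm{x}_i - \bm{x}_{i+1})$, which I would simplify to $-\big[(\bm{x}_{i+1}-\bm{x}_i) - R_{2\pi/n}(\bm{x}_{i+2}-\bm{x}_{i+1})\big] + R_{2\pi/n}(\bm{x}_i-\bm{x}_{i+1})$; the cleanest route is to verify directly that the $i$-th block of $\mathcal{P}_n\bm{x}$ is a fixed invertible linear image (on the relevant index range) of the quantity $(\bm{x}_{i+1}-\bm{x}_i) - R_{2\pi/n}(\bm{x}_{i+2}-\bm{x}_{i+1})$ appearing in \eqref{manifold_def1}.

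Next I would apply the outer selector matrix $\mathcal{W}_n = \begin{bmatrix} W_{r_n}\otimes I_3 \\ W_{p_n}\otimes \bm{e}_z^T\end{bmatrix}$. Since $W_{r_n}=[I_{n-2},\,0_{(n-2)\times 2}]$, the block $W_{r_n}\otimes I_3$ simply picks out block-rows $1$ through $n-2$ of $\mathcal{P}_n\bm{x}$; by the computation above these are precisely (an invertible image of) the $n-2$ rotational constraint residuals $(\bm{x}_{i+1}-\bm{x}_i) - R_{2\pi/n}(\bm{x}_{i+2}-\bm{x}_{i+1})$ for $i=1,\dots,n-2$, so this block vanishes iff \eqref{manifold_def1} holds. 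Since $W_{p_n}=[0_{1\times(n-2)},1,0]$, the row $W_{p_n}\otimes\bm{e}_z^T$ extracts $\bm{e}_z^T$ applied to the $(n-1)$st block of $\mathcal{P}_n\bm{x}$, i.e. $\bm{e}_z^T\big[(\bm{x}_{n-1}-\bm{x}_n) + R_{2\pi/n}(\bm{x}_{n-1}-2\bm{x}_n+\bm{x}_1)\big]$. Here I would use the key fact that $R_{2\pi/n}$ is a rotation about $\bm{e}_z$, so $\bm{e}_z^T R_{2\pi/n} = \bm{e}_z^T$; this collapses the expression to $\bm{e}_z^T\big[(\bm{x}_{n-1}-\bm{x}_n) + (\bm{x}_{n-1}-2\bm{x}_n+\bm{x}_1)\big] = \bm{e}_z^T(2\bm{x}_{n-1} - 3\bm{x}_n + \bm{x}_1)$ — but the target \eqref{manifold_def2} is $\bm{e}_z^T(\bm{x}_n-\bm{x}_{n-1}) = \bm{e}_z^T(\bm{x}_1-\bm{x}_n)$, i.e. $\bm{e}_z^T(2\bm{x}_n - \bm{x}_{n-1} - \bm{x}_1)=0$; so I must double-check the indexing of $L_1, L_2$ and the exact block picked (it may be that the correct block is $n-1$ read as $-(\bm{x}_{n-1}-\bm{x}_n)$ plus the rotated second difference, and the $\bm{e}_z$-projection of the rotated part kills the in-plane components, leaving exactly a scalar multiple of the residual in \eqref{manifold_def2}). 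Getting this bookkeeping exactly right is the one place demanding care.

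The main obstacle, then, is purely the index/sign accounting: tracking how $L_1$, $L_2$, the wraparound in the circulant structure, and the truncation by $W_{r_n}, W_{p_n}$ interact, and confirming that the $\bm{e}_z^T R_{2\pi/n}=\bm{e}_z^T$ identity reduces the $(n-1)$st projected block to precisely the in-plane constraint and not some spurious combination. Once the blockwise identity $\mathcal{W}_n\mathcal{P}_n\bm{x}=0 \iff \{\eqref{manifold_def1}\text{ and }\eqref{manifold_def2}\}$ is established, the lemma follows immediately, since $V=\mathcal{W}_n\mathcal{P}_n$ by definition. I would present the argument by (i) stating the blockwise formula for $\mathcal{P}_n\bm{x}$, (ii) identifying rows $1,\dots,n-2$ after applying $W_{r_n}\otimes I_3$ with the rotational constraints, and (iii) identifying the final row after applying $W_{p_n}\otimes\bm{e}_z^T$ with the in-plane constraint via $\bm{e}_z^T R_{2\pi/n}=\bm{e}_z^T$, possibly noting in a remark that the linear independence of these $3(n-2)+1 = 3n-5$ equations is consistent with Remark \ref{rot_freedom}.
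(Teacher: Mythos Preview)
Your approach is essentially the same as the paper's, just run in the opposite direction: the paper writes down the explicit $(3(n-2)+1)\times 3n$ matrix encoding \eqref{manifold_def1}--\eqref{manifold_def2} and then factors it as $\mathcal{W}_n\mathcal{P}_n$ using the Kronecker identity $(A\otimes B)(C\otimes D)=AC\otimes BD$, whereas you start from $\mathcal{W}_n\mathcal{P}_n$ and compute the product blockwise. Either direction is fine.

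Your only issue is the bookkeeping slip you already suspected. Since $L_1=\mathrm{circ}[1,-1,0,\dots]$ and $L_2=\mathrm{circ}[1,0,-1,0,\dots]$, one has $L_1-L_2=\mathrm{circ}[0,-1,1,0,\dots]$, so the $i$-th row of $L_1-L_2$ acting on $\bm{x}$ yields $-\bm{x}_{i+1}+\bm{x}_{i+2}$, \emph{not} the second difference $\bm{x}_i-2\bm{x}_{i+1}+\bm{x}_{i+2}$. Hence the $i$-th block of $\mathcal{P}_n\bm{x}$ is
\[
(\bm{x}_i-\bm{x}_{i+1}) + R_{2\pi/n}(\bm{x}_{i+2}-\bm{x}_{i+1}) \;=\; -\big[(\bm{x}_{i+1}-\bm{x}_i) - R_{2\pi/n}(\bm{x}_{i+2}-\bm{x}_{i+1})\big],
\]
which is exactly $(-1)$ times the residual of \eqref{manifold_def1}. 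With this correction the $(n-1)$st block reads $(\bm{x}_{n-1}-\bm{x}_n)+R_{2\pi/n}(\bm{x}_1-\bm{x}_n)$; applying $\bm{e}_z^T$ and using $\bm{e}_z^T R_{2\pi/n}=\bm{e}_z^T$ gives $\bm{e}_z^T(\bm{x}_{n-1}+\bm{x}_1-2\bm{x}_n)$, which vanishes iff \eqref{manifold_def2} holds. So once you fix that one line, the argument is complete and no ``invertible linear image'' hedging is needed.
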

Note that $W_{r_{n}}$ captures the rotational constraints, while 	$W_{p_{n}}$ captures the in-plane constraint. The subspace $\mathcal {M}_n$ can then be characterized as the null space of matrix $V$, i.e., $V \, \bm{x} = \bm{0} \Leftrightarrow\bm{x}\in \mathcal{M}_n$.

Our standing assumption throughout this paper is that the robots' internal dynamics, i.e., $\bm{g}(\bm{x})$, are flow-invariant with respect to $ \mathcal{M}_n$:
\begin{assump}[Flow-invariance]\label{ass:inv}
The internal dynamics are flow-invariant with respect to the desired formation, that is for all $\bm{x} \in \mathcal{M}_n$ one has $V\bm{g}(\bm{x}) = \bm{0}$.
\end{assump}

\subsection{Symmetric Cyclic Controller} \label{general_cyclic}
We consider the class of symmetric cyclic controllers proposed in~\cite{JLR-JJES:12} (in turn generalizing the pursuit controllers introduced in \cite{MP-EF:07}):
\begin{equation}\label{eq:symmetric_control}
\bm{u}_i = \sum\limits_{m = 1}^{N}k_m \big[R_m (\bm{x}_{i+m} \! -\! \bm{x}_{i})\! + \!R_m^T (\bm{x}_{i-m} \!- \!\bm{x}_{i})\big],
\end{equation}
where $N$ is the look-ahead horizon $(0<N<n-1)$, $k_m>0$ is a gain, $R_m $ is a rotation matrix around $\bm{e}_z$ with rotation angle $\alpha_m$, and $(\bm{x}_{i+m} - \bm{x}_i)$ and $(\bm{x}_{i-m} - \bm{x}_i)$ denote the relative coordinates among robot $i$ and its $i+m$ and $i-m$ neighbors, where $m=\{1,\ldots, N\}$ and summation is modulo $n$. Note that the control law is spatially-decentralized, as each robot only requires relative position information from a set of neighboring robots.

Then, the symmetric cyclic controller can be written in compact form as
\begin{align}
\bm{u}&= -\sum\limits_{m=1}^{N} k_m \, [L_m\otimes R_m \,+ L_m^T\otimes R_m^T ]\bm{x}  \nonumber \\
			 &= - \sum\limits_{m=1}^{N} k_m \, \mathcal{L}_{m}\, \bm{x} = -  \mathcal{L} \, \bm{x}, 
\label{eq:symmetric_control_simplified}
\end{align}			 			 
where $\mathcal{L}_{m} := {L}_m\otimes R_{m} +  {L}_m^T\otimes R^T_{m} \in \CR$ and $\mathcal{L} := \sum\limits_{m=1}^{N} k_m \, \mathcal{L}_{m}$. In order to apply partial contraction theory, the control law needs to be flow-invariant (note that, by Assumption \ref{ass:inv}, the internal dynamics are flow invariant). The next lemma shows that this is indeed the case.

\begin{lemma}[Flow-invariance] \label{cyclic_invariant}
Subspace $\mathcal{M}_n$ is flow-invariant with respect to the symmetric cyclic control law given in equation \eqref{eq:symmetric_control_simplified}.
\end{lemma}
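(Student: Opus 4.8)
\emph{Proof plan.} The plan is to reduce flow-invariance to a commutation identity between the block-circulant matrices $\mathcal{P}_n$ and $\mathcal{L}$. Under the control law \eqref{eq:symmetric_control_simplified} the closed-loop field of \eqref{eq:sys_dynamics} is $\bm{f}(\bm{x},t) = \bm{g}(\bm{x}) - \mathcal{L}\bm{x}$, and by Lemma \ref{Lemma:rot} we have $\mathcal{M}_n = \mathcal{N}(V)$ with $V = \mathcal{W}_n\mathcal{P}_n$. Fix $\bm{x}\in\mathcal{M}_n$. Assumption \ref{ass:inv} gives $V\bm{g}(\bm{x}) = \bm{0}$, so it suffices to show $V\mathcal{L}\bm{x} = \bm{0}$: this yields $V\bm{f}(\bm{x},t) = \bm{0}$, i.e. $\bm{f}(\bm{x},t)\in\mathcal{N}(V) = \mathcal{M}_n$, which is the claim.

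I would establish $V\mathcal{L}\bm{x} = \bm{0}$ from two facts. \emph{(i) $\mathcal{P}_n\bm{x} = \bm{0}$ for every $\bm{x}\in\mathcal{M}_n$.} A short block computation gives $(\mathcal{P}_n\bm{x})_i = R_{2\pi/n}(\bm{x}_{i+2} - \bm{x}_{i+1}) - (\bm{x}_{i+1} - \bm{x}_i)$ (indices modulo $n$). By Lemma \ref{Lemma:manifold_defined}, $\bm{x}\in\mathcal{M}_n$ represents a regular polygon with normal $\bm{e}_z$, for which the rotational relation $\bm{x}_{i+1} - \bm{x}_i = R_{2\pi/n}(\bm{x}_{i+2} - \bm{x}_{i+1})$ holds at every vertex of the cycle --- not merely for the $n-2$ indices written out in \eqref{manifold_def1} --- so $(\mathcal{P}_n\bm{x})_i = \bm{0}$ for all $i$, i.e. $\mathcal{P}_n\bm{x} = \bm{0}$. (Equivalently: $\dim\mathcal{M}_n = 5$ by Remark \ref{rot_freedom}, while the eigenstructure of matrices in $\CR$ gives $\dim\mathcal{N}(\mathcal{P}_n) = 5$ together with $\mathcal{M}_n\subseteq\mathcal{N}(\mathcal{P}_n)$, forcing $\mathcal{M}_n = \mathcal{N}(\mathcal{P}_n)$.) \emph{(ii) $\mathcal{P}_n$ and $\mathcal{L}$ commute.} Each of $\mathcal{P}_n$ and $\mathcal{L} = \sum_{m=1}^{N} k_m\mathcal{L}_m$ is a sum of Kronecker products of a circulant matrix with a rotation about $\bm{e}_z$; in particular $L_m^T$ is again circulant and $R_m^T$ is again a rotation about $\bm{e}_z$, so the transposed terms in $\mathcal{L}_m$ cause no difficulty. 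Since any two circulant matrices commute and any two rotations about the common axis $\bm{e}_z$ commute, the identity $(A\otimes B)(C\otimes D) = (AC)\otimes(BD)$ makes each pair of such Kronecker products commute, and commutativity is preserved under sums; hence $\mathcal{P}_n\mathcal{L} = \mathcal{L}\mathcal{P}_n$.

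Combining (i) and (ii): $V\mathcal{L}\bm{x} = \mathcal{W}_n\mathcal{P}_n\mathcal{L}\bm{x} = \mathcal{W}_n\mathcal{L}\mathcal{P}_n\bm{x} = \mathcal{W}_n\mathcal{L}\bm{0} = \bm{0}$, which finishes the argument.

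The main obstacle is step (i): because $\mathcal{M}_n$ is cut out by only $n-2$ rotational constraints together with the single in-plane constraint, it is not a priori obvious that membership in $\mathcal{M}_n$ forces the full cyclic family of $n$ rotational constraints $\mathcal{P}_n\bm{x} = \bm{0}$ --- this ``redundancy'' is exactly what Lemma \ref{Lemma:manifold_defined} (equivalently, the dimension count) supplies, and it is what makes the clean commutation argument of step (ii) applicable. Step (ii) itself is routine given the block-circulant material in Section \ref{circulant_section}.
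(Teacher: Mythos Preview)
Your proof is correct and rests on the same two ingredients as the paper's: (1) any $\bm{x}\in\mathcal{M}_n$ satisfies the full cyclic family of rotational relations (equivalently $\mathcal{P}_n\bm{x}=\bm{0}$), and (2) rotations about the common axis $\bm{e}_z$ commute. The paper simply unpacks these at the component level: it fixes an index $i$, expands $\dot{\bm{x}}_{i+1}-\dot{\bm{x}}_i$ from \eqref{eq:symmetric_control}, applies the polygon relations to each $(\bm{x}_{i+1\pm m}-\bm{x}_{i+1})$, and swaps $R_m$ with $R_{2\pi/n}$ to obtain $\dot{\bm{x}}_{i+1}-\dot{\bm{x}}_i = R_{2\pi/n}(\dot{\bm{x}}_{i+2}-\dot{\bm{x}}_{i+1})$ for all $i$---which is exactly $\mathcal{P}_n\bm{u}(\bm{x})=\bm{0}$ written blockwise. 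Your version is the global/algebraic rendering of that same computation: you promote the rotation swap to the commutation $\mathcal{P}_n\mathcal{L}=\mathcal{L}\mathcal{P}_n$ and make the identification $\mathcal{M}_n=\mathcal{N}(\mathcal{P}_n)$ explicit (the paper uses this implicitly but never states it as an equality of subspaces). The payoff of your packaging is that it immediately yields the stronger fact $\mathcal{N}(\mathcal{P}_n)$ is $\mathcal{L}$-invariant, which the paper also needs later (e.g., in the robustness proof of Theorem~\ref{Theorem:robustness_bound} where $\bar V\mathcal{L}\,\bar U^T\bar U\bm{\delta_x}=\bm{0}$ is invoked).
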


We are now in a position to apply partial contraction theory to show that, under some assumptions, the symmetric cyclic controller drives the system to the desired formation subspace $\mathcal{M}_n$. 

As the encoded constraints are linearly independent, $V$ is full row rank, guaranteeing the existence of an orthonormal counterpart $\bar{V}$ (see Remark \ref{V_Vbar_T}). In the following,  to prove convergence to $\mathcal M_n$, we will apply partial contraction theory using $\bar{V}$. Let $\bar{U}$ be a matrix whose rows represent an orthonormal basis for the orthogonal complement of the subspace defined by the rows of $\bar{V}$. According to Theorem \ref{manifolds_and_partial}, we want to show that for system
\[
\dot{\bm{x}} = \bm{g} (\bm{x}) - \mathcal L \, \bm{x},
\]
the associated auxiliary system 
\[
\begin{split}
\dot{\bm{y}} = & \bar{V} \bigg( \bm{g}(\bar{V}^T\bm{y} + \bar{U}^T\bar{U}\bm{x}_p) -  \mathcal{L}\,(\bar{V}^T\bm{y} + \bar{U}^T\bar{U}\bm{x}_p) \bigg), 
\end{split}
\]
is contracting. Note that by Assumption \ref{ass:inv} and Lemma \ref{cyclic_invariant} the closed-loop dynamics are invariant with respect to $\mathcal M_n$. Then, according to Corollary \ref{co:applied_contraction}, one requires 
\[
\bar{V} \bigg( \frac{\partial \bm{g}}{\partial \bm{x}}  - \mathcal{L}\bigg)\bar{V}^T \prec 0, \qquad \text{ uniformly}. 
\]
By Remark \ref{V_Vbar_T}, since $V$ and $\bar{V}$ are related by an invertible transformation, the above stability requirement can be re-formulated in terms of $V$ directly. By factoring $V$ as $ \mathcal{W}_{n} \mathcal{P}_{n} $ (see equation \eqref{eq:Vr_formula}) one obtains the condition
\begin{equation} \label{eq:convergence_cond}
\mathcal{W}_{n} \mathcal{P}_{n} \bigg( \frac{\partial \bm{g}}{\partial \bm{x}}  - \mathcal{L}\bigg)  \mathcal{P}^{T}_{n} \mathcal{W}^{T}_{n}  \prec 0,\qquad \text{ uniformly}.
\end{equation}
Performing an eigenvalue analysis of \eqref{eq:convergence_cond} 
yields the main result of this section. The details of the proof are provided in \iftoggle{arxiv}{the appendix}{\cite{SS-ES-MP:15EVb}}.

\begin{theorem}[Polygon Convergence]
\label{contraction_th_orig}
Assume
\begin{equation}
\begin{split}
&\sup_{\bm{x},t}\! \bigg(\! \lambda_{\mathrm{max}}\!\left(\mathcal{P}_{n} \frac{\partial \bm{g}}{\partial \bm{x}}  \mathcal{P}^T_{n}\right)\! -\! \! \!\! \!  \! \min\limits_{\substack{1\leq i\leq n \\ k\in\{-1,0,1\}}} \! \sum\limits_{m=1}^{N} k_m \, \lambda^{(m)}_{ik}  \! \bigg)\!  < \! 0, 
\end{split}
\label{eq:eig_Theorem}
\end{equation}
where
{\small
\begin{equation*}
\begin{split}
\lambda^{(m)}_{ik}  =\dfrac{2}{{e^{\frac{2\pi(2(i-1)+k)j}{n}}}} &\bigg[\cos(k\alpha_m) - \,\,\cos\left(k\alpha_m \!+\! \frac{2\pi m(i-1)}{n}\right)\bigg]\!\\
&  \!  \left[\left(e^{\frac{2\pi(i-1+k)j}{n}}\!-\!1\right)\!\!\!\left(e^{\frac{2\pi(i-1)j}{n}}\!-\!1\right)\right]^2\!\!.
\end{split}
\end{equation*}
}
Then the robots governed by system \eqref{eq:sys_dynamics} under the symmetric cyclic controller \eqref{eq:symmetric_control} globally converge to a regular polygon formation, i.e., to the formation subspace $\mathcal{M}_n$.																	
\end{theorem}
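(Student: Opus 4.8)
The plan is to start from the reduction already carried out in the text just before the statement: by Corollary~\ref{co:applied_contraction} (whose flow-invariance hypothesis is supplied by Assumption~\ref{ass:inv} together with Lemma~\ref{cyclic_invariant}) and Remark~\ref{V_Vbar_T}, it suffices to verify the uniform inequality~\eqref{eq:convergence_cond}, i.e., that the symmetric part of $\mathcal{W}_n\mathcal{P}_n(\partial\bm{g}/\partial\bm{x}-\mathcal{L})\mathcal{P}_n^T\mathcal{W}_n^T$ is uniformly negative definite; global exponential convergence to $\mathcal{M}_n$ then follows. From there I would upper-bound $\lambda_{\mathrm{max}}$ of this matrix by quantities that no longer involve the projection $\mathcal{W}_n$, peeling $\mathcal{W}_n$ off through two applications of Cauchy interlacing, and then identify the surviving spectral quantity with $\min_{i,k}\sum_m k_m\lambda^{(m)}_{ik}$ using the block-circulant/rotational structure.

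First I would record two elementary facts. (i) The matrix $\mathcal{W}_n$ of Lemma~\ref{Lemma:rot} has orthonormal rows, $\mathcal{W}_n\mathcal{W}_n^T=I$: from the definitions of $W_{r_n}$, $W_{p_n}$ and of $I_3$, $\bm{e}_z$, each row of $\mathcal{W}_n$ is a distinct standard-basis row of $\reals^{3n}$, so $\bm{y}\mapsto\mathcal{W}_n^T\bm{y}$ is an isometry; consequently, for any symmetric $M$, $\lambda_{\mathrm{min}}(M)\le\lambda_{\mathrm{min}}(\mathcal{W}_nM\mathcal{W}_n^T)$ and $\lambda_{\mathrm{max}}(\mathcal{W}_nM\mathcal{W}_n^T)\le\lambda_{\mathrm{max}}(M)$ by the Rayleigh--Ritz characterization. (ii) $\mathcal{L}=\sum_m k_m\mathcal{L}_m$ is symmetric, since $\mathcal{L}_m=(L_m\otimes R_m)+(L_m\otimes R_m)^T$ and $k_m>0$. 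Writing the matrix in~\eqref{eq:convergence_cond} as $X-\mathcal{W}_n\mathcal{P}_n\mathcal{L}\mathcal{P}_n^T\mathcal{W}_n^T$ with $X:=\mathcal{W}_n\mathcal{P}_n(\partial\bm{g}/\partial\bm{x})\mathcal{P}_n^T\mathcal{W}_n^T$, Weyl's inequality gives $\lambda_{\mathrm{max}}(X-\mathcal{W}_n\mathcal{P}_n\mathcal{L}\mathcal{P}_n^T\mathcal{W}_n^T)\le\lambda_{\mathrm{max}}(X)-\lambda_{\mathrm{min}}(\mathcal{W}_n\mathcal{P}_n\mathcal{L}\mathcal{P}_n^T\mathcal{W}_n^T)$, and fact~(i) bounds $\lambda_{\mathrm{max}}(X)\le\lambda_{\mathrm{max}}(\mathcal{P}_n(\partial\bm{g}/\partial\bm{x})\mathcal{P}_n^T)$, which is precisely the term appearing in~\eqref{eq:eig_Theorem}.

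The crux will be the matching lower bound $\lambda_{\mathrm{min}}(\mathcal{W}_n\mathcal{P}_n\mathcal{L}\mathcal{P}_n^T\mathcal{W}_n^T)\ge\min_{i,k}\sum_{m=1}^{N}k_m\lambda^{(m)}_{ik}$. By fact~(i) again it is enough to show that the full $3n\times3n$ real symmetric matrix $\mathcal{P}_n\mathcal{L}\mathcal{P}_n^T=\sum_m k_m\,\mathcal{P}_n\mathcal{L}_m\mathcal{P}_n^T$ has spectrum exactly $\{\sum_m k_m\lambda^{(m)}_{ik}:1\le i\le n,\ k\in\{-1,0,1\}\}$. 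Here the structure enters: $\mathcal{P}_n$, each $\mathcal{L}_m$, and their transposes belong to $\CR$, so all are simultaneously diagonalized by one unitary $S$ whose columns are $\bm{f}_i\otimes\bm{q}_k$, where $\bm{f}_i$ is the $i$-th discrete-Fourier mode (diagonalizing every circulant of order $n$) and $\bm{q}_{-1}=(1,j,0)^T/\sqrt2$, $\bm{q}_0=(0,0,1)^T$, $\bm{q}_1=(1,-j,0)^T/\sqrt2$ diagonalize every rotation about $\bm{e}_z$ ($j$ the imaginary unit). With $\omega:=e^{2\pi j/n}$, a direct computation gives the mode-$(i,k)$ eigenvalue of $\mathcal{P}_n=L_1\otimes I_3+(L_1-L_2)\otimes R_{2\pi/n}$ as $d^{\mathcal{P}}_{ik}=(1-\omega^{i-1})(1-\omega^{i-1+k})$ and of $\mathcal{L}_m$ as $d^{(m)}_{ik}=2\big[\cos(k\alpha_m)-\cos\!\big(k\alpha_m+2\pi m(i-1)/n\big)\big]$ (the latter because $L_m\otimes R_m$ has eigenvalue $(1-\omega^{m(i-1)})e^{jk\alpha_m}$ on mode $(i,k)$ and $\mathcal{L}_m$ adds its complex conjugate). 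Since $\mathcal{P}_n$ is real, $\mathcal{P}_n^T$ equals its conjugate transpose, and each $\mathcal{L}_m$ is symmetric; hence $\mathcal{P}_n\mathcal{L}_m\mathcal{P}_n^T=S\big(|D^{\mathcal{P}}|^2 D^{(m)}\big)S^*$ and its eigenvalues are $|d^{\mathcal{P}}_{ik}|^2 d^{(m)}_{ik}$. The identity $(\omega^a-1)^2/\omega^a=-|1-\omega^a|^2$, used with $a=i-1$ and $a=i-1+k$, then shows $|d^{\mathcal{P}}_{ik}|^2 d^{(m)}_{ik}=\lambda^{(m)}_{ik}$ exactly as displayed, and summing over $m$ with weights $k_m$ finishes the identification.

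Assembling the bounds, one obtains, for every $(\bm{x},t)$,
\[
\lambda_{\mathrm{max}}\!\Big(\mathcal{W}_n\mathcal{P}_n\big(\tfrac{\partial\bm{g}}{\partial\bm{x}}-\mathcal{L}\big)\mathcal{P}_n^T\mathcal{W}_n^T\Big)\le\lambda_{\mathrm{max}}\!\Big(\mathcal{P}_n\tfrac{\partial\bm{g}}{\partial\bm{x}}\mathcal{P}_n^T\Big)-\!\!\!\min_{\substack{1\le i\le n\\ k\in\{-1,0,1\}}}\sum_{m=1}^{N}k_m\lambda^{(m)}_{ik},
\]
and taking the supremum over $(\bm{x},t)$---the minimum term being independent of $(\bm{x},t)$---shows that hypothesis~\eqref{eq:eig_Theorem} implies~\eqref{eq:convergence_cond}, hence the claimed global exponential convergence to $\mathcal{M}_n$. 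The main obstacle I anticipate is purely the spectral bookkeeping in the third paragraph---fixing the sign and index conventions for the circulant and rotation eigenvalues and reducing $|d^{\mathcal{P}}_{ik}|^2 d^{(m)}_{ik}$ to the closed form $\lambda^{(m)}_{ik}$. The two Cauchy-interlacing steps are what make it legitimate to minimize over all $3n$ modes---including those annihilated by $\mathcal{W}_n$, the (mild) source of conservatism---rather than over only the $3(n-2)+1$ that survive the projection.
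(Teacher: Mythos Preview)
Your proposal is correct and follows essentially the same route as the paper: peel off $\mathcal{W}_n$ via Cauchy interlacing (the paper packages this as a small lemma), then exploit the common $\CR$ eigenbasis to compute the spectrum of $\mathcal{P}_n\mathcal{L}_m\mathcal{P}_n^T$ and identify it with $\lambda^{(m)}_{ik}$. The only cosmetic difference is ordering---you apply Weyl first and then interlace each summand, whereas the paper interlaces the full matrix $\mathcal{P}_n(\partial\bm g/\partial\bm x-\mathcal{L})\mathcal{P}_n^T$ first and leaves the Weyl split implicit---but both yield the same bound, and your version is arguably more explicit about the algebraic identity linking $|d^{\mathcal{P}}_{ik}|^2 d^{(m)}_{ik}$ to the displayed $\lambda^{(m)}_{ik}$.
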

Note that due to the necessary inclusion of the in-plane constraint, the eigenvalues in Theorem \ref{contraction_th_orig} differ from those in \cite{JLR-JJES:12}.

\begin{remark}[Formations of Fixed Size] \label{Remark:cyclic_fixed}
Theorem \ref{contraction_th_orig} provides a sufficient condition to ensure that the robots converge to a regular polygon formation. Note that depending on the internal dynamics $\bm{g}(\bm{x})$ and the rotation angle $\alpha_m$, the polygon formation will be contracting, expanding, or of fixed size. For instance, if $\bm{g}(\bm{x}) = \bm{0}$, by setting $\alpha_m<m\pi/n$  the polygon formation will contract towards a point; by setting  $\alpha_m > m\pi/n$  the polygon formation will expand as the robots travel along rays emanating from the center of mass of the formation; by setting $\alpha_m = m\pi/n$ the polygon formation will be of fixed size, where the size depends on the initial conditions. The freedom in scaling the formation will be utilized later to control the formation size.
\end{remark}
For a given number of robots, the primary design parameters in the symmetric cyclic controller include the gains $k_m$ and the look-ahead horizon $N$. It is clear that increasing the gains uniformly scales all eigenvalues of the projected Jacobian, thereby increasing the exponential convergence rate. However, this is at the expense of a more aggressive controller which increases the risk of control saturation. On the other hand, increasing the look-ahead horizon, while imposing greater information requirements for each robot, also increases the convergence rate but at a lower risk of control saturation. This is due to the fact that the symmetric cyclic controller leverages the degree of asymmetry at each node of the polygon. Increasing the look-ahead horizon therefore improves the ``estimate" of this asymmetry, rather than simply magnifying the control law by increasing the gain. This intuition is exemplified by the following simulation. 

For simplicity, we assume that the internal dynamics $\bm{g}(\bm{x})$ are zero. In order to compare the control signals due to varying choices in gains and look-ahead, we maintain the same lower bound on the closed-loop convergence rate, i.e., $\lambda_{\min}\left(\bar{V}\mathcal{L}\bar{V}\right)$. Figure \ref{fig:cyclic_effort} shows the Euclidean norm of the net control signal, i.e., $\|\mathcal{L}\bm{x}(t)\|_2$ for six robots starting at the same initial conditions. The control signal in Figure \ref{fig:norm_horizon} demonstrates a significantly lower peak value than the signal in Figure \ref{fig:norm_gain} despite the fact that the two closed-loop systems possess the same lower on bound on convergence rate.
\begin{figure}[htbp]
\centering
\begin{subfigure}[t]{0.24\textwidth}
	\includegraphics[width=\textwidth]{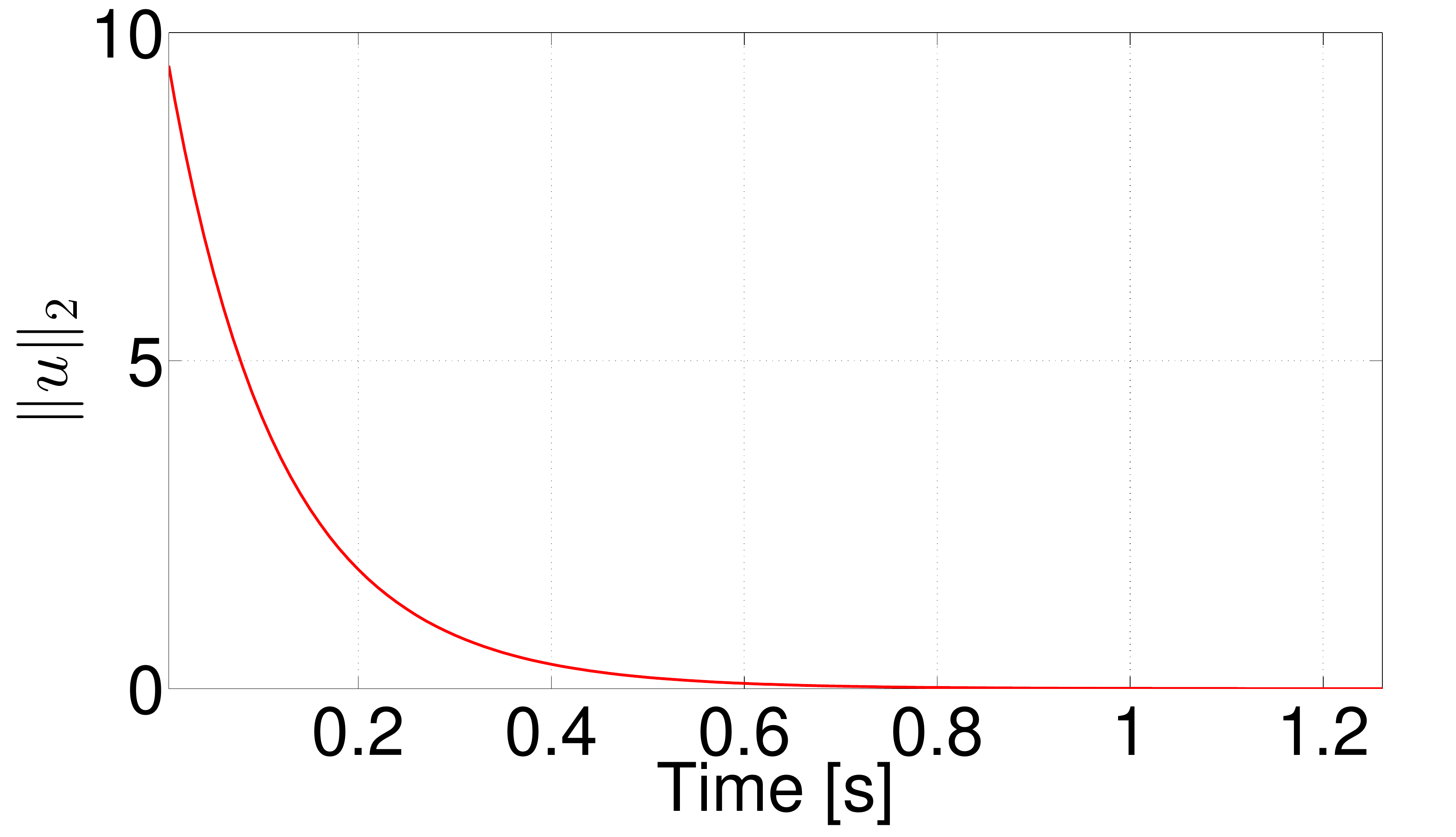}
	\caption{Control norm with $N = 2$, $k_m = 2$.}
	\label{fig:norm_horizon} 
\end{subfigure}
\begin{subfigure}[t]{0.24\textwidth}
	\includegraphics[width=\textwidth]{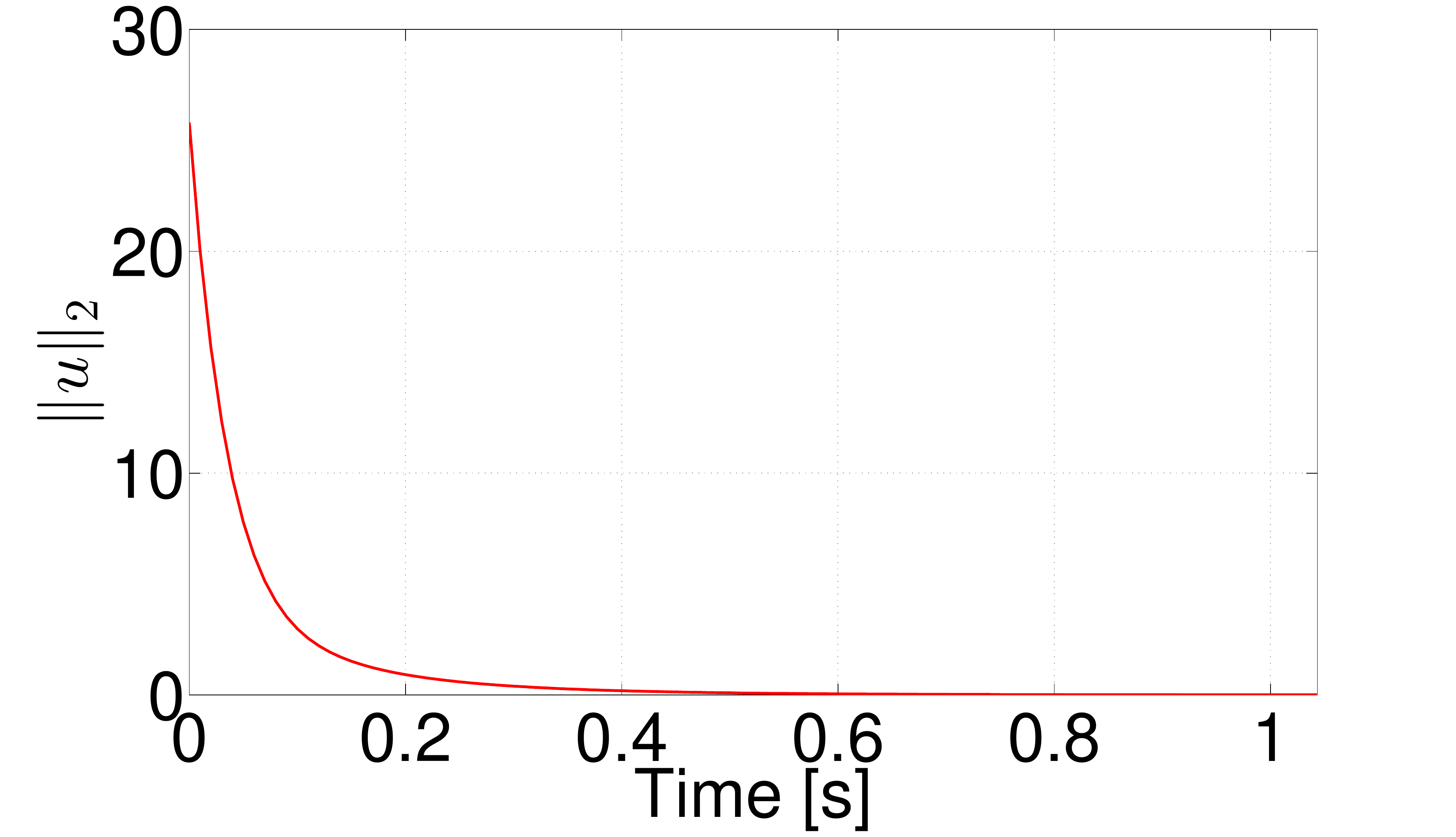}
	\caption{Control norm with $N = 1$, $k_m = 6.928$.}
	\label{fig:norm_gain} 
\end{subfigure}
\caption{Comparison of control effort with varying gains and look-ahead horizon.}
\label{fig:cyclic_effort}
\end{figure}

\begin{remark}[General Plane of Convergence]\label{Remark:Reta}
For  polygons not orthogonal to $\bm{e}_z$, one can use a similarity transformed version of the control rotation matrix $R_{m}$ to obtain analogous convergence results. Let $R_{\eta}^T\bm{e}_z$ be the desired polygon normal where $R_{\eta}$ is an arbitrary rotation matrix. Define  $R_{m_s} := R_{\eta}^{T} R_{m} R_{\eta}$ to be the similarity transform of $R_m$. We now replace $R_{m}$ in the individual control law given in (\ref{eq:symmetric_control}) by $R_{m_s}$, and the expression in (\ref{eq:symmetric_control_simplified}) becomes:
\begin{equation} \label{cyclic_modified}
\begin{split}
\bm{u} &= -\underbrace{ ( I_{n} \otimes R^{T}_{\eta}) }_{:=\mathcal{R}^{T}_{\eta} }  \sum\limits_{m=1}^{N} k_m \mathcal{L}_{m} \underbrace{( I_{n} \otimes R_{\eta}) }_{:=\mathcal{R}_{\eta}}\bm{x} \\
	   &= -\sum\limits_{m=1}^{N} k_m \mathcal{L}_{m\eta} \bm{x} = -\mathcal{L}_{\eta}\bm{x},
\end{split}
\end{equation}
where $\mathcal{L}_{m\eta}:= \mathcal{R}^{T}_{\eta}\mathcal{L}_{m}\mathcal{R}_{\eta}$ and $\mathcal{L}_{\eta} := \sum\limits_{m=1}^{N} k_m \, \mathcal{L}_{m\eta}$. The expression for the projection matrix $V$ becomes:
\begin{equation} \label{Vr_modified}
V = \mathcal{W}_{n} \mathcal{P}_{n} \mathcal{R}_{\eta}.
\end{equation}
The convergence analysis under control law \eqref{cyclic_modified} is virtually identical to the one provided above for the case where the polygon is orthogonal to $\bm{e}_z$. The generalized control law \eqref{cyclic_modified} will be leveraged in  Section \ref{sec:complex} to design control laws for polyhedral formations.
\end{remark}

\section{Polyhedral Formations}\label{sec:complex}
In this section we extend the results of Section~\ref{sec:planar} to \emph{polyhedral formations}.  Specifically, we first focus on convex polyhedra having regular faces and equal edge lengths, referred to as Johnson solids. Generalizations to other polygonal mesh formations are discussed at the end of this section.
\begin{definition}[Convex Polyhedral Solid and Polyhedral Surface] \label{def:polyhedron}
A convex solid polyhedron, or simply {polyhedron}, is defined as a body bounded by a finite number of polygons such that it lies on one side of the plane of each polygon. Equivalently, it is the convex hull of the polygon vertices. A {complete polyhedral surface} is the set of bounding faces for a polyhedron. A {partial polyhedral surface (PPS)} is a subset of the faces of a complete polyhedral surface.
\end{definition}
Figure \ref{fig:solid_reps} illustrates the distinction between a complete and partial surface of a convex solid in the definition above. In Figure \ref{fig:cube_incomp}, only a portion of the surface of the Johnson solid (cube) is specified, and thus the PPS formed by the faces in Figure \ref{fig:cube_incomp} contains un-matched edges. The set of these unmatched edges (shown in dashed form in Figure \ref{fig:cube_incomp}) is termed the \emph{boundary} of the PPS. The complete polyhedral surface displayed in Figure \ref{fig:cube_comp} encompasses the entire surface of the solid cube.
\begin{figure}[htbp]
\vspace{-3mm}
\begin{subfigure}[t]{.24\textwidth}
	\centering
	\includegraphics[width=0.9\textwidth]{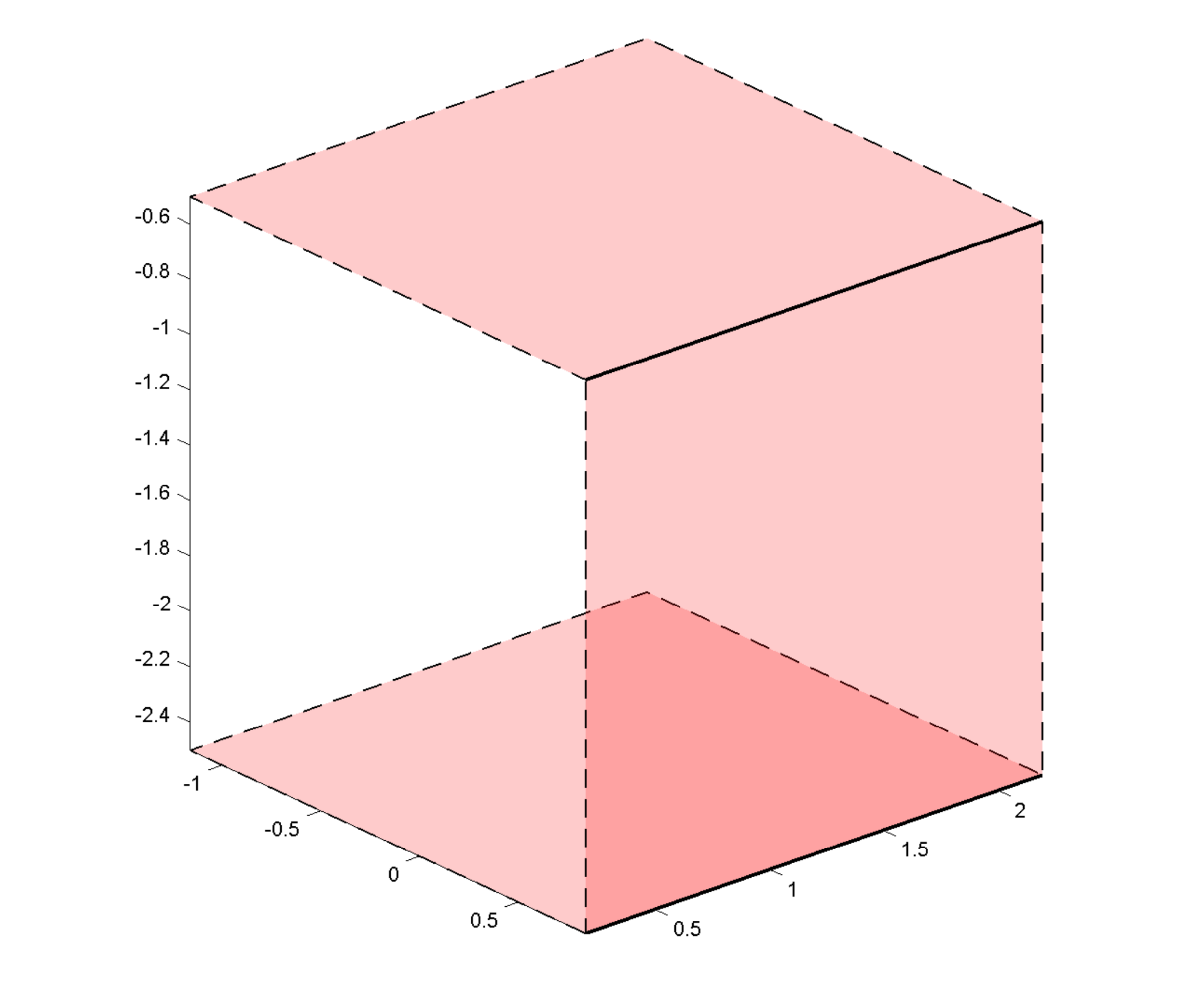}
	\caption{Partial polyhedral surface of a cube.}
	\label{fig:cube_incomp} 
\end{subfigure}
\begin{subfigure}[t]{.24\textwidth}
	\centering
	\includegraphics[width=1.05\textwidth]{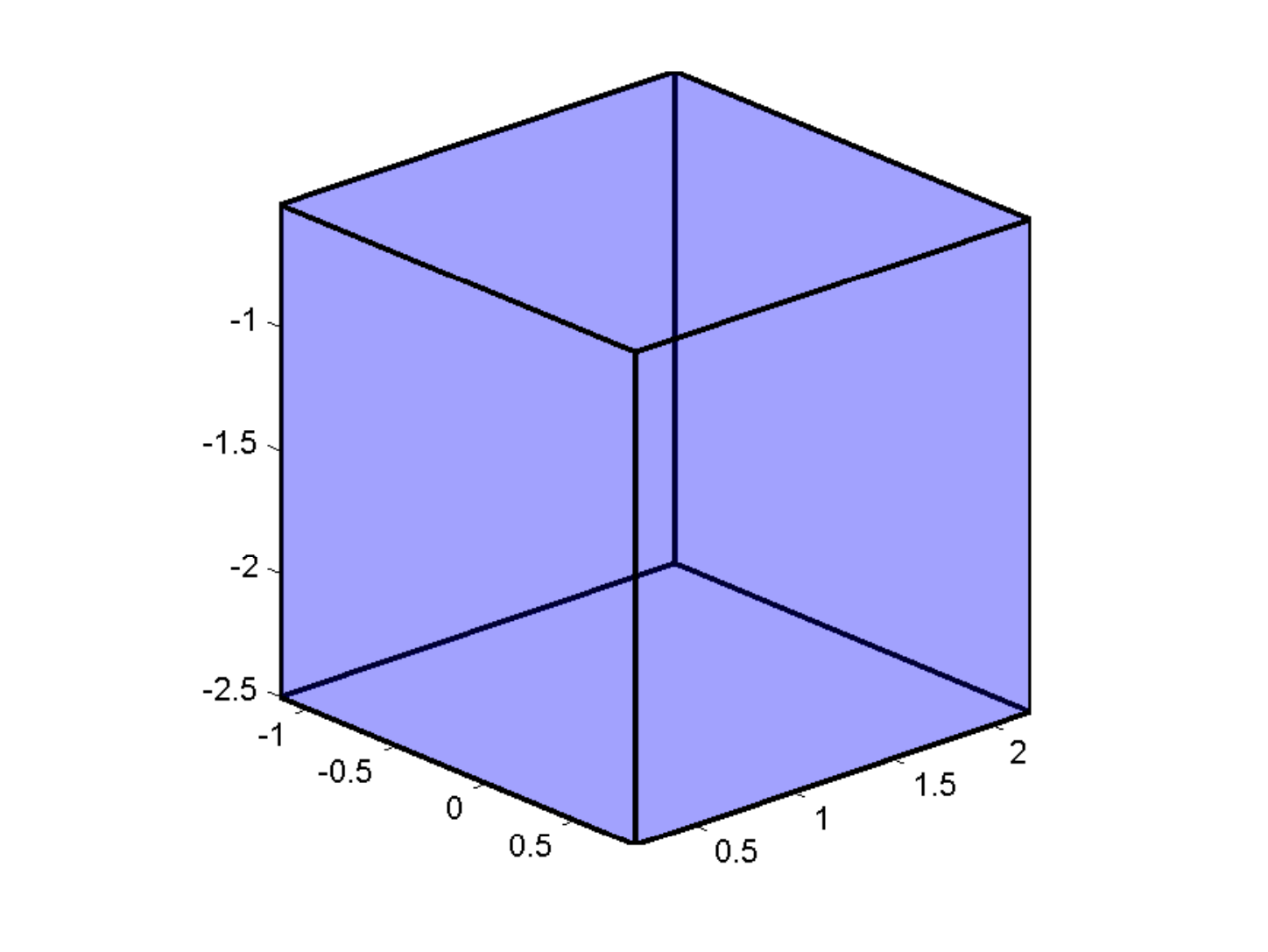}
	\caption{Complete polyhedral surface of a cube.}
	\label{fig:cube_comp} 
\end{subfigure}
\caption{Partial and complete representations of a bounded polyhedron (cube) with same vertex set.}
\label{fig:solid_reps}
\vspace{-2mm}
\end{figure}

Consider now a polyhedron $\mathcal{P}$ with complete polyhedral surface $\mathcal{F}$. As an extension to the discussion in Section~\ref{sec:planar}, each face of $\mathcal{F}$ can be described by a set of linearly independent constraints on its vertices. When considering these faces in combination, we must ensure that the full set of constraints (i.e., due to all faces in $\mathcal{F}$) remain linearly independent. This motivates the search for a PPS representation of the polyhedron that is \emph{minimal} in the sense that it does not induce redundant constraints yet still retains enough information to fix the positions of each vertex (robot). We note that this reduced set of linear constraints for the PPS will retain the translational and scaling degrees of freedom associated with the PPS's constituent polygons (see Remark~\ref{rot_freedom}). Thus all further discussion of the terms ``face'' or ``polygon'' in this section will implicitly assume translation and scale invariance. In the following subsection, we show that a suitable minimal PPS representation can always be constructed for any given convex polyhedron.

\subsection{Formulation of a Minimal Partial Polyhedral Surface}
We first introduce some terminology (adapted from \cite{ADA:05}):

\begin{definition}[Development of a Polyhedral Surface \cite{ADA:05}]\label{def:develop}
A development $\mathcal{D}$ is defined as the tuple $(\mathcal{Q},\mathcal{R})$ where $\mathcal{Q}$ is a set of polygons as defined in Section~\ref{sec:planar} and $\mathcal{R}$ is a set of rules that identify common edges between polygons. The sets $\mathcal{Q}$ and $\mathcal{R}$ satisfy the following properties:
\begin{enumerate}
    \item For every pair of adjacent polygons $\mathcal{Q}_i,\mathcal{Q}_j \in \mathcal{Q}$, there is a rule $r_{ij} \in \mathcal{R}$ which encompasses two pieces of information, namely, it identifies (a) the two shared vertices between the two polygons, and (b) the dihedral angle formed between the planes of the two polygons.
	\item It is possible to create a path between any two polygons in the collection by following the common edges (i.e., a development cannot be split into disconnected parts).
	\item Each side of every polygon is either matched with no edge of any other polygon, or with exactly one edge of another polygon in the collection.
\end{enumerate}
\end{definition}
Every development defines a \emph{unique} (complete or partial) polyhedral surface $\mathcal{F}$, but multiple developments may correspond to the same surface. The surface in Figure \ref{fig:cube_incomp} corresponds to a development with three squares, while the surface in Figure \ref{fig:cube_comp} corresponds to a development with six squares (this development is commonly referred to as the polyhedral net \cite{MJW:74}). Our goal is to be able to use the development for the PPS in Figure \ref{fig:cube_incomp} as a viable representation of the complete polyhedral surface in Figure \ref{fig:cube_comp} by leveraging the following result:

\begin{lemma}[Completion of a Polyhedral Surface \cite{ADA:05}] \label{lem:completion}
Every bounded
convex PPS $\mathcal{F}'$ gives rise to a unique complete polyhedral surface $\mathcal{F}$ without adding new vertices. If $\mathcal{A}'$ denotes the set of vertices for $\mathcal{F}'$, then the completion $\mathcal{F}$ is given by the surface of the convex hull of  $\mathcal{A}'$.
\end{lemma}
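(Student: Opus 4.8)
The plan is to verify directly that the surface $\mathcal{F} := \partial\,\co(\mathcal{A}')$ of the convex hull of the vertex set $\mathcal{A}'$ is a legitimate complete polyhedral surface, has vertex set exactly $\mathcal{A}'$, contains $\mathcal{F}'$ as a sub-development, and is the \emph{only} surface with these properties. By Definition~\ref{def:polyhedron} (and the notion of PPS therein), $\mathcal{F}'$ is a sub-collection of faces of some complete convex polyhedral surface $\mathcal{F}''$ bounding a convex polytope $P'' = \co(\mathcal{A}'')$, where the vertex set $\mathcal{A}''$ of $\mathcal{F}''$ contains $\mathcal{A}'$; in particular $P := \co(\mathcal{A}') \subseteq P''$. (Boundedness guarantees $\mathcal{A}'$ is finite, so that $P$ is a polytope rather than an unbounded polyhedron.) That $\mathcal{F} = \partial P$ qualifies as a complete polyhedral surface in the sense of Definition~\ref{def:polyhedron} is immediate: $P$ is a convex polytope, so $\partial P$ is a finite union of convex polygons lying in facet-supporting planes, with $P$ on one side of each.

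The first substantive step is the ``no new vertices'' assertion. Each $v \in \mathcal{A}'$ is a corner of a face of $\mathcal{F}''$, hence an extreme point of $P''$; since $P \subseteq P''$ and $v \in P$, the point $v$ is a fortiori an extreme point of $P$, i.e.\ a vertex of $\mathcal{F}$. Conversely, the vertices of $P = \co(\mathcal{A}')$ are always a subset of $\mathcal{A}'$. Hence the vertex set of $\mathcal{F}$ is precisely $\mathcal{A}'$, with no vertices added and none lost.

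The second step shows that $\mathcal{F}$ completes $\mathcal{F}'$, i.e.\ that the development $\mathcal{D}'$ of $\mathcal{F}'$ is a sub-development of that of $\mathcal{F}$. Let $Q$ be a face of $\mathcal{F}'$. As a face of $\mathcal{F}''$ it equals $H \cap P''$ for some supporting plane $H$ of $P''$, and all corners of $Q$ belong to $\mathcal{A}'$. Because $P \subseteq P''$, the plane $H$ also supports $P$, so on one hand $H \cap P \subseteq H \cap P'' = Q$, and on the other $Q = \co(\text{corners of }Q) \subseteq \co(\mathcal{A}' \cap H) \subseteq H \cap P$; thus $Q = H \cap P$ is a facet of $P$, hence one of the polygons of $\mathcal{F}$. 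Consequently the polygons, the edge-identification rules, and the dihedral angles recorded in $\mathcal{D}'$ reappear unchanged in the development of $\mathcal{F}$, so $\mathcal{D}'$ embeds into it.

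Finally, uniqueness: if $\mathcal{G}$ is any complete convex polyhedral surface completing $\mathcal{F}'$ without new vertices, its vertex set contains all corners of $\mathcal{F}'$, namely $\mathcal{A}'$, and nothing else, hence equals $\mathcal{A}'$; but by Definition~\ref{def:polyhedron} a complete convex polyhedral surface is the boundary of the convex hull of its own vertex set, so $\mathcal{G} = \partial\,\co(\mathcal{A}') = \mathcal{F}$ (modulo the translation and scale invariance of the constituent polygons adopted before Definition~\ref{def:develop}). I expect the main obstacle to be the completion step, and in particular the need to exclude degeneracies: one must ensure that passing to $\co(\mathcal{A}')$ does not merge several coplanar faces of $\mathcal{F}'$ into one larger facet of $\mathcal{F}$, nor create collinear triples within $\mathcal{A}'$ that would collapse a polygon of $\mathcal{F}'$. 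For the polyhedra relevant here (Johnson solids, and more generally convex surfaces with strictly convex dihedral angles and vertices in strictly convex position) no such degeneracy occurs, so the identity ``$\mathcal{F}=$ surface of $\co(\mathcal{A}')$'' holds verbatim; the fully general statement is exactly the development-completion theorem cited from \cite{ADA:05}.
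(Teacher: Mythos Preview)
The paper does not supply its own proof of this lemma; it is quoted verbatim as a classical result from the cited reference \cite{ADA:05} (Alexandrov's development-completion theorem) and used as a black box. There is therefore no ``paper's proof'' to compare against.

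Your argument is a clean and essentially correct reconstruction of why the statement holds, at least in the regime the paper cares about. The three main steps---extreme-point inheritance under $P \subseteq P''$ to show the vertex set of $\partial\co(\mathcal{A}')$ is exactly $\mathcal{A}'$; the supporting-plane sandwich $Q \subseteq H\cap P \subseteq H\cap P'' = Q$ to show each face of $\mathcal{F}'$ survives as a facet of $\co(\mathcal{A}')$; and the observation that a complete convex surface is determined by its vertex set---are all sound. You are also right to flag the coplanarity/collinearity degeneracies as the one place where the argument could break in full generality, and right that for Johnson solids (strictly convex dihedral angles, vertices in general position) these do not arise. Since the paper only invokes the lemma for precisely such surfaces, your proof more than suffices for its purposes.
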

We are now ready to present the first step in constructing a PPS representation of a complete polyhedral surface where the set of linear constraints induced by the faces of the PPS are linearly independent. 
\begin{lemma}[Minimal Development]\label{lem:reduced_poly}
Let $\mathcal{F}$ be a complete polyhedral surface and let $\mathcal{A}$ denote its set of vertices. Suppose $\mathcal{D} = (\mathcal{Q}, \mathcal{R})$ is a development for $\mathcal{F}$. Choose a set of polygons $\mathcal{Q}' \subset \mathcal{Q}$ that satisfies the following properties:
\begin{enumerate}
	\item Tree: Within the dual graph of $\mathcal{F}$, the subgraph induced by $\mathcal{Q}'$ is a tree,
	\item Vertex span: The set of vertices spanned by $\mathcal{Q}'$ is identical to the set of vertices spanned by $\mathcal{Q}$.
\end{enumerate}
Given $\mathcal{Q}'$ satisfying the properties above, we now construct the set of rules $\mathcal{R}' \subset \mathcal{R}$ corresponding to the polygons selected in $\mathcal{Q}'$:
\[
\mathcal{R}' = \left\{r_{ij} \in \mathcal{R} \mid Q_i, Q_j \in \mathcal{Q}' \right\}.
\]
Then, the tuple $\mathcal{D}' := (\mathcal{Q}', \mathcal{R}')$ is a well-posed development (essentially a spanning tree of the graph $\mathcal{D}$) and therefore defines a PPS $\mathcal{F}'$. Furthermore, the completion of $\mathcal{F}'$ is the original surface $\mathcal{F}$.
\end{lemma}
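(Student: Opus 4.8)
The plan is to proceed in three stages: (i) verify that the pair $\mathcal{D}' = (\mathcal{Q}', \mathcal{R}')$ satisfies the three defining properties of a development (Definition \ref{def:develop}), so that it indeed defines a PPS $\mathcal{F}'$; (ii) identify the vertex set of $\mathcal{F}'$; and (iii) invoke Lemma \ref{lem:completion} to conclude that the completion of $\mathcal{F}'$ is $\mathcal{F}$.

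For stage (i) I would check the three properties in turn. Property 3 (edge matching) is inherited directly: passing from $\mathcal{R}$ to $\mathcal{R}' \subseteq \mathcal{R}$ only removes identifications, so if every side of every polygon was matched with at most one other edge under $\mathcal{D}$, the same holds under $\mathcal{D}'$. Property 1 (a rule for every adjacent pair) follows from the definition of $\mathcal{R}'$ together with the fact that $\mathcal{D}$ is already a development of $\mathcal{F}$: if $Q_i, Q_j \in \mathcal{Q}'$ are adjacent faces of $\mathcal{F}$, then $r_{ij} \in \mathcal{R}$ by Property 1 for $\mathcal{D}$, and since both polygons lie in $\mathcal{Q}'$ we get $r_{ij} \in \mathcal{R}'$. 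Consequently the edges of the subgraph induced by $\mathcal{Q}'$ in the dual graph of $\mathcal{F}$ are in bijection with the rules of $\mathcal{R}'$. Property 2 (connectedness) is then exactly the statement that this induced subgraph is connected, which holds since it is assumed to be a tree. Hence $\mathcal{D}'$ is a well-posed development and defines a unique PPS $\mathcal{F}'$; concretely, $\mathcal{F}'$ is the sub-surface of $\mathcal{F}$ comprising precisely the faces in $\mathcal{Q}'$ with their $\mathcal{F}$-embedding, so $\mathcal{F}'$ is bounded and convex as a sub-collection of faces of the convex surface $\mathcal{F}$.

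For stages (ii) and (iii): the vertex set $\mathcal{A}'$ of $\mathcal{F}'$ is the set of vertices spanned by $\mathcal{Q}'$, which by the vertex-span hypothesis equals the set of vertices spanned by $\mathcal{Q}$, i.e. $\mathcal{A}' = \mathcal{A}$. Since $\mathcal{F}'$ is a bounded convex PPS, Lemma \ref{lem:completion} gives that its completion is the surface of $\co(\mathcal{A}')$. On the other hand $\mathcal{F}$ is a complete convex polyhedral surface with vertex set $\mathcal{A}$, hence by Definition \ref{def:polyhedron} it is itself the surface of $\co(\mathcal{A})$. Combining these with $\mathcal{A}' = \mathcal{A}$ yields that the completion of $\mathcal{F}'$ equals $\mathcal{F}$, as claimed.

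The bulk of the work — and essentially the only place requiring care — is stage (i), specifically the bookkeeping needed to reconcile the combinatorial ``tree in the dual graph'' hypothesis with the three set-theoretic conditions of Definition \ref{def:develop}: one must check that $\mathcal{R}'$ retains a rule for every adjacency surviving in $\mathcal{Q}'$ (which is where Property 1 of the original development $\mathcal{D}$ enters) and that no spurious new adjacencies are created when $\mathcal{F}'$ is assembled (which holds because $\mathcal{F}'$ sits inside $\mathcal{F}$ as a sub-surface). Stages (ii)–(iii) are then immediate consequences of Lemma \ref{lem:completion} and Definition \ref{def:polyhedron}.
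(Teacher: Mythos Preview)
Your proposal is correct and follows essentially the same approach as the paper: verify Properties 1)--3) of Definition~\ref{def:develop} for $(\mathcal{Q}',\mathcal{R}')$ (the paper dispatches 1) and 3) ``by construction'' and uses the tree hypothesis for 2), exactly as you do), then use the vertex-span hypothesis together with Lemma~\ref{lem:completion} to identify the completion of $\mathcal{F}'$ with $\mathcal{F}$. Your write-up simply spells out in more detail what the paper compresses into two sentences.
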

\begin{proof}
Properties 1) and 3) of Definition~\ref{def:develop} hold for $\mathcal{Q}', \mathcal{R}'$ by construction. Property 2) holds because the `tree' assumption for $\mathcal{Q}'$ ensures that there exists a path between any two polygons in $\mathcal{Q}'$ by following common edges. Thus $\mathcal{D}'$ is well-defined and therefore describes a PPS $\mathcal{F}'$.

The `vertex span' assumption ensures that the convex hulls of the vertices of $\mathcal{F}$ and $\mathcal{F}'$ are identical. Since $\mathcal{F}$, a complete polyhedral surface, is the surface of this convex hull, by Lemma~\ref{lem:completion} the completion of $\mathcal{F}'$ is $\mathcal{F}$.
\end{proof}

The key message of Lemma~\ref{lem:reduced_poly} is that a PPS is sufficient to describe a polyhedron (i.e., the desired formation) provided that it encompasses all vertices of the polyhedron and its faces are connected. More precisely, each face within the PPS defines a set of linear constraints on its vertices, the combination of which completely determine the convex hull matching the desired formation. The further restriction that the faces induce a tree graph is an important step, as shown next, towards ensuring that the combined linear constraints are independent. 
For the rest of this section, we assume that a PPS $\mathcal{F}$ described by the development $\mathcal{D} = (\mathcal{Q}, \mathcal{R})$ has been fixed. Denote $L := |\mathcal{Q}|$, the number of polygons in $\mathcal{F}$. We now make explicit the notion of rules of a development.
\begin{figure}[htbp]
\vspace{-2mm}
\centering
	\includegraphics[scale=0.65]{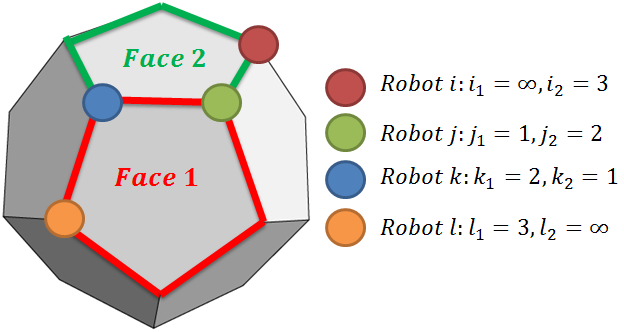}
	\caption{Indexing of robots for a general polyhedron.}
	\label{fig:3d_poly}
\vspace{-3mm}
\end{figure}

Let $\mathcal{Q}_k$ denote the $k^{\text{th}}$ polygon in $\mathcal{Q}$. Each $\mathcal{Q}_k$ is associated with a set of robots $\mathcal V_k$, and a unit outward normal $\bm{n}_k$ (pointing outward from the interior of the polyhedron). The number of robots $n$ is equal to the number of vertices in the polyhedron and each robot is uniquely associated with a vertex. Each face induces a counterclockwise (with respect to $\bm{n}_k$) sub-indexing of the robots, specifically, let $i_k$ denote the index of robot $i$ in face $k$, where $i=1,\ldots, n$ and $k=1, \ldots, L$, see Figure  \ref{fig:3d_poly}. If robot $i$ does not belong to face $k$, then we use the convention $i_k = \infty$. 


For each $\mathcal{Q}_k \in \mathcal{Q}$, define the matrix $\bar E^{(k)} \in \reals^{|\mathcal{V}_k|\times n}$ as
\begin{equation}
[\bar E^{(k)}]_{ij} = \begin{cases} 1 &\mbox{if } j_k = i, \\
					    0 &\mbox{otherwise.}
			\end{cases}
\label{robot_subset}
\end{equation}
Thus $[\bar E^{(k)}]_{ij}$ equals one if robot $j$'s index within face $k$ is equal to $i$. Define $\bm{x}^{(k)} \in \reals^{3|\mathcal{V}_k|}$ to be the positions of the robots in face $\mathcal{Q}_k$. Using \eqref{robot_subset}, we may write $\bm{x}^{(k)}$ as
\begin{equation} \label{face_distr}
\bm{x}^{(k)} = \left( \bar E^{(k)}\otimes I_3\right)\bm{x} = E^{(k)}\bm{x},
\end{equation}
where $\bm{x}\in\reals^{3n}$ is the global state vector, and $E^{(k)} := \bar E^{(k)}\otimes I_3$. As a direct extension of Section \ref{subsec:subspace}, the face $\mathcal{Q}_k$ (by assumption a regular polygon) is defined as the subspace $\mathcal{M}_{|\mathcal{V}_k|}^{(k)}$ as follows:
\begin{subequations}
\begin{align}
&\mathcal{M}_{|\mathcal{V}_k|}^{(k)} = \{\bm{x}^{(k)} \in \reals^{3|\mathcal{V}_k|}: \nonumber\\
&\, \, \,R_{\eta_k}(\bm{x}_{i+1}^{(k)}\! -\! \bm{x}_i^{(k)}) = R_{2\pi/n}R_{\eta_k}( {\bm{x}_{i+2}^{(k)}\! - \!\bm{x}^{(k)}_{i+1}} ), \label{manifold_def1_p} \\
&\hspace{37mm} i=1, \ldots, |\mathcal V_k|\!-\!2, \nonumber \\ 
&\,\, \,  \bm{e}_z^T R_{\eta_k}(\bm{x}^{(k)}_{|\mathcal V_k|}\! -\! \bm{x}^{(k)}_{|\mathcal V_k|-1})  = \bm{e}_z^T R_{\eta_k}( {\bm{x}^{(k)}_{1}\! - \!\bm{x}^{(k)}_{|\mathcal V_k|}} )\},\label{manifold_def2_p}
\end{align}
\end{subequations}
where $R_{\eta_k}$ is a rotation matrix defined such that $R_{\eta_k}^T\bm{e}_z = \bm{n}_k$. The desired Johnson polyhedral formation is the convex hull of a vertex set which can  be represented as a subspace $\mathcal{M}_n \subset \reals^{3n}$ defined by the polygonal constraints that stem from each face in $\mathcal{Q}$, that is  
\begin{align}
 \mathcal{M}_{n} &= \bigcap\limits_{k:\mathcal{Q}_k\in \mathcal{Q}} \mathcal{M}_{|\mathcal{V}_k|}^{(k)}.
\label{poly_manifold_net}
\end{align}
By Lemma \ref{Lemma:rot}, the constraints in \eqref{manifold_def1_p} and \eqref{manifold_def2_p} can be compactly represented as the null space of a $3(|\mathcal{V}_k|-2)+1\times 3n$ matrix, denoted by $\widetilde{V}^{(k)}$. In other words, $\bm{x} \in \mathcal{M}_{|\mathcal{V}_k|}^{(k)} \Leftrightarrow \widetilde{V}^{(k)} \bm{x} = \bm{0}$. Using \eqref{Vr_modified} and \eqref{face_distr}, one can write $\widetilde{V}^{(k)}$ as
\begin{equation}
\widetilde{V}^{(k)} = \mathcal{W}_{|\mathcal{V}_k|}\mathcal{P}_{|\mathcal{V}_k|}\mathcal{R}_{\eta_k}E^{(k)}.
\label{poly_manifold}
\end{equation}
Thus, the subspace $\mathcal{M}_n$ in \eqref{poly_manifold_net} may be equivalently re-stated as the intersection of null spaces:
\[
	\mathcal{M}_n = \bigcap\limits_{k:\mathcal{Q}_k \in \mathcal{Q}}\mathcal{N}(\widetilde{V}^{(k)}).
\]
Note that the rules $\mathcal{R}$ are implicitly encoded in the indexing used to write down the constraint matrices $\widetilde{V}^{(k)}$ and their nullspace intersection. Given our representation of the PPS in terms of the set $\mathcal{Q}$, we can now combine \eqref{poly_manifold_net} and \eqref{poly_manifold} to form the global constraint matrix $\widetilde{V}$:
\begin{equation}\label{V_poly}
	\widetilde{V}:= \begin{bmatrix} \widetilde{V}^{(1)^T} & \widetilde{V}^{(2)^T} & \cdots & \widetilde{V}^{(L)^T} \end{bmatrix}^T.
\end{equation}
The global formation subspace $\mathcal{M}_n$ is then defined to be the null space of $\widetilde{V}$, i.e., 
\[
	\bm{x} \in \mathcal{M}_n   \Leftrightarrow \widetilde{V}\bm{x} = \bm{0}.
\]
As constructed, $\widetilde{V}$ is not yet full row rank. A geometric explanation for this linear dependence is that once the orientations of two adjacent PPS faces are fixed, the in-plane rotational degrees of freedom for the remaining faces vanish. This notion is formalized in the following (relatively straightforward) corollary of Lemma \ref{Lemma:manifold_defined}, the proof of which is provided in \iftoggle{arxiv}{the appendix.}{\cite{SS-ES-MP:EVb}.}
\begin{corollary}[Reduced Polygon Constraints]\label{corr:manifold_defined_red}
Assume two neighboring robots $j$ and $j+1$ of face $\mathcal{Q}_k$ are constrained to lie in a plane normal to $\bm{n}_k$, where the indices $\{j, j+1\} \in \{1,\ldots,|\mathcal{V}_k|\}$ are modulo $|\mathcal{V}_k|$. Then the rotational constraints given in \eqref{manifold_def1_p} are necessary and sufficient for the definition of a regular polygon with normal direction $\bm{n}_k$. 
\end{corollary}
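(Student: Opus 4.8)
The plan is to reduce this statement to Lemma \ref{Lemma:manifold_defined} applied to the single face $\mathcal{Q}_k$, viewed in isolation as a planar formation of $|\mathcal{V}_k|$ robots. First I would observe that by the definition of $R_{\eta_k}$ (namely $R_{\eta_k}^T \bm{e}_z = \bm{n}_k$), the change of variables $\bm{y}_i := R_{\eta_k}\bm{x}_i^{(k)}$ maps the face-$k$ constraints \eqref{manifold_def1_p}--\eqref{manifold_def2_p} exactly onto the planar constraints \eqref{manifold_def1}--\eqref{manifold_def2} with $n$ replaced by $|\mathcal{V}_k|$ and the normal direction taken to be $\bm{e}_z$. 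Thus it suffices to prove the statement in the rotated frame: assuming two neighboring robots $j, j+1$ are already constrained to lie in a common plane orthogonal to $\bm{e}_z$, the rotational constraints \eqref{manifold_def1} alone are necessary and sufficient to define a regular $|\mathcal{V}_k|$-gon with normal $\bm{e}_z$.

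Necessity is immediate, since the constraint set in Lemma \ref{Lemma:manifold_defined} (rotational plus in-plane) already holds for any regular polygon, so in particular the rotational subset and the planarity of the specific pair $j, j+1$ hold. For sufficiency, the key step is to show that the rotational constraints \eqref{manifold_def1}, together with the single scalar planarity condition on the edge $(\bm{x}_{j+1} - \bm{x}_j)$, already imply the in-plane constraint \eqref{manifold_def2} — after which Lemma \ref{Lemma:manifold_defined} finishes the job. To see this, note that each rotational constraint $(\bm{x}_{i+1} - \bm{x}_i) = R_{2\pi/|\mathcal{V}_k|}(\bm{x}_{i+2} - \bm{x}_{i+1})$ forces consecutive edge vectors to be related by a rotation about $\bm{e}_z$; since $R_{2\pi/|\mathcal{V}_k|}$ fixes the $\bm{e}_z$-component, all edge vectors $\bm{x}_{i+1} - \bm{x}_i$ share a common value of $\bm{e}_z^T(\bm{x}_{i+1} - \bm{x}_i)$, call it $h$. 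The edges also sum to zero around the cycle (telescoping over all $|\mathcal{V}_k|$ vertices), so $|\mathcal{V}_k|\, h = 0$, giving $h = 0$; hence \emph{every} edge, and in particular the closing edge pair appearing in \eqref{manifold_def2}, has zero $\bm{e}_z$-component, so \eqref{manifold_def2} holds trivially. (Alternatively, and more directly in the spirit of the hypothesis: the assumed planarity $\bm{e}_z^T(\bm{x}_{j+1} - \bm{x}_j) = 0$ pins $h = 0$ without invoking the cycle-closure sum, which is the cleaner route.) Then Lemma \ref{Lemma:manifold_defined}, whose full hypothesis set is now satisfied, yields that the configuration is a regular polygon with normal $\bm{e}_z$; translating back via $R_{\eta_k}^{-1}$ gives a regular polygon with normal $\bm{n}_k$.

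The only mild subtlety — and the step I would treat most carefully — is the bookkeeping of indices modulo $|\mathcal{V}_k|$ and making sure that the $|\mathcal{V}_k|-2$ constraints in \eqref{manifold_def1}, which by construction only range over $i = 1, \ldots, |\mathcal{V}_k|-2$, still propagate the common $\bm{e}_z$-component $h$ to \emph{all} edges including the two ``wrap-around'' edges $(\bm{x}_1 - \bm{x}_{|\mathcal{V}_k|})$ and $(\bm{x}_{|\mathcal{V}_k|} - \bm{x}_{|\mathcal{V}_k|-1})$; this is exactly why one scalar planarity hypothesis on a single neighboring pair is needed rather than none, and it is the content that distinguishes this corollary from a triviality. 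Everything else is a direct appeal to Lemma \ref{Lemma:manifold_defined} and the rotational-invariance of $\bm{e}_z$ under $R_{2\pi/|\mathcal{V}_k|}$, so no lengthy computation is required.
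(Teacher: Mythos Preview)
Your approach matches the paper's: reduce to the $\bm{e}_z$ frame via $R_{\eta_k}$, show that the rotational constraints together with the single planarity hypothesis force every edge to have zero $\bm{e}_z$-component, and then invoke Lemma~\ref{Lemma:manifold_defined}. One correction is needed, though. Your telescoping argument ``$|\mathcal{V}_k|\,h=0$'' is wrong as stated: the $|\mathcal{V}_k|-2$ rotational constraints only relate edges $e_1,\dots,e_{|\mathcal{V}_k|-1}$, so only those share the common $\bm{e}_z$-component $h$; the wrap-around edge $e_{|\mathcal{V}_k|}=\bm{x}_1-\bm{x}_{|\mathcal{V}_k|}$ has $\bm{e}_z$-component $-(|\mathcal{V}_k|-1)h$ by telescoping, and the cycle sum vanishes identically regardless of $h$. (If your telescoping argument worked, the hypothesis on the pair $j,j+1$ would be superfluous, which contradicts the spiral configurations that motivate the in-plane constraint in the first place.)

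The correct argument is the one you call the ``cleaner route,'' combined with the case split you flag at the end: if $j\in\{1,\dots,|\mathcal{V}_k|-1\}$ the hypothesis gives $h=0$ directly and telescoping then handles $e_{|\mathcal{V}_k|}$; if $j=|\mathcal{V}_k|$ the hypothesis gives $\bm{e}_z^T e_{|\mathcal{V}_k|}=-(|\mathcal{V}_k|-1)h=0$, hence $h=0$. The paper's proof does exactly this case split, phrasing the second case via the recursion $\bm{e}_z^T\bm{x}_{i+2}=\bm{e}_z^T\bm{x}_2+i\,\bm{e}_z^T(\bm{x}_2-\bm{x}_1)$ borrowed from the proof of Lemma~\ref{Lemma:manifold_defined} rather than your edge-telescoping language, but the content is the same.
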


Since stability analysis using partial contraction is contingent upon a full row rank constraint matrix, we present a reduction of $\widetilde{V}$ that discards redundant constraints. Consider the following reduced constraint matrix $V$ with similar structural properties as $\widetilde{V}$:
\begin{align}
    &{V} := \begin{bmatrix} {V}^{(1)^T} & {V}^{(2)^T} & \cdots & {V}^{(L)^T} \end{bmatrix}^T,\\
    &V^{(k)} = \begin{cases} \widetilde{V}^{(k)} &\mbox{if } k = 1, 2, \\
                        (W_{r_{|\mathcal{V}_k|}}\otimes I_{3}) \mathcal{P}_{|\mathcal{V}_k|}\mathcal{R}_{\eta_k}E^{(k)} & k = 3,\ldots,L,
            \end{cases}
            \label{V_block_red}
\end{align}
where, as defined in Lemma \ref{Lemma:rot}, $W_{r_{|\mathcal{V}_k|}} =  [I_{|\mathcal{V}_k|-2}, 0_{(|\mathcal{V}_k|-2)\times 2}]$. That is, in-plane constraints have been removed from all but two (adjacent) faces within $\mathcal{Q}$, specifically, $\mathcal{Q}_1$ and $\mathcal{Q}_2$. The following lemma shows that the reduced constraints that stem from \eqref{V_block_red} are equivalent to the original set of constraints using $\widetilde{V}^{(k)}$, and that the resulting global constraint matrix $V$ is full row rank. 

\begin{lemma}[Minimal Representation of $V$]\label{Lemma:V_reduced}
Denote the set of equations $\widetilde{V}^{(k)}\bm{x} = \bm{0}$ where $\widetilde{V}^{(k)}$ has the form given in \eqref{poly_manifold} for $k = 1,\ldots,L$ as the full-constraint set. Similarly, denote the set of equations $V^{(k)}\bm{x} = \bm{0}$ where $V^{(k)}$ has the form given in \eqref{V_block_red} as the reduced-constraint set. Then, the solutions to the two sets of equations are identical. That is, 
\[
	\bm{x} \in \mathcal{M}_n   \Leftrightarrow V\bm{x} = \bm{0}.
\]
Furthermore, $V$ is full row rank.
\end{lemma}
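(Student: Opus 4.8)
The plan is to prove the two claims separately: first, equivalence of the reduced-constraint set with the full-constraint set ($\bm{x} \in \mathcal{M}_n \Leftrightarrow V\bm{x} = \bm{0}$), and second, that $V$ has full row rank. For the equivalence, the only difference between $\widetilde V$ and $V$ is that for faces $\mathcal{Q}_3,\ldots,\mathcal{Q}_L$ we have dropped the single in-plane row $W_{p_{|\mathcal{V}_k|}} \otimes \bm{e}_z^T$, retaining only the $3(|\mathcal{V}_k|-2)$ rotational rows. Clearly $\widetilde V \bm{x} = \bm{0} \Rightarrow V\bm{x} = \bm{0}$ since $V$'s rows are a subset of $\widetilde V$'s rows; the substantive direction is the converse. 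Here I would invoke Corollary \ref{corr:manifold_defined_red}: I need to argue that for each face $k \geq 3$, once the rotational constraints of that face hold \emph{and} some pair of adjacent robots $\{j,j+1\}$ of $\mathcal{Q}_k$ is already constrained to lie in a plane normal to $\bm{n}_k$, the full polygon (including its in-plane constraint) is recovered. The key structural fact to establish is that this ``seed'' in-plane condition propagates through the PPS: faces $\mathcal{Q}_1,\mathcal{Q}_2$ carry explicit in-plane constraints, so robots in $\mathcal{Q}_1\cup\mathcal{Q}_2$ lie in their respective face planes; then, traversing the tree $\mathcal{D}'$ from the root, each newly-added face $\mathcal{Q}_k$ shares an edge (hence two robots) with an already-processed face, and by the development rules $\mathcal{R}$ that shared edge lies in the plane of $\mathcal{Q}_k$ as well — this supplies exactly the adjacent pair needed to invoke Corollary \ref{corr:manifold_defined_red}, so $\mathcal{Q}_k$ becomes a genuine regular polygon normal to $\bm{n}_k$. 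Induction over the tree order then shows $V\bm{x} = \bm{0} \Rightarrow \widetilde V\bm{x} = \bm{0}$, giving the equivalence, i.e., the common solution set is exactly $\mathcal{M}_n$.

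For the rank claim, I would count rows and argue linear independence via the degrees-of-freedom bookkeeping. The number of rows of $V$ is $3(|\mathcal{V}_1|-2)+1 + 3(|\mathcal{V}_2|-2)+1 + \sum_{k=3}^{L} 3(|\mathcal{V}_k|-2)$. I would show this equals $3n - k_0$ where $k_0$ is the total dimension lost to the genuine rigid-body-type freedoms of the whole formation: three translations, one overall scale, one overall in-plane rotation (Remark \ref{rot_freedom} gives $3n-5$ per isolated polygon, and here the tree structure glues faces along shared edges so the per-face scalings and rotations are \emph{not} independent once two adjacent faces are pinned). Concretely, I would use the tree/vertex-span properties from Lemma \ref{lem:reduced_poly}: a spanning tree on $L$ faces has $L-1$ shared edges, and an Euler-type count on (robots, face-incidences, shared edges) ties $\sum_k |\mathcal{V}_k|$ to $n$ and $L$. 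The cleanest route is probably to verify $\dim \mathcal{M}_n = 5$ directly from Lemma \ref{lem:completion} (the completion is the convex hull of the vertex set, whose shape up to similarity is rigid, leaving exactly the $5$ affine-similarity freedoms, so $\dim\mathcal{M}_n = 5$), and then note $\operatorname{rank}(V) = 3n - \dim\mathcal{N}(V) = 3n - \dim\mathcal{M}_n = 3n-5$, which I would match against the row count computed above. If the row count equals $3n-5$ and the null space has dimension $3n-5$, the rows must be independent.

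The main obstacle I anticipate is the propagation argument for the in-plane condition across the tree, i.e., making rigorous that sharing an edge with an already-in-plane face forces the new face's shared edge into its own plane. This hinges on the precise meaning of the development rules in Definition \ref{def:develop} — the dihedral angle between adjacent faces is fixed, and the two shared vertices are common — so the shared edge is literally a common segment lying in \emph{both} face-planes; one must be careful that the rotational constraints of $\mathcal{Q}_k$, written in the $R_{\eta_k}$-rotated coordinates of \eqref{manifold_def1_p}, together with this one planar edge, genuinely close up to a planar polygon and do not leave a residual spiral, which is exactly what Corollary \ref{corr:manifold_defined_red} asserts. The secondary subtlety is the combinatorial row count: one must handle the fact that faces can share vertices without sharing edges (so $\sum_k|\mathcal{V}_k|$ over-counts $n$ in a way governed by the dual-graph tree), and verify the arithmetic $3\sum_k(|\mathcal{V}_k|-2) + 2 = 3n-5$ reduces to a clean identity $\sum_k |\mathcal{V}_k| - 2L = n - 1 - L + \tfrac{\text{(incidence correction)}}{3}$; I expect this to follow from the tree having $L-1$ edges once the vertex-incidence structure is written out, but it is the step most prone to off-by-one errors.
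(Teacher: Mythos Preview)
Your equivalence argument is essentially the paper's: both run induction along the tree order of the faces, invoking Corollary~\ref{corr:manifold_defined_red} at each step to show that once a new face $\mathcal{Q}_{k+1}$ shares an edge with an already-constrained face, its rotational constraints alone force it to be a planar regular polygon with normal $\bm{n}_{k+1}$. The paper interleaves this with the rank claim in a single induction, but the logical content of the equivalence half is the same as yours.

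The rank half, however, contains a concrete error. You assert $\dim\mathcal{M}_n=5$, carrying over the single-polygon count from Remark~\ref{rot_freedom}. That is wrong for the polyhedral case: the face normals $\bm{n}_k=R_{\eta_k}^T\bm{e}_z$ are \emph{fixed} data in the constraints \eqref{manifold_def1_p}--\eqref{manifold_def2_p}, so there is no residual rotational freedom once two adjacent faces have their in-plane constraints imposed. The surviving degrees of freedom are three translations and one scaling, giving $\dim\mathcal{M}_n=4$ (this is exactly what the Polyhedron Degrees of Freedom remark following Lemma~\ref{Lemma:V_reduced} records). Consequently the identity you set out to verify, $3\sum_k(|\mathcal{V}_k|-2)+2=3n-5$, is off by one; the correct target is $3n-4$, and your rank-nullity bookkeeping would not close. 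Your own warning about off-by-one errors was prescient.

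The paper sidesteps this by proving full row rank directly inside the same induction: at step $k+1$ it argues that each successive rotational constraint of $V^{(k+1)}$, enumerated around the polygon starting from the shared edge, involves at least one robot not appearing in $[V^{(k)}]$ or in the earlier rows of $V^{(k+1)}$, hence is independent of everything before it; the two in-plane rows (for $k=1,2$) are handled in the base case by observing that the second one kills the in-plane rotational freedom of $\mathcal{Q}_1$. This constructive argument avoids any global dimension count and is what you should use instead of the rigidity/rank-nullity route.
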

\begin{proof}
We assume without loss of generality that the faces $\mathcal{Q}_k$ (and thus the corresponding constraint matrices $V^{(k)}$) are ordered so that for any $j = 1, ..., L$ all faces $\mathcal{Q}_i$ along the unique path between $\mathcal{Q}_1$ and $\mathcal{Q}_j$ satisfy $i \leq j$.
Let $\bm{\mathcal{Q}_k} = \{\mathcal{Q}_1,\ldots,\mathcal{Q}_k\}$ denote a partial list of faces, constrained by the set of equations $[V^{(k)}]\bm{x} = \bm{0}$, where $[V^{(k)}] = \begin{bmatrix}V^{(1)^T} & \cdots & V^{(k)^T}\end{bmatrix}^T,\ k\leq L$. Similarly, let $[\widetilde{V}^{(k)}] = \begin{bmatrix}\widetilde{V}^{(1)^T} & \cdots & \widetilde{V}^{(k)^T}\end{bmatrix}^T$. We proceed by induction.

Base case $k = 2$: It is clear that $\mathcal{N}\left([V^{(2)}]\right) = \mathcal{N}([\widetilde{V}^{(2)}])$ as $[V^{(2)}] = [\widetilde{V}^{(2)}]$. To see that $[V^{(2)}]$ is full row rank, we note that $V^{(1)}$ is full row rank (Lemma~\ref{Lemma:manifold_defined}) and each successive rotational constraint within $V^{(2)}$ (counting around the polygon starting from the shared edge) involves at least one robot that does not belong to any previous constraints in $V^{(1)}$ or $V^{(2)}$. 
The in-plane constraint for $V^{(2)}$ further reduces the dimension of the solution set by constraining the rotational degree of freedom for $\mathcal{Q}_1$: there is only one orientation of $\mathcal{Q}_1$ (up to reflection, i.e. negative scaling) under which the edge shared by the two faces is orthogonal to $\bm{n}_2$. Therefore all constraints in $[V^{(2)}]$ are independent.

Inductive Hypothesis: Assume that the set of equations $[\widetilde{V}^{(k)}]\bm{x} = \bm{0}$ and $[V^{(k)}]\bm{x} = \bm{0}$ are equivalent and that $[V^{(k)}]$ is full row rank.

We now prove that $[\widetilde{V}^{(k+1)}]\bm{x} = \bm{0}$ and $[V^{(k+1)}]\bm{x} = \bm{0}$ are also equivalent and that $[V^{(k+1)}]$ is full row rank.
The inductive  hypothesis $[V^{(k)}]\bm{x} = \bm{0}$ encodes rules which assemble the faces of $\bm{\mathcal{Q}_k}$ (which form a tree by the ordering assumption) into a unique PPS.
Now, from our ordering assumption, $\mathcal{Q}_{k+1}$ must share exactly one edge (i.e., exactly 2 robots) with a face within $\bm{\mathcal{Q}_k}$, and that the edge must lie in a plane normal to $\bm{n}_{k+1}$. The tree structure of $\bm{\mathcal{Q}_k}$ prevents any additional shared edges. From Corollary \ref{corr:manifold_defined_red}, the rotational constraints encoded in $V^{(k+1)}$ are indeed sufficient to fully constrain $\mathcal{Q}_{k+1}$.
Thus, $[V^{(k+1)}]\bm{x} = \bm{0}$ and $[\widetilde{V}^{(k+1)}]\bm{x} = \bm{0}$ are equivalent.

To see that $[V^{(k+1)}]$ is full row rank, we note that each successive rotational constraint within $V^{(k+1)}$ involves at least one robot not represented by any of the existing constraints within $[V^{(k)}]$ or $V^{(k+1)}$. Then, by similar reasoning as for the base case, all constraints in $[V^{(k+1)}]$ are independent. 

By induction, both claims are proven.
 \end{proof}
\begin{remark}[Polyhedron Degrees of Freedom]
Note that $V$ has a total of $\left(3\sum\limits_{k=1}^{L}|\mathcal{V}_k| - 6L + 2\right)$ linearly independent equations for a total of $\left(3\sum\limits_{k=1}^{L}|\mathcal{V}_k| - 6L + 6\right)$ variables. Three degrees of freedom correspond to the three translational degrees of freedom. The final degree of freedom corresponds to scaling the polyhedron by some constant $\alpha \in \reals$. Applying a negative scaling factor may be interpreted as inverting the entire formation.
\end{remark}

\subsection{Polyhedral Formation Control Law}
Having characterized the desired formation as the null space of a full row rank matrix, we turn our attention to the control laws. The control law for each robot is given by the sum of a number of contributions, one for each face in which the robot represents a vertex. Specifically, let $\bm{u}_i^{(k)}$ denote the contribution to the control law for robot $i$ due to face $k$, obtained by generalizing the symmetric cyclic controller presented in Section \ref{general_cyclic}, namely:
\begin{equation}
\bm{u}_i^{(k)} \!=\! \sum\limits_{m=1}^{N_k} k^{(k)}_m \big[R_{m_s}^{(k)}(\bm{x}^{(k)}_{i+m} - \bm{x}^{(k)}_{i})  + R_{m_s}^{(k)^T}(\bm{x}^{(k)}_{i-m} - \bm{x}^{(k)}_{i})\big], 
\label{poly_cyclic_ind}
\end{equation}
where $N_k<|\mathcal{V}_k|-1$  is the look-ahead horizon for face $k$, $k_{m}^{(k)}>0$ is a gain, $R_{m_s}^{(k)} = R_{\eta_k}^TR_{m}^{(k)}R_{\eta_k}$ with $R_{m}^{(k)} = R_{m\pi/|\mathcal{V}_k|}$, and $\bm{x}^{(k)}_{i+m}$ and $\bm{x}^{(k)}_{i-m}$ are neighboring robots within face $k$ (modulo $|\mathcal{V}_k|$). The choice $R_{m}^{(k)} = R_{m\pi/|\mathcal{V}_k|}$ stems from the requirement that robots  converge to a polygon of fixed size within each face (see Remark \ref{Remark:cyclic_fixed}). 

The net control for each robot is then given by the superposition of the contributions from each face in which the robot is a vertex, i.e., 
\begin{equation} \label{poly_control_robot}
 \bm{u}_i = \sum\limits_{k:i_k\neq \infty}^{L} \bm{u}_i^{(k)}, \qquad i=1,\ldots,n. 
\end{equation}
The control laws in \eqref{poly_cyclic_ind} and \eqref{poly_control_robot} highlight the information requirements for each robot in the formation. In particular, we assume that each robot knows: (a) what face(s) it belongs to, (b) relative position with respect to its forward and rear neighbors within each face, (c) the number of robots in these faces, and (d) the rotation matrix $R_{\eta}$ that accompanies the relative position measurements. In a scenario where there are a large number of robots in the overall formation, these information requirements are fairly minimal, emphasizing the decentralized framework that underpins this work.

Using \eqref{poly_cyclic_ind} and \eqref{cyclic_modified}, the overall control vector stemming from face $k$ is given by
\[ 
\bm{u}^{(k)} =  -\sum\limits_{m=1}^{N_k} k_m^{(k)}\, \mathcal{L}_{m\eta}^{(k)}\bm{x}^{(k)} := -\mathcal{L}_{\eta}^{(k)}E^{(k)}\bm{x},
 \]
where $\mathcal{L}_{m\eta}^{(k)} = \mathcal{R}^{T}_{\eta_{k}}\mathcal{L}_{m}^{(k)}\mathcal{R}_{\eta_{k}} $ and $\mathcal{L}_{m}^{(k)} = {L}_m\otimes R_{m}^{(k)}  +  {L}_m^T\otimes R^{(k)^T}_{m}$. The overall closed-loop dynamics are then given by
\begin{equation} 
\dot{\bm{x}} = \bm{u} = \sum\limits_{k=1}^{L}  -E^{(k)^T}\mathcal{L}_{\eta}^{(k)}E^{(k)}\bm{x}.
\label{poly_dynamics} 
\end{equation}
For simplicity, we are considering zero internal dynamics. A sufficient condition for convergence to the desired formation, i.e., to the subspace $\mathcal M_n$, is provided by the following theorem.

\begin{theorem}[Polyhedron Convergence]\label{Theorem:poly_converge_th}
Assume 

\begin{equation} \label{3d_poly_converge}
J = \begin{bmatrix} V\left( \sum\limits_{k=1}^{L}  -E^{(k)^T}\mathcal{L}_{\eta}^{(k)}E^{(k)} \right) V^{T} \end{bmatrix} \prec 0. 
\end{equation}
Then, the closed-loop dynamics  \eqref{poly_dynamics} globally converge to a Johnson polyhedral formation, i.e., to the formation subspace $\mathcal M_n$.
\end{theorem}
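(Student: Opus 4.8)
The plan is to apply Corollary~\ref{co:applied_contraction} (convergence to a flow-invariant subspace) to the closed-loop system \eqref{poly_dynamics}, exactly as was done in the planar case. There are two things to verify before the corollary applies: (i) the subspace $\mathcal{M}_n$ is flow-invariant under the closed-loop dynamics, and (ii) the projected Jacobian is uniformly negative definite. Since the internal dynamics are taken to be zero, the closed-loop Jacobian is constant, namely $\sum_{k=1}^L -E^{(k)^T}\mathcal{L}_\eta^{(k)}E^{(k)}$, so ``uniformly'' negative definite reduces to plain negative definiteness, and hypothesis \eqref{3d_poly_converge} is precisely the statement that the projected version of this Jacobian (projected via the full-row-rank matrix $V$ of Lemma~\ref{Lemma:V_reduced}) is negative definite.

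First I would establish flow-invariance. By Lemma~\ref{cyclic_invariant} (applied face-by-face, using the generalized control law \eqref{cyclic_modified} with rotation $R_{\eta_k}$ as in Remark~\ref{Remark:Reta}), each single-face subspace $\mathcal{M}_{|\mathcal{V}_k|}^{(k)}$ is flow-invariant under the corresponding face contribution $-E^{(k)^T}\mathcal{L}_\eta^{(k)}E^{(k)}$; equivalently $\widetilde{V}^{(k)}(-E^{(k)^T}\mathcal{L}_\eta^{(k)}E^{(k)})\bm{x} = \bm{0}$ whenever $\bm{x}\in\mathcal{M}_{|\mathcal{V}_k|}^{(k)}$. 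One subtlety is that face $k$'s control contribution acts only on the coordinates of $\bm{x}^{(k)}$ but could in principle perturb $\bm{x}$ out of a \emph{different} face's subspace $\mathcal{M}_{|\mathcal{V}_j|}^{(j)}$; however, for a robot $i$ lying in both faces $j$ and $k$, the face-$k$ contribution $\bm{u}_i^{(k)}$ only involves relative displacements of face-$k$ neighbors, and when $\bm{x}\in\mathcal{M}_n\subset\mathcal{M}_{|\mathcal{V}_k|}^{(k)}$ these displacements are exactly those of a regular polygon, so the total velocity field at such $\bm{x}$ restricted to any face $j$ coincides with that produced by the planar symmetric controller on $\mathcal{Q}_j$ plus rigid (translation/scaling) motions — which Lemma~\ref{cyclic_invariant} and Remark~\ref{rot_freedom} show leave $\mathcal{M}_{|\mathcal{V}_j|}^{(j)}$ invariant. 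Hence $\widetilde{V}\dot{\bm{x}} = \bm{0}$ on $\mathcal{M}_n$, and since by Lemma~\ref{Lemma:V_reduced} the rows of $V$ are a (full-rank) sub-selection of the constraints whose common solution set is $\mathcal{M}_n$, we also get $V\dot{\bm{x}} = \bm{0}$ there; i.e., $\mathcal{M}_n$ is flow-invariant.

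Next, by Lemma~\ref{Lemma:V_reduced} the matrix $V$ is full row rank with $\mathcal{N}(V) = \mathcal{M}_n$, so by Remark~\ref{V_Vbar_T} there is an invertible $T$ with $V = T\bar{V}$ for an orthonormal projection matrix $\bar{V}$ of $\mathcal{M}_n$, and moreover $V\,J_{\mathrm{cl}}\,V^T \prec 0 \Leftrightarrow \bar{V}\,J_{\mathrm{cl}}\,\bar{V}^T \prec 0$, where $J_{\mathrm{cl}} = \sum_{k=1}^L -E^{(k)^T}\mathcal{L}_\eta^{(k)}E^{(k)}$ is the closed-loop Jacobian. Assumption \eqref{3d_poly_converge} is exactly $V\,J_{\mathrm{cl}}\,V^T \prec 0$, hence $\bar{V}\,J_{\mathrm{cl}}\,\bar{V}^T \prec 0$ (trivially uniformly, as it is constant). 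Combined with flow-invariance from the previous step, Corollary~\ref{co:applied_contraction} yields global exponential convergence of all trajectories to $\mathcal{M}_n$, which by construction \eqref{poly_manifold_net} is the Johnson polyhedral formation subspace.

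I expect the main obstacle to be the flow-invariance argument rather than the contraction step: one must carefully check that superimposing the per-face controllers $-E^{(k)^T}\mathcal{L}_\eta^{(k)}E^{(k)}$ does not break invariance of any individual $\mathcal{M}_{|\mathcal{V}_k|}^{(k)}$ — the shared-vertex robots receive contributions from multiple faces, and one has to use the fact that on $\mathcal{M}_n$ every face is already a regular polygon so that each contribution, evaluated at such a configuration, is a legitimate tangent vector to that face's subspace (translation plus scaling plus the in-plane rotation directions, cf. Remark~\ref{rot_freedom}), and these are preserved by every other face's control action as well. Everything else — reducing ``uniform'' to ``constant,'' transferring the LMI between $V$ and $\bar{V}$, and invoking the corollary — is routine given the machinery already developed.
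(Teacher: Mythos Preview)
Your overall structure---verify flow-invariance of $\mathcal{M}_n$, then invoke Corollary~\ref{co:applied_contraction} via the $V\leftrightarrow\bar V$ equivalence of Remark~\ref{V_Vbar_T}---is exactly the paper's, and the contraction half of your argument is fine.

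The flow-invariance half, however, is overcomplicated and contains a gap. You try to argue that the face-$k$ contribution, restricted to the robots of some other face $j$, acts as ``translation plus scaling plus in-plane rotation.'' That is not true in general: the face-$k$ controller moves only the robots belonging to $\mathcal{V}_k$, so on face $j$ it moves the shared vertices while leaving the non-shared vertices fixed---this is \emph{not} a rigid/scaling motion of $\mathcal{Q}_j$ and need not lie in the tangent space of $\mathcal{M}_{|\mathcal{V}_j|}^{(j)}$. Your Lemma~\ref{cyclic_invariant}/Remark~\ref{rot_freedom} reasoning does not close this.

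What you are missing is the paper's much simpler observation: the rotation angle in \eqref{poly_cyclic_ind} was deliberately chosen as $m\pi/|\mathcal{V}_k|$, i.e., the fixed-size case of Remark~\ref{Remark:cyclic_fixed}. For that specific angle, when face $k$ is already a regular polygon the forward and backward terms of the symmetric cyclic controller cancel exactly, so $\bm{u}^{(k)}(\bm{x}) = \bm{0}$ for every $k$ whenever $\bm{x}\in\mathcal{M}_n$. Hence the total control $\bm{u}=\sum_k E^{(k)^T}\bm{u}^{(k)}$ vanishes on $\mathcal{M}_n$, and flow-invariance is immediate---no cross-face bookkeeping is needed at all. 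This is equation~\eqref{poly_prop} in the paper's proof. Once you use that, the ``main obstacle'' you anticipated disappears and the rest of your write-up goes through verbatim.
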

\begin{proof}
Let $\bar{V}$ represent the orthonormal counterpart of $V$, whose existence is guaranteed given the results of Lemma \ref{Lemma:V_reduced}. First, we need to show that the dynamics in \eqref{poly_dynamics} are flow invariant. Indeed, as an immediate consequence of Remark \ref{Remark:cyclic_fixed} and the fact that the rotation angle is set equal to $m\pi/|\mathcal V_k|$, one has
\begin{equation}
\bm{x} \in \mathcal{M}_{n} \Rightarrow \bm{u}^{(k)}(\bm{x}) = \bm{0}, \qquad \text{for all } k = 1,\ldots,L,
\label{poly_prop}
\end{equation} 
and hence $\bm{x} \in \mathcal{M}_{n}  \Rightarrow \bar{V} \bm{u} =\bm{0}$, i.e., the dynamics in \eqref{poly_dynamics} are flow invariant.

Consider, then, the following auxiliary system for \eqref{poly_dynamics}:
\[ \dot{\bm{y}} = \bar{V} \left( \sum\limits_{k=1}^{L}  -E^{(k)^T}\mathcal{L}_{\eta}^{(k)} E^{(k)} \left(\bar V^T \bm{y} + \bar{U}^{T}\bar{U}\bm{x}_{p}\right) \right). \]
 Since the dynamics are flow invariant, and an invertible transformation exists between $V$ and $\bar{V}$, then by Remark \ref{V_Vbar_T} and by applying Corollary \ref{co:applied_contraction}, one obtains the claim. 

 \end{proof}

\begin{remark}[Conditions for Polyhedron Convergence]
Note that the condition given in \eqref{3d_poly_converge} is markedly more complex than \eqref{eq:eig_Theorem}, required for convergence to a single regular polygon. In particular, we now have the addition of cross-terms in the projected Jacobian due to the interactions between the various faces. Additionally, when working with the auxiliary system, we did not leverage the results of flow-invariant subspaces as they were introduced in Section \ref{sec:planar}. This is because if the global state vector $\bm{x}$ converges to one of the subspaces defined by $V^{(k)}$, then the global dynamics will not necessarily be flow-invariant with respect to the given face or any other faces in $\mathcal{M}_n$. This is due to two reasons: (a) only the rotational set of constraints are used to define faces other than the ones indexed by $k=1$ and $k=2$, which we know are insufficient on their own to describe a regular planar polygon, and (b) the symmetric cyclic control law for each robot as defined in \eqref{poly_control_robot} introduces coupling between various faces. Thus, convergence is predicated on the decay of the symmetric cyclic controller to zero for all faces.
\end{remark}

Figure \ref{fig:3dfor} shows simulation results demonstrating convergence to example 3D formations. 
\begin{figure}[htbp]
\vspace{-3mm}
\begin{subfigure}[t]{.24\textwidth}
	\centering
	\includegraphics[width=1\textwidth]{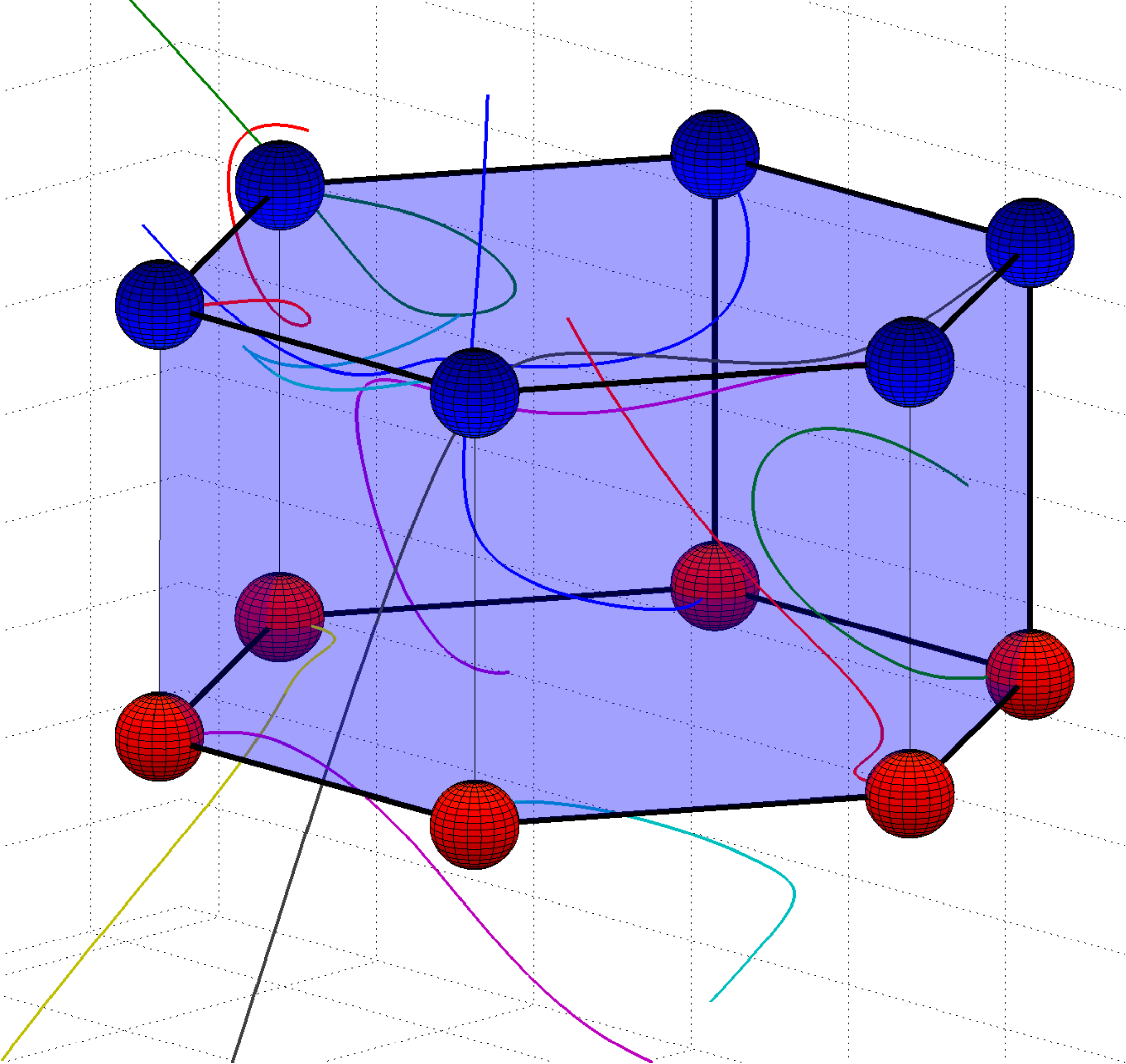}
	\caption{Hexagonal Box (12 robots)}
	\label{fig:hex_box} 
\end{subfigure}
\begin{subfigure}[t]{.24\textwidth}
	\centering
	\includegraphics[width=1\textwidth]{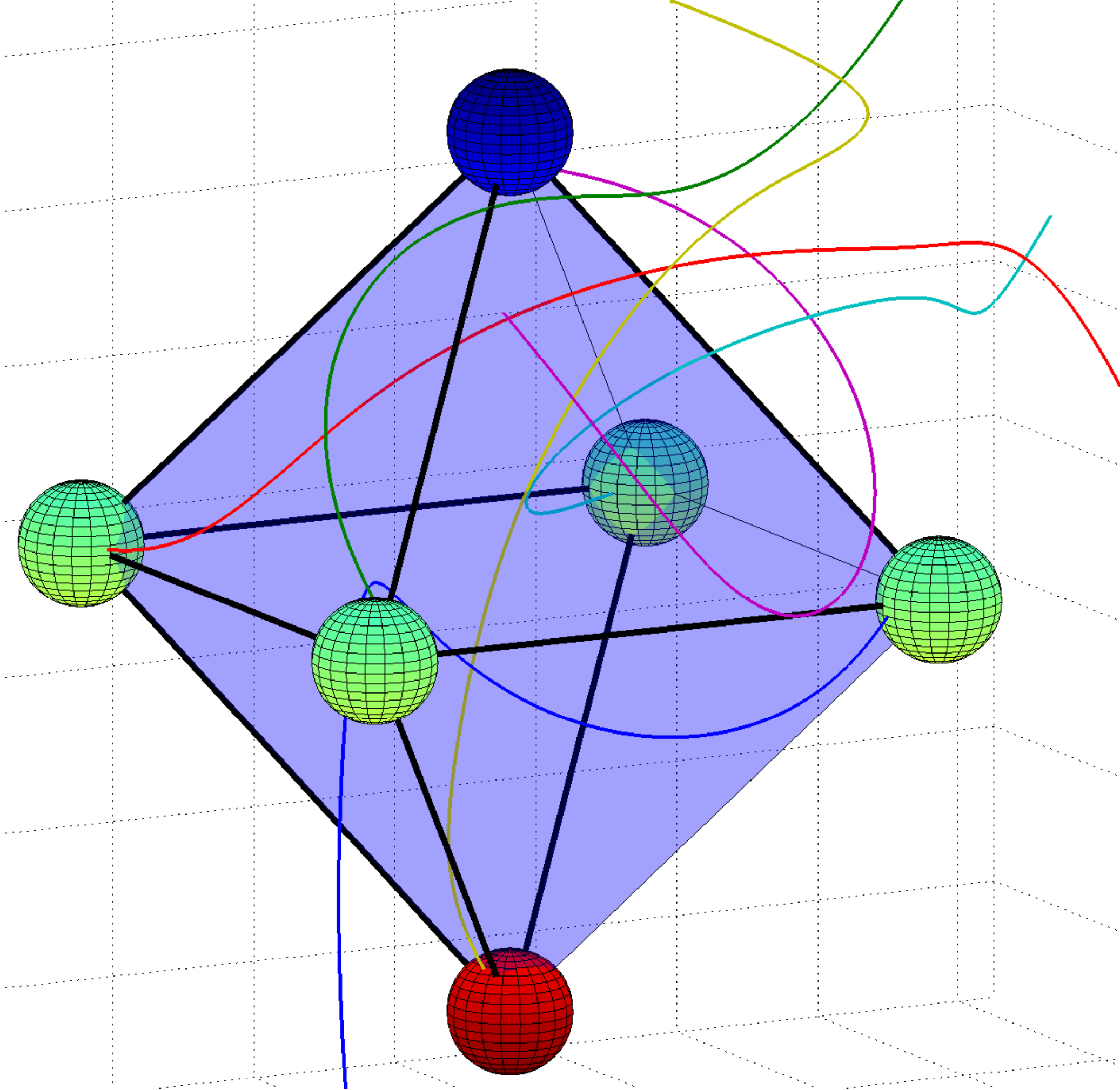}
	\caption{Octahedron (6 robots)}
	\label{fig:octahedron} 
\end{subfigure}
\caption{Convergence to polyhedra. Faces with bold edges correspond to the set $\mathcal{Q}$.}
\label{fig:3dfor}
\vspace{-2mm}
\end{figure}

\begin{remark}[Nonsymmetric Tree-Based Formations]
Lemma~\ref{lem:reduced_poly} sets forth the essential properties of polyhedral surface developments required for the preceding control law and convergence analysis to apply. That is, viable targets under this theory consist of any formation of robots expressable as the vertices of spatially-oriented regular polygons arranged according to a tree structure. Johnson solid formations are a particularly motivating example, given their symmetry, but other grid and mesh formations are also possible.
For instance, Figure \ref{fig:dome_shape} illustrates an example where a curved dome is represented by the vertices of a tree of squares ($\mathcal{Q}$), and the symmetric cyclic controller is defined and assembled using equations \eqref{poly_cyclic_ind} and \eqref{poly_control_robot}. Note that although there are 20 robots in this formation, each robot only requires relative position measurements with respect to three other robots (the robots in the top square require relative measurements with respect to four other robots) in the group. Thus, in addition to highlighting the applicability of our approach to the formation control of nonsymmetric formations, this example also underlines the sparsity of the decentralized formation control laws.
\begin{figure}[htbp]
\vspace{-5mm}
\begin{subfigure}[t]{.24\textwidth}
	\centering
	\includegraphics[width=1\textwidth]{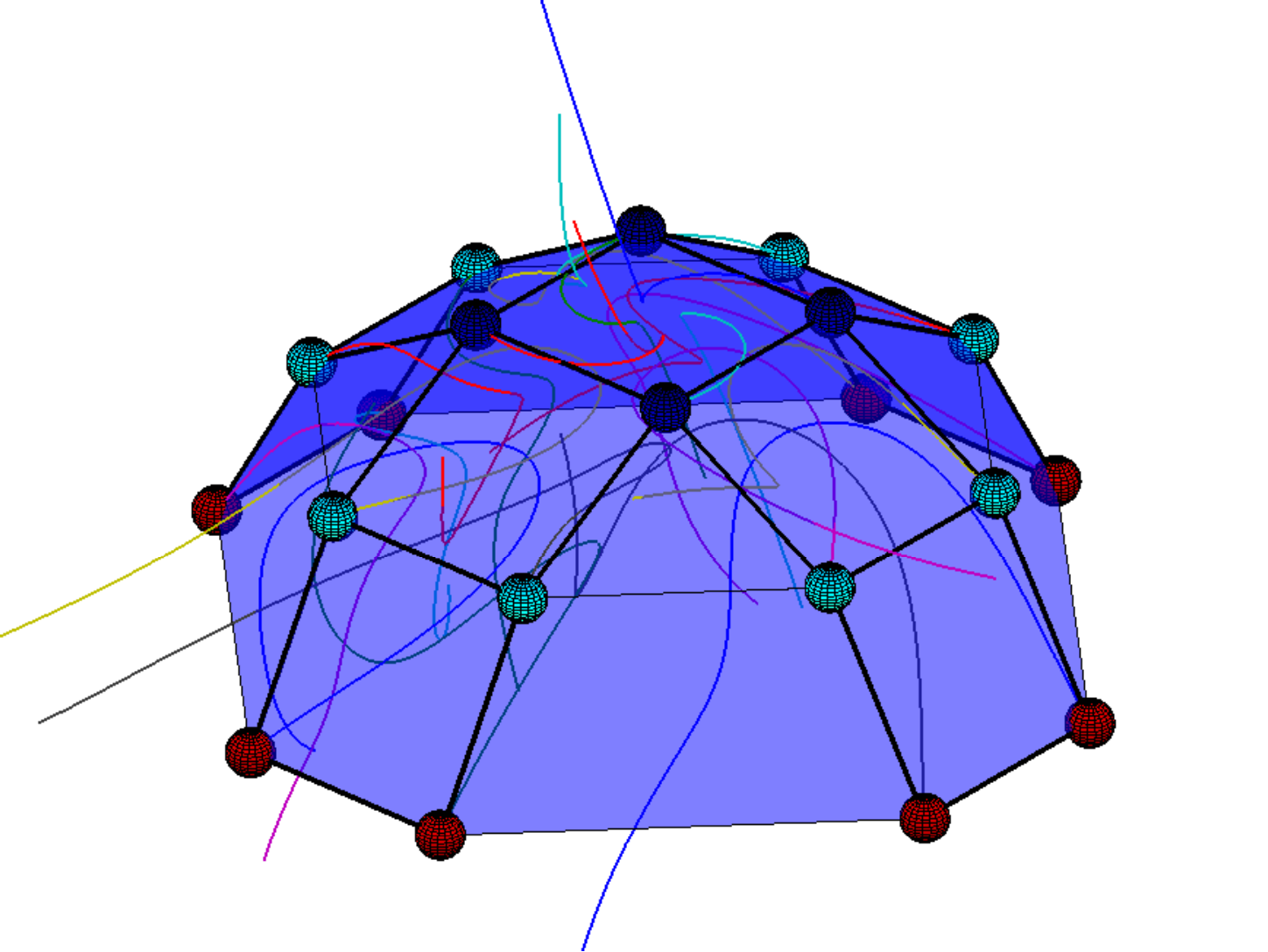}
	\caption{Dome shape formed by the convex hull of the robots.}
	\label{fig:dome} 
\end{subfigure}
\begin{subfigure}[t]{.24\textwidth}
	\centering
	\includegraphics[width=1\textwidth]{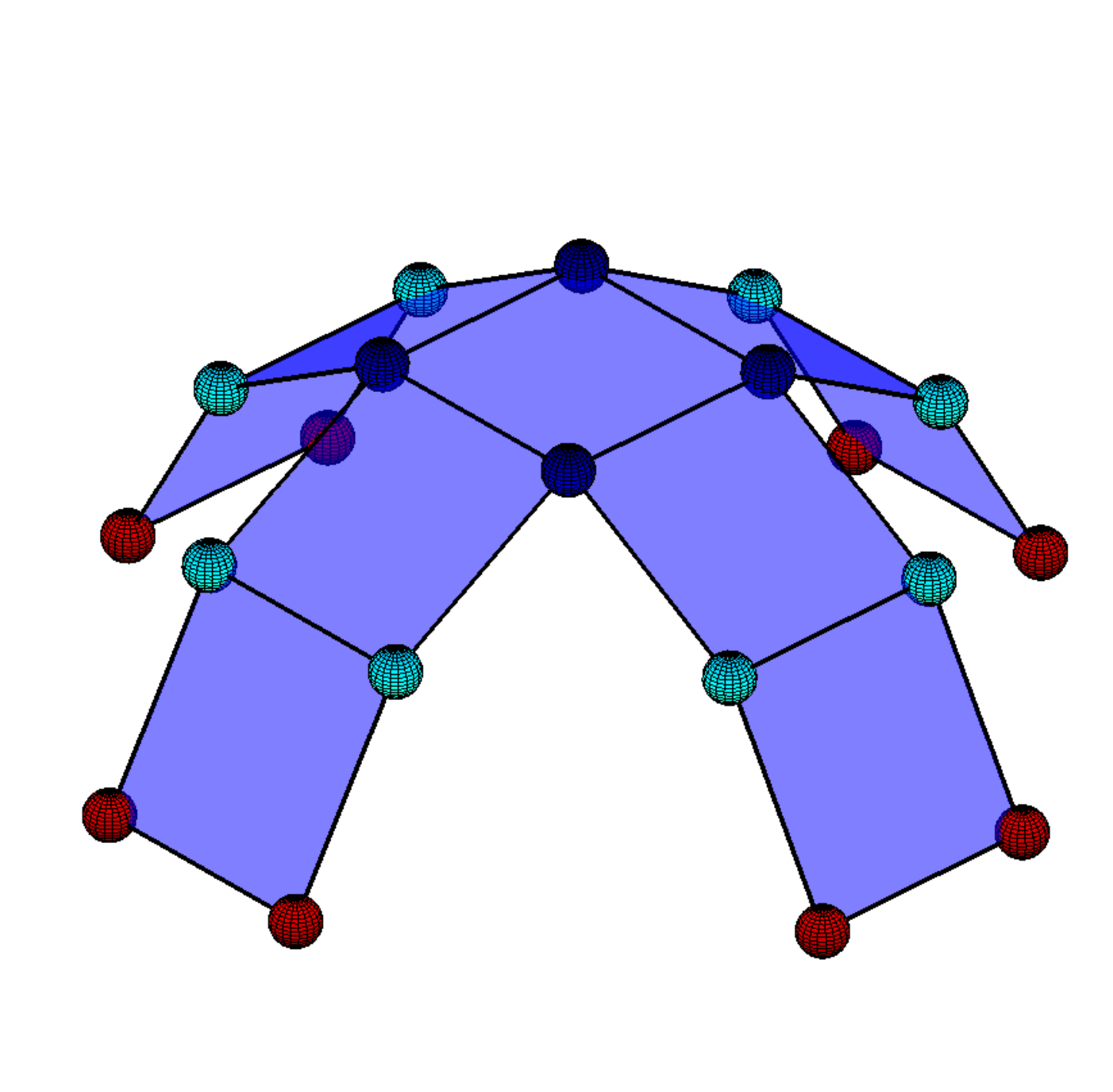}
	\caption{Tree of polygons (squares) $\mathcal{Q}$ used to derive the formation control laws.}
	\label{fig:octahedron} 
\end{subfigure}
\caption{Dome-shaped formation with 20 robots. The formation may be extended to collectively film sports or entertainment events.}
\label{fig:dome_shape}
\vspace{-5mm}
\end{figure}  
\end{remark}

In future work, we intend to investigate strategies for accommodating local robot failures. For instance, in the regular polygon case, a single point of failure simply requires the robots to reinitialize their indices within the group and subsequently converge to a smaller polygon. In the polyhedron case this can be problematic if the failed robot is coupled to multiple faces. In this scenario, a global restart may be required with a new configuration that fits the reduced set of robots.

\section{Extensions}\label{sec:additional}

In this section we present a number of extensions to the symmetric cyclic controller discussed  in Sections~\ref{sec:planar} and~\ref{sec:complex}. In Subsection~\ref{section_size}, we provide extensions to control the formation size, and investigate convergence properties under delays. Subsection~\ref{coll_avoid} details a heuristic control law for preventing collisions through use of Repulsive Potential Functions. Finally, in Subsection~\ref{sec:robust}, we study the disturbance rejection properties of the symmetric cyclic controller given in (\ref{eq:symmetric_control_simplified}) with respect to bounded additive disturbances.

\subsection{Control Over Formation Size}\label{section_size}
We first address control over the size of a regular polygon formation. By setting $\alpha_m = m\pi/n$ in (\ref{eq:symmetric_control}), the formation converges to a regular polygon of fixed, but uncontrolled size (in particular, the size depends on the initial conditions). In \cite{JLR-JJES:12}, the authors introduce an additional term to (\ref{eq:symmetric_control}) to control the size of the formation, however, the auxiliary system used to study convergence using partial contraction does not admit the necessary stationary zero solution. We present a different approach to the size control problem by simply modifying the rotation angle $\alpha_m$ embedded within the cyclic controller.

Let $\rho$ be the desired polygon side-length for the formation and define $p_i(\bm{x}) := 1 - \|\bm{x}_{i+1}-\bm{x}_i\|/\rho$ to be the normalized error in the inter-robot distance between robots $i$ and $i+1$, where $\|\cdot\|$ denotes the Euclidean norm. Let $\bar{p}(\bm{x})$ be the arithmetic average over all $p_i(\bm{x})$ (hereby referred to as the average inter-robot distance error), and define $f_{s}(\bar{p})$ to be a continuous odd scalar function that satisfies the properties: (a) $\bar{p}f_{s}(\bar{p})>0$ for $\bar{p}\neq 0$, (b) $|f_{s}(\bar{p})|\leq 1$, and (c) $\exists a \in \reals_{>0}$ such that $f'_{s}(0) > a$. The last (technical) property is required to ensure sufficient control authority near $\bar{p} = 0$ and will be formalized in Theorem \ref{thm:size_final}. Some examples for $f_s$ include the hyperbolic tangent function, or the saturation function ($|f_s(p_i)| = \min\{|p_i|, 1\}$). Consider the following modification to the symmetric cyclic control law introduced in (\ref{eq:symmetric_control}), now expressed as a \emph{hybrid} dynamical system:
\begin{equation}
\begin{split}
\bm{u}_i(t) = \sum\limits_{m = 1}^{N}k_m \big[R_m(\bm{x}) (\bm{x}_{i+m} \! &-\! \bm{x}_{i})\! + \!R_m(\bm{x})^T (\bm{x}_{i-m} \!- \!\bm{x}_{i})\big], \\ & t \in [k\tau, (k+1)\tau),\ k \in \mathbb{N}, 
\end{split}
\label{size_center_control}
\end{equation}
where $R_m(\bm{x})$ is a rotation matrix around $\bm{e}_z$ with rotation angle $m\pi/n + \alpha_s(\bm{x})$, where $\alpha_s(\bm{x}) := \alpha_{s_0}f_{s}(\bar{p}(\bm{x}((k-1)\tau)))$, $\alpha_{s_0} \in \reals_{>0}$ is a gain, and $\tau \in \reals_{>0}$ is a constant time lag. 

In equation \eqref{size_center_control} the rotation angle has been modified \emph{equally} for all robots by the amount $\alpha_{s}$, which is a function of the average inter-robot distance error $\bar{p}$. The time lag parameter $\tau$ accounts for the finite time required to compute $\bar{p}$, e.g., by using discrete average consensus techniques \cite{NL:96}. That is, the angle $\alpha_s$ used in the time interval  $[k\tau, (k+1)\tau)$ is computed using the average inter-robot distance error at time $(k-1)\tau$. To simplify notation, we will use $(\bm{x},\tau)$ to reference the continuous/discrete hybrid dynamics expressed in \eqref{size_center_control}.

As the rotation angle modification is the same for all robots, the net control law can still be expressed using circulant matrices:
\begin{equation}
\dot{\bm{x}} = \bm{u} = -\underbrace{\sum\limits_{m=1}^{N} k_m \, \underbrace{[L_m\otimes R_m(\bm{x}, \tau) \,+ L_m^T\otimes R_m(\bm{x}, \tau)^T ]}_{:=\mathcal{L}_{m}(\bm{x}, \tau)}}_{:=\mathcal{L}(\bm{x},\tau)}\bm{x}.
\label{symmetric_control_simplified_size}
\end{equation}
To analyze convergence to the desired formation, first define $\mathcal{M}_{n\rho}$ to be a subset of the original invariant subspace $\mathcal{M}_n$, corresponding to the space of regular polygon formations with inter-robot neighbor distance equal to $\rho$. Thus, 
\begin{equation}
\mathcal{M}_{n\rho} = \{\bm{x} : \bm{x} \in \mathcal{M}_{n} \wedge \|\bm{x}_{i+1}-\bm{x}_{i}\| = \rho, \forall i \}.
\label{size_manifold}
\end{equation}
To prove convergence to this subspace, we first start with a lemma to prove that the subspace $\mathcal{M}_n$ is indeed flow-invariant with respect to \eqref{symmetric_control_simplified_size}. 

\begin{lemma}[Flow-invariance] \label{Lemma:size_invariant}
The subspace $\mathcal{M}_n$ is flow-invariant with respect to the dynamics given in \eqref{symmetric_control_simplified_size}.
\end{lemma}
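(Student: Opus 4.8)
The plan is to show that if $\bm{x} \in \mathcal{M}_n$, then the instantaneous velocity $\dot{\bm{x}} = -\mathcal{L}(\bm{x},\tau)\bm{x}$ lies in the tangent space of $\mathcal{M}_n$, which, since $\mathcal{M}_n$ is a linear subspace characterized by $V\bm{x} = \bm{0}$, reduces to verifying $V\mathcal{L}(\bm{x},\tau)\bm{x} = \bm{0}$ whenever $V\bm{x} = \bm{0}$. The crucial observation is that, for a fixed time interval $[k\tau,(k+1)\tau)$, the scalar angular correction $\alpha_s(\bm{x})$ appearing inside each $R_m(\bm{x},\tau)$ is a \emph{constant} (it depends only on the frozen quantity $\bm{x}((k-1)\tau)$), so on each such interval the closed-loop operator $\mathcal{L}(\bm{x},\tau)$ coincides with the autonomous operator $\mathcal{L}$ of the ordinary symmetric cyclic controller \eqref{eq:symmetric_control_simplified}, but with rotation angles $\alpha_m = m\pi/n + \alpha_s$ in place of $\alpha_m$. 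In particular, $\mathcal{L}(\bm{x},\tau) \in \CR$ for every fixed interval, being a sum of block-circulant matrices of the form $L_m \otimes R_m + L_m^T \otimes R_m^T$.

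First I would invoke Lemma~\ref{cyclic_invariant}, which establishes flow-invariance of $\mathcal{M}_n$ under the symmetric cyclic control law \eqref{eq:symmetric_control_simplified} for \emph{any} choice of look-ahead $N$, gains $k_m>0$, and rotation angles embedded in the $R_m$. Since nothing in that lemma's hypotheses constrains the specific values of the rotation angles $\alpha_m$, the conclusion $V\mathcal{L}\bm{x} = \bm{0}$ for $\bm{x}\in\mathcal{M}_n$ holds verbatim when $\alpha_m$ is replaced by $m\pi/n + \alpha_s$ with $\alpha_s$ an arbitrary fixed scalar. Applying this observation on each interval $[k\tau,(k+1)\tau)$ gives $V\mathcal{L}(\bm{x},\tau)\bm{x} = \bm{0}$ for all $\bm{x}\in\mathcal{M}_n$ and all $t$, hence $\dot{\bm{x}} \in \mathcal{M}_n$ whenever $\bm{x}\in\mathcal{M}_n$, which is the definition of flow-invariance.

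A small amount of care is needed at the switching instants $t = k\tau$: a trajectory starting in $\mathcal{M}_n$ remains in $\mathcal{M}_n$ on $[0,\tau)$ by the above; by continuity $\bm{x}(\tau) \in \mathcal{M}_n$ (the subspace is closed); then the argument repeats on $[\tau,2\tau)$ with a possibly different frozen angle $\alpha_s$, and so on by induction over $k$. Since the conclusion of Lemma~\ref{cyclic_invariant} is insensitive to the value of $\alpha_s$, each inductive step is identical, so the trajectory stays in $\mathcal{M}_n$ for all $t \geq 0$.

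I do not anticipate a genuine obstacle here: the entire content is the realization that freezing $\bar{p}$ over each interval renders $\mathcal{L}(\bm{x},\tau)$ piecewise a member of $\CR$ with a uniformly-shifted rotation angle, reducing the claim to a direct corollary of Lemma~\ref{cyclic_invariant}. The only point worth writing out explicitly is that Lemma~\ref{cyclic_invariant} does not depend on the particular rotation angles used — this is immediate from its proof, which only exploits the block-circulant structure $\mathcal{L}_m \in \CR$ and the factorization $V = \mathcal{W}_n\mathcal{P}_n$ — together with the inductive handling of the switching times.
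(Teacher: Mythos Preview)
Your proposal is correct and follows the same approach as the paper, which simply observes that since the controller can still be written in the block-circulant form \eqref{symmetric_control_simplified_size}, Lemma~\ref{cyclic_invariant} applies directly. Your write-up is more detailed than the paper's one-line proof---in particular, your explicit treatment of the piecewise-constant angle and the inductive handling of the switching instants---but the underlying argument is identical.
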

\begin{proof}
Given that the symmetric cyclic controller can still be expressed using circulant matrices as in \eqref{symmetric_control_simplified_size}, by Lemma \ref{cyclic_invariant}, the claim follows.
 \end{proof}

Having shown that $\mathcal{M}_n$ (which contains $\mathcal{M}_{n\rho}$)  is flow-invariant with respect to our modified dynamics, we now present sufficient conditions to ensure that $\bm{x}$ converges to the manifold $\mathcal{M}_{n\rho}$. The proof for the following theorem relies on several intermediate results which are detailed in \iftoggle{arxiv}{the appendix}{\cite{SS-ES-MP:15EVb}}. 

\begin{theorem}[Polygon Convergence with Desired Size]\label{thm:size_final}
Let $C = 2\Gamma$, where 
\[
	\Gamma = \begin{cases} \dfrac{\sqrt{2(1-\cos(2\pi/n))}}{\sin(\frac{\pi}{n})}\sum\limits_{m=1}^{N}k_m &\mbox{ if }  $n$ \text{ is even,} \\
					     \dfrac{\sqrt{2(1-\cos(2\pi/n))}}{2\sin(\frac{\pi}{2n})}\sum\limits_{m=1}^{N}k_m &\mbox{ if }  $n$ \text{ is odd.}
			\end{cases}
\]
Assume
\begin{equation}
\begin{split}
&\inf_{\alpha_m\in[m\pi/n-\alpha_{s_0}, m\pi/n+\alpha_{s_0}]}\bigg(\min\limits_{\substack{ 1\leq i\leq n \\ k\in\{-1,0,1\}}} \  \sum\limits_{m=1}^{N} k_m \, \lambda^{(m)}_{ik}\ \bigg)   > \! 0, 
\end{split}
\label{eig_Theorem_size}
\end{equation}
where $\lambda^{(m)}_{ik}$ has the form given in Theorem \ref{contraction_th_orig}, and suppose the time lag parameter $\tau$ satisfies the following bound:
\begin{equation}
	\tau < \min \left\{ \dfrac{1}{C}, \dfrac{1}{8CT}\right\},
\label{tau_constraint_final}
\end{equation}
where $T$ denotes a positive constant such that $(1/2)T|\bar{p}| \leq |\sin\left(\alpha_{s_0}f_s(\bar{p})\right)| \leq T |\bar{p}|$ for $|\bar{p}|<1$ (if $f_s$ is the saturation function, $T = \alpha_{s_0}$). 

Then the robots converge to a regular polygon with the desired side-length $\rho$. That is, $\bm{x}$ converges to the manifold $\mathcal{M}_{n\rho}$.
\end{theorem}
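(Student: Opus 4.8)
The plan is to decompose the convergence analysis into two coupled subsystems: the ``shape'' subsystem, which tracks how fast $\bm{x}$ converges to the invariant subspace $\mathcal{M}_n$ (i.e., to \emph{some} regular polygon), and the ``size'' subsystem, which tracks the evolution of the average inter-robot distance error $\bar{p}$. First I would introduce the orthonormal projection $\bar{V}$ of the constraint matrix $V$ from Lemma~\ref{Lemma:rot}, and split the state as $\bm{x} = \bar{V}^T\bm{y} + \bar{U}^T\bar{U}\bm{x}$, where $\bm{y} = \bar{V}\bm{x}$ measures the distance to $\mathcal{M}_n$. By Lemma~\ref{Lemma:size_invariant}, $\mathcal{M}_n$ is flow-invariant under \eqref{symmetric_control_simplified_size} for every admissible rotation angle, so the $\bm{y}$-dynamics form a genuine auxiliary system; the condition \eqref{eig_Theorem_size}, which is just the uniform (over all $\alpha_m$ in the admissible window $[m\pi/n - \alpha_{s_0}, m\pi/n + \alpha_{s_0}]$) version of the eigenvalue condition in Theorem~\ref{contraction_th_orig}, guarantees that $\bar{V}\mathcal{L}(\bm{x},\tau)\bar{V}^T \prec 0$ uniformly, hence $\|\bm{y}(t)\|$ decays exponentially at a rate bounded below independently of the (time-varying, hybrid) rotation angle. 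This gives exponential convergence to $\mathcal{M}_n$ as a first step.

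Next I would analyze the size dynamics on (or near) $\mathcal{M}_n$. On $\mathcal{M}_n$ the formation is a regular polygon, and with all rotation angles equal to $m\pi/n + \alpha_s$, the common angular offset $\alpha_s$ causes each robot to spiral in or out along rays from the center of mass; the key computation is to express $\tfrac{d}{dt}\|\bm{x}_{i+1}-\bm{x}_i\|$ — equivalently $\dot{\bar{p}}$ — in terms of $\sin(\alpha_s)$ and the current side-length. This is where the constant $\Gamma$ (and hence $C = 2\Gamma$) enters: the factor $\sqrt{2(1-\cos(2\pi/n))}/\sin(\pi/n)$ (or its odd-$n$ analogue) is precisely the geometric conversion between the angular perturbation and the rate of change of neighbor distance, and it also serves as a Lipschitz/contraction constant controlling how quickly the polygon shape is perturbed away from $\mathcal{M}_n$ by a nonzero $\alpha_s$. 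Using properties (a)--(c) of $f_s$ together with the two-sided bound $(1/2)T|\bar{p}| \leq |\sin(\alpha_{s_0}f_s(\bar{p}))| \leq T|\bar{p}|$, I would show that, in the absence of the time lag, $\bar{p}$ satisfies a differential inequality of the form $\bar{p}\,\dot{\bar{p}} \leq -c\,\bar{p}^2$ for $|\bar{p}|$ small, forcing $\bar{p}\to 0$, i.e., convergence to $\mathcal{M}_{n\rho}$.

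The main obstacle — and the reason for the explicit bound \eqref{tau_constraint_final} on $\tau$ — is the hybrid/delayed nature of the controller: the angle $\alpha_s$ applied on $[k\tau,(k+1)\tau)$ is computed from $\bar{p}$ at time $(k-1)\tau$, two intervals stale, while $\bm{x}$ itself drifts during each interval both because $\bm{y}\neq 0$ (not yet on $\mathcal{M}_n$) and because a nonzero $\alpha_s$ is actively resizing the polygon. I would handle this by a sampled-data argument: over one interval of length $\tau$, bound the discrepancy between $\bar{p}(t)$ and the sampled value driving the control using the constant $C$ (which bounds $|\dot{\bar p}|$ via $\|\mathcal{L}(\bm{x},\tau)\|$-type estimates) and the Lipschitz constant $T$ of $\sin(\alpha_{s_0}f_s(\cdot))$; requiring $\tau < \min\{1/C,\ 1/(8CT)\}$ makes these perturbation terms small enough that the one-step map $\bar{p}(k\tau)\mapsto\bar{p}((k+1)\tau)$ is still a contraction toward $0$, and simultaneously that the coupling from the residual $\bm{y}$ (which is itself exponentially vanishing from the first step) does not destabilize the size loop. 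Combining the exponential decay of $\bm{y}$ from the partial-contraction step with the discrete contraction of $\bar{p}$ then yields joint convergence of $\bm{x}$ to $\mathcal{M}_{n\rho}$, completing the proof.
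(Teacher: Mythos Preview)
Your two-stage decomposition --- first partial contraction to $\mathcal{M}_n$ via the uniform eigenvalue bound \eqref{eig_Theorem_size}, then analysis of the scalar size error on $\mathcal{M}_n$, with a vanishing-perturbation argument to couple the two when $\bm{x}(0)\notin\mathcal{M}_n$ --- is exactly the paper's approach. Two small corrections are worth flagging. First, a nonzero $\alpha_s$ does \emph{not} perturb the state off $\mathcal{M}_n$: by Lemma~\ref{Lemma:size_invariant} the subspace is flow-invariant for every admissible rotation angle, so $C$ enters only as the rate constant in the on-manifold scalar ODE $\dot p_i = C\sin(\alpha_s)(p_i-1)$, not as a measure of shape degradation. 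Second, the paper does not close the delayed size loop by showing the one-step map $\bar p_k\mapsto\bar p_{k+1}$ is a contraction; because $\alpha_s$ is computed from $\bar p_{k-1}$, the exact recursion is second order, $p_{k+1}=(p_k-1)e^{C\sin(\alpha_s(p_{k-1}))\tau}+1$, and a naive one-step contraction estimate is not immediately available. Instead the paper shows that $p_k$ can change sign at most finitely often, after which the sequence is monotone and is squeezed to zero by a continuous-time comparison solution; the bound $\tau<1/C$ is what allows the exponential to be linearized as $e^{\zeta}\le 1+2\zeta$, and $\tau<1/(8CT)$ is precisely the condition $2CT\tau<1/4$ needed for the sign-preservation step.
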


\begin{remark}[Time Lag Bound]
Note that the bound for the time lag parameter $\tau$ given in \eqref{tau_constraint_final} can be relaxed by using a value for $\Gamma$ that is strictly smaller than the one given in Theorem \ref{thm:size_final}. The derivation of this relaxed bound is detailed in \iftoggle{arxiv}{the appendix}{\cite{SS-ES-MP:15EVb}}. 
\end{remark}
\begin{remark}[Control Over Formation Center]
The control laws may be trivially extended to also permit control over the geometric center of the formation. For instance, let $\bm{x}_c$ represent the desired center and let $\bm{x}_0:= (1/n)\sum_{i=1}^{n}\bm{x}_i$ be the instantaneous geometric center. In similar spirit to the continuous/discrete dynamics in \eqref{size_center_control}, the center control law $\bm{u}_c$ for robot $i$ may be expressed as:
\[
	\bm{u}_c(t) = k_c \left(  \bm{x}_c - \bm{x}_0((k-1)\tau) \right), t \in [k\tau, (k+1)\tau), k\in \mathbb{N},
\]
where $k_c \in \reals_{>0}$ is a gain and the time lag parameter $\tau$ accounts for the finite time required to compute $\bm{x}_0$ in a decentralized fashion (e.g., using message passing). By leveraging analysis similar to that presented in the proof for Theorem \ref{thm:size_final}, it is possible to derive a bound on the steady state error $\|\bm{x}_c - \bm{x}_0\|$. 
\end{remark}
\begin{remark}[Size Control Extension to Polyhedra]
The size controller discussed in this subsection may be trivially extended to the formation control of polyhedra by decoupling the polygon $\mathcal{Q}_1$ from all other polygons $\mathcal{Q}$ in the PPS. The net control for all robots in $\mathcal{Q}_1$ is given by the modified symmetric cyclic controller in \eqref{symmetric_control_simplified_size} plus a ``rotational controller" that rotates the robots within the plane such that the first face has an orientation compatible with adjacent faces within the PPS. The control for all other robots in the polyhedron is given by the standard symmetric cyclic controller as given in \eqref{poly_cyclic_ind} and \eqref{poly_control_robot}. The proof for convergence is then also decoupled by treating $\mathcal{Q}_1$ as an independent polygon with respect to the remaining PPS. An example rotational controller along with a proof of correctness is provided in \iftoggle{arxiv}{the appendix}{\cite{SS-ES-MP:15EVb}}.
\end{remark}

\subsection{Collision Avoidance} \label{coll_avoid}

For collision avoidance, one must account for the size of each robot in the formation. In particular, we assume that each robot is completely contained within a sphere of radius $r_1$. To design  a collision avoidance controller, we make use of Repulsive-Potential-Functions (RPF) \cite{DHK-HW-SS:06}. 

Consider a ``collision-detection" zone around each robot parametrized by two variables $r_1<r_2$, where $r_2$ denotes the radius of the boundary of a robot's detection region. The RPF between any two robots $i$ and $j$ can be written as follows:
\begin{align}
V_{ij}(d_{ij}) :=&  -\dfrac{(r_2-r_1)^2}{r_1-d_{ij}} + 2(r_2-r_1)\log (r_1 - d_{ij}) - d_{ij} \nonumber\\	
							&	-v(r_1,r_2), \ \mbox{if }  r_{1} <d_{ij}\leq r_2,
\label{eq:potential_fnc}
\end{align}
 and equals zero for $d_{ij}>r_2$, where $d_{ij} = \|\bm{x}_j-\bm{x}_i\| $ and $v(r_1, r_2) := 2(r_2 - r_1)\log(r_1 - r_2) - r_1$ is a constant. The normalized force magnitude can be derived from the RPF by taking the derivative with respect to $d_{ij}$ giving:
\begin{align}
f_{c_{ij}}(d_{ij}) &= \begin{cases} \text{undefined} &\mbox{if } d_{ij} \leq r_1, \\
				\dfrac{(d_{ij}-r_2)^2}{d_{ij}(d_{ij}-r_1)^2}&\mbox{if } r_1 < d_{ij} \leq r_2, \\
				0 &\mbox{if } d_{ij} >r_2.
			\end{cases}
\label{eq:coll_force}
\end{align}

Notice that both $V_{ij}$ and $f_{c_{ij}}$ asymptotically approach $+\infty$ as $d_{ij} \rightarrow r_1$, and are $\mathcal{C}^1$ continuous for $d_{ij}>r_1$.  We now formulate the collision avoidance controller for robot $i$ as follows:
\begin{equation}
\label{eq:u_coll_i}
\bm{u}_{\mathrm{coll}_i} =- \sum\limits_{j\neq i }^{n}f_{c_{ij}}(d_{ij})(\bm{x}_j-\bm{x}_i).
\end{equation}
Thus the collision avoidance controller adds a non-zero (flow-invariant with respect to $\mathcal{M}_n$) ``escape" velocity to the existing symmetric cyclic controller for robot $i$ if and only if there exists another robot $j$ within the $i^{th}$ robot's collision avoidance zone.

Notice that $f_{c_{ij}}$ is unbounded as $d_{ij} \rightarrow r_1$ which makes convergence analysis with respect to the subspace $\mathcal{M}_n$ intractable. In particular, it becomes difficult to upper bound the maximum eigenvalue of the Jacobian of the collision avoidance controller. In lieu of this analysis, we present Monte Carlo simulation results in Section \ref{quad_cyclic} for six quadcopters attempting to converge to a polygon while implementing the collision avoidance controller above.

\subsection{Robustness Analysis}
\label{sec:robust}
To conclude our analysis of the symmetric cyclic control algorithm, we investigate the robustness properties of the closed-loop dynamics given in \eqref{eq:symmetric_control_simplified} in the presence of additive disturbances. We quantify robust performance by the Euclidean norm of the deviation of the perturbed system trajectory from the nominal contracting trajectory with respect to the metric $\bar{V}^{T}\bar{V}$, where $\bar{V}$ is the orthonormal counterpart of the nominal projection matrix as defined in \eqref{eq:Vr_formula}. The analysis presented in this section is restricted to zero internal dynamics (i.e., $\bm{g}(\bm{x}) = \bm{0}$) and to convergence to regular polygons. An analogous proof can be derived for the Johnson polyhedron case.

Let $\bm{z}(t) = \bar{V}\bm{x}(t)$ represent the dynamics of the unperturbed system under the action of the symmetric cyclic controller \eqref{eq:symmetric_control_simplified}. Then,
\begin{equation}
 \dot{\bm{z}}(t) = -\bar{V}\mathcal{L}\bm{x}(t). 
\label{eq:robust_nom}
\end{equation}
Consider the perturbed dynamics $\bm{z}_d(t)$ under an additive state and/or time-dependant disturbance: 
\begin{equation}
\dot{\bm{z}}_d(t) =  -\bar{V}\mathcal{L}\bm{x}_d(t) + \bm{d}(\bm{x}_d,t).
\label{eq:robust_pert}
\end{equation}
Here the disturbance is measured in the transformed $\bm{z}$ coordinates. Let $\bm{\delta}(t)$ represent the difference between the nominal and perturbed trajectories at time $t$. Specifically, let $\bm{\delta_z}(t) := \bm{z}_d(t) - \bm{z}(t)$ and $\bm{\delta_x}(t) := \bm{x}_d(t) - \bm{x}(t)$. Note that both $\bm{\delta_z}$ and $\bm{\delta_x}$ equal $\bm{0}$ at $t = 0$ as the trajectories are assumed to start from the same initial conditions. Then, 
\begin{equation}
\dot{\bm{\delta_z}}(t) = -\bar{V}\mathcal{L}\,\bm{\delta_x}(t) +  \bm{d}(\bm{x}_d,t).
\label{eq:robust_diff}
\end{equation}
Let  $\bar{R}^2(t) := \bm{\delta_z}^T(t)\bm{\delta_z}(t)$; $\bar{R}$ represents the Euclidean distance of the perturbed trajectory from the nominal one in the transformed coordinates, referred to as the \emph{formation error}. The following theorem characterizes the robustness of the symmetric cyclic controller.

\begin{theorem}[Cyclic Control Robustness]\label{Theorem:robustness_bound}
Assume that the nominal system is contracting with contraction rate $\Lambda = -\lambda_{\min}\left(\bar{V}\mathcal{L}\bar{V}^T\right) < 0$ and that the disturbance is norm bounded with  upper bound $\bar{d}$. Then the formation error  in the transformed $\bm{z}$ coordinates, that is $\bar{R}$, is upper bounded as
\begin{equation}
\bar{R}(t) \leq \dfrac{\bar{d}}{\Lambda}(e^{\Lambda t} - 1).
\label{robust_bound}
\end{equation}
\end{theorem}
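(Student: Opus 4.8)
The plan is to derive a differential inequality for $\bar{R}(t)$ and then apply a comparison lemma (essentially Gr\"onwall-type reasoning). First I would compute the time derivative of $\bar{R}^2(t) = \bm{\delta_z}^T(t)\bm{\delta_z}(t)$, obtaining $\frac{d}{dt}\bar{R}^2 = 2\bm{\delta_z}^T\dot{\bm{\delta_z}}$, and substitute the expression for $\dot{\bm{\delta_z}}$ from \eqref{eq:robust_diff}. This yields
\[
\frac{d}{dt}\bar{R}^2 = -2\bm{\delta_z}^T\bar{V}\mathcal{L}\,\bm{\delta_x} + 2\bm{\delta_z}^T\bm{d}(\bm{x}_d,t).
\]
The key step is to rewrite the first term in terms of $\bm{\delta_z}$ alone. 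Since the closed-loop dynamics are flow-invariant with respect to $\mathcal{M}_n$ (Lemma \ref{cyclic_invariant}) and $\bm{z} = \bar{V}\bm{x}$, the component of $\bm{\delta_x}$ orthogonal to $\mathcal{M}_n$ is exactly $\bar{V}^T\bm{\delta_z}$, and the component within $\mathcal{M}_n$ is annihilated by $\mathcal{L}$ up to the projection by $\bar V$ (using $\bar V^T\bar V + \bar U^T\bar U = I_n$ and $V\bm{g} = \bm{0}$ together with $\mathcal{L}$ vanishing on $\mathcal{M}_n$). Hence $\bm{\delta_z}^T\bar{V}\mathcal{L}\,\bm{\delta_x} = \bm{\delta_z}^T(\bar{V}\mathcal{L}\bar{V}^T)\bm{\delta_z}$, which is bounded below by $-\Lambda \,\bar{R}^2$ by definition of $\Lambda = -\lambda_{\min}(\bar V\mathcal L\bar V^T)$.

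For the disturbance term, I would apply the Cauchy--Schwarz inequality: $\bm{\delta_z}^T\bm{d}(\bm{x}_d,t) \leq \|\bm{\delta_z}\|\,\|\bm{d}\| \leq \bar{R}\,\bar{d}$. Combining, $\frac{d}{dt}\bar{R}^2 \leq 2\Lambda\bar{R}^2 + 2\bar{d}\,\bar{R}$ (note $\Lambda$ as defined in the theorem statement carries the sign so that contraction means $\Lambda < 0$; I will track the sign convention carefully here, since the theorem writes $\Lambda = -\lambda_{\min}(\cdot) < 0$ yet states the bound with $e^{\Lambda t}$, so the decaying exponential requires $\Lambda<0$ and the inequality direction must be consistent). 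Since $\frac{d}{dt}\bar R^2 = 2\bar R\,\dot{\bar R}$ wherever $\bar R > 0$, dividing through gives the scalar differential inequality $\dot{\bar{R}}(t) \leq \Lambda\bar{R}(t) + \bar{d}$.

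Finally I would invoke the comparison lemma: the solution of $\dot{w} = \Lambda w + \bar{d}$ with $w(0) = 0$ (recall $\bm{\delta_z}(0) = \bm{0}$, so $\bar R(0)=0$) is $w(t) = \frac{\bar{d}}{\Lambda}(e^{\Lambda t} - 1)$, and any $\bar R$ satisfying the inequality with the same initial condition stays below $w$, giving \eqref{robust_bound}. I would also note the care needed at points where $\bar R$ vanishes (the function $\bar R = \sqrt{\bar R^2}$ is only Lipschitz, not smooth, there), handled either via the right Dini derivative or by arguing directly on $\bar R^2$ and comparing with $w^2$. The main obstacle is the first step: showing rigorously that $\bm{\delta_z}^T\bar V\mathcal L\,\bm{\delta_x}$ reduces to the quadratic form $\bm{\delta_z}^T(\bar V\mathcal L\bar V^T)\bm{\delta_z}$, which hinges on flow-invariance ensuring that the in-manifold part of $\bm{\delta_x}$ does not contribute; everything after that is routine.
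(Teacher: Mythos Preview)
Your proposal is correct and follows essentially the same route as the paper: decompose $\bm{\delta_x} = \bar V^T\bm{\delta_z} + \bar U^T\bar U\,\bm{\delta_x}$, use flow-invariance (Lemma~\ref{cyclic_invariant}) to kill the in-manifold piece under $\bar V\mathcal L$, bound the resulting quadratic form by $\Lambda\bar R^2$ and the disturbance term via Cauchy--Schwarz, and apply the comparison lemma to $\dot{\bar R}\le \Lambda\bar R+\bar d$ with $\bar R(0)=0$. One small wording point: $\mathcal L$ does not itself vanish on $\mathcal M_n$; rather $\mathcal L$ maps $\mathcal M_n$ into $\mathcal M_n$, so it is $\bar V\mathcal L$ that annihilates $\bar U^T\bar U\,\bm{\delta_x}$---which is exactly what you need and what you write elsewhere.
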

\begin{proof}
Starting from (\ref{eq:robust_diff}), $\bm{\delta_x} = (\bar{V}^T\bar{V} + \bar{U}^T\bar{U})\bm{\delta_x} = \bar{V}^T\bm{\delta_z} + \bar{U}^T\bar{U}\bm{\delta_x}$. Now, $\bar{U}^T\bar{U}\bm{\delta_x} \in \mathcal{M}_n$, so $\bar{V}\mathcal{L}\,(\bar{U}^T\bar{U}\,\bm{\delta_x}) = \bm{0}$. Thus, 
\[ 
\dot{\bm{\delta_z}}(t) = -\bar{V}\mathcal{L} \bar{V}^T\bm{\delta_z} +  \bm{d}(\bm{x}_d,t).
 \]
Pre-multiplying by $2\bm{\delta_z}(t)$ we obtain:
\[ 
2\bm{\delta_z}^T\dot{\bm{\delta_z}} = \dfrac{d(\bm{\delta_z}^T\bm{\delta_z})}{dt} = -2\bm{\delta_z}^T\left(\bar{V}\mathcal{L}\bar{V}^T\right)\bm{\delta_z} +  2\bm{\delta_z}^T\bm{d}(\bm{x}_d,t).
 \]
Letting $\bar{R}^2 = \bm{\delta_z}^T\bm{\delta_z}$, one can write
\[ 
\dfrac{d\bar{R}^2}{dt} = 2\bar{R}\dfrac{d\bar{R}}{dt} =  -2\bm{\delta_z}^T\left(\bar{V}\mathcal{L}\bar{V}^T\right)\bm{\delta_z} +  2\bm{\delta_z}^T\bm{d}(\bm{x}_d,t).
\]
Noting that:
\[ 
-\bm{\delta_z}^T\left(\bar{V}\mathcal{L}\bar{V}^T\right)\bm{\delta_z} \leq -\lambda_{\min}\left(\bar{V}\mathcal{L}\bar{V}^T\right)\bm{\delta_z}^T\bm{\delta_z} = \Lambda\bar{R}^2,
\]
and
\[ 
\bm{\delta_z}^T\bm{d}(\bm{x}_d,t) \leq \|\bm{\delta_z}\|\|\bm{d}(\bm{x}_d,t)\| = \bar{R}\|\bm{d}(\bm{x}_d,t)\| ,
\] 
we get:
\[ 
\bar{R} \dot{\bar{R}} \leq \Lambda\bar{R}^2 + \bar{R}\|\bm{d}(\bm{x}_d,t)\| .
 \]
Thus, 
\begin{subequations}
\begin{align}
\dot{\bar{R}} &\leq \Lambda\bar{R} + \|\bm{d}(\bm{x}_d,t)\|, \label{robust_diff_bound} \\ 
&\leq \Lambda \bar{R} + \bar{d} \label{robust_diff_bound2}
\end{align}
\end{subequations}
By the comparison theorem \cite{HKK:02}, the claim follows. 
 \end{proof}
\begin{remark}
The bound $\bar{R}(t)$ describes the time-varying deviation from a nominal contracting trajectory. In the limit $t\rightarrow \infty$, the steady state bound $\bar{R}_{ss} := \bar{d}/|\Lambda|$ provides a measure of the steady state formation error. Physically, it represents a bound on the degree of asymmetry at each node of the formation, as measured by the set of linear constraints used to define the subspace, that is, \eqref{manifold_def1} and \eqref{manifold_def2}. The inverse proportionality between $\bar{R}_{ss}$ and $|\Lambda|$ highlights a trade-off between increased robustness, control saturation, and sensor requirements. 
\end{remark}


Figure \ref{fig:robust_plot} provides insights into the tightness of the bound. Here, disturbances were introduced into the system via error in the control rotation angles. The rotation angle in $R_{m}$ for robot $i$ is equal to $\tilde{\alpha}_i(t) = \alpha_0 + \delta_{\alpha_i}(t)$ where $\alpha_0 = m\pi/n$ is the nominal control angle and $\delta_{\alpha_i}(t)$ is a perturbation, sampled randomly from a uniform distribution over the range $[\underline{\delta}_{\alpha_i},\bar{\delta}_{\alpha_i}]$ where $\underline{\delta}_{\alpha_i},\ \bar{\delta}_{\alpha_i}\in[-1, 1]^{\circ}$. The plotted curves in Figure \ref{fig:robust_plot_actual} correspond to the actual formation error $\bar{R}$, the bound in \eqref{robust_bound}, and the numeric integration of the expression in \eqref{robust_diff_bound}. Figure \ref{fig:robust_plot_percent} quantifies the tightness of \eqref{robust_bound} and \eqref{robust_diff_bound} by plotting the percentage difference with respect to the {actual deviation. 

\begin{figure}[htbp]
\vspace{-3mm}
\centering
\begin{subfigure}[t]{0.23\textwidth}
	\includegraphics[width=1\textwidth]{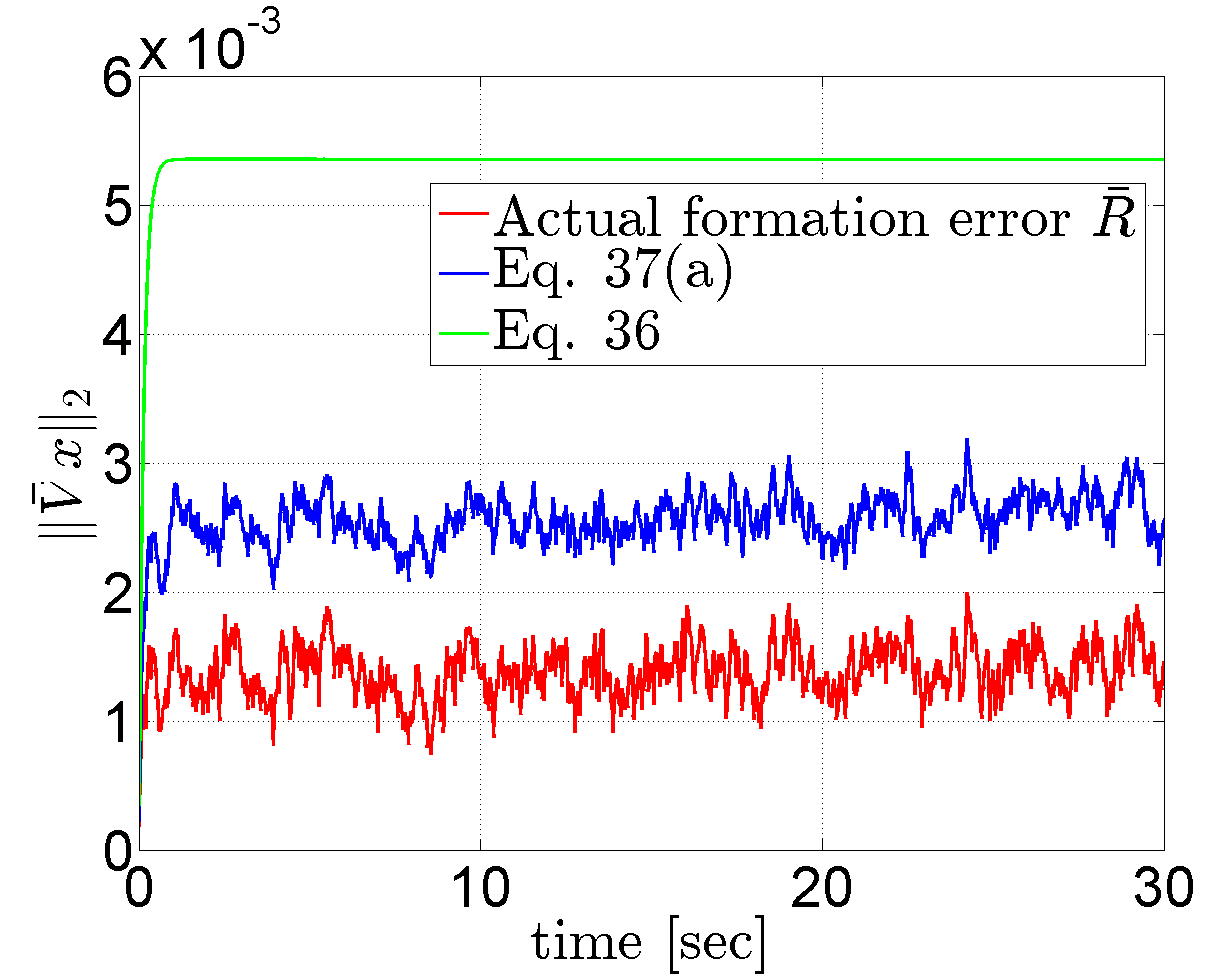}
	\caption{Comparison for $\|\bm{\delta}_{z}\|$.}
	\label{fig:robust_plot_actual} 
\end{subfigure}
\begin{subfigure}[t]{0.23\textwidth}
	\includegraphics[width=1.1\textwidth]{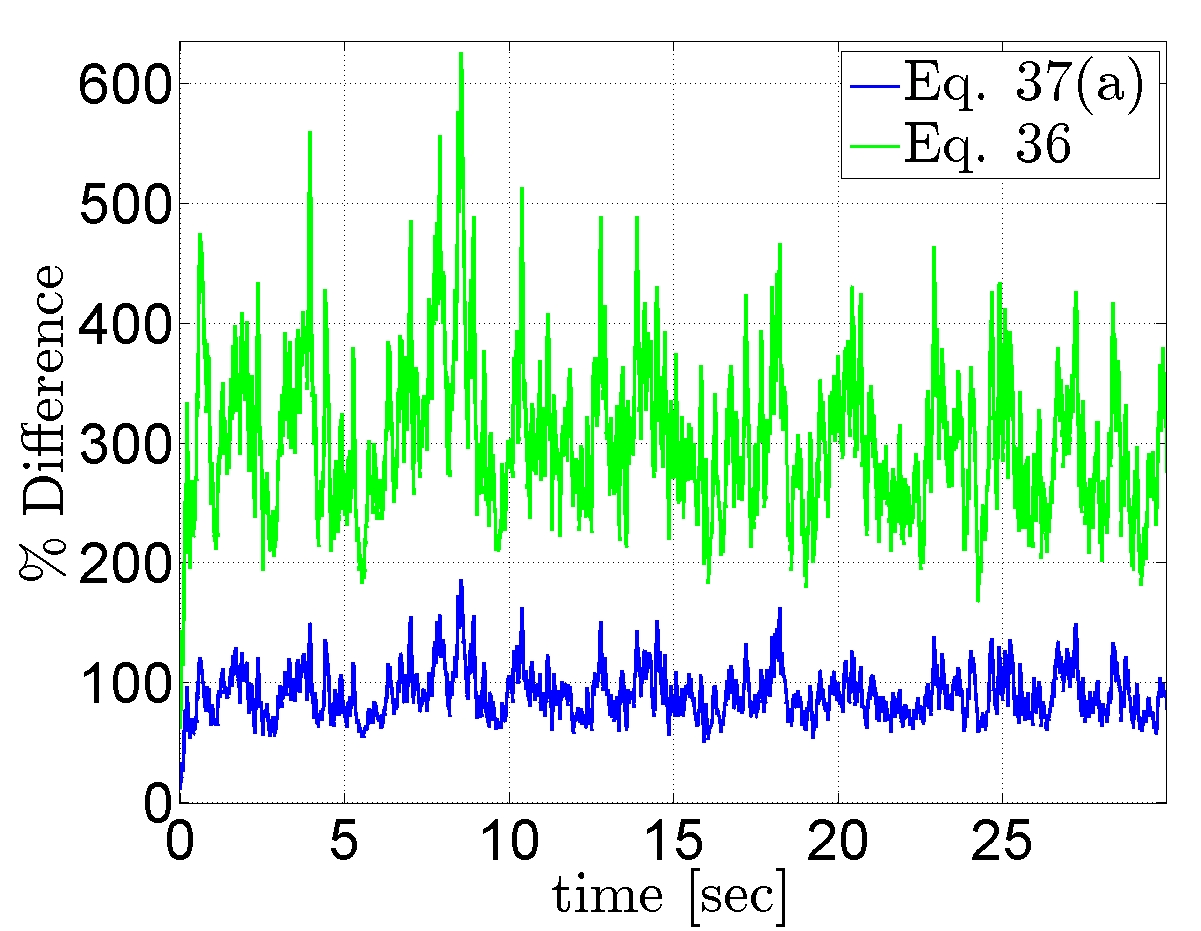}
	\caption{Bound tightness.}
	\label{fig:robust_plot_percent} 
\end{subfigure}
\caption{Tightness for robustness bounds. Parameters: $n = 6, N = 2, \Lambda = -6.928,\bar{d} = 0.065$. }
\label{fig:robust_plot}
\vspace{-2mm}
\end{figure}

\section{Formation Control of Quadcopters}\label{quad_cyclic}

In this section we demonstrate the use of the symmetric cyclic control laws developed in Sections \ref{sec:planar},  \ref{sec:complex} and \ref{section_size} on a fleet of quadcopters. 

\subsection{Quadcopter Dynamics and Control}

Let $\bm{x}_i$ and $\bm{v}_i$ denote the position and inertial velocity respectively of the $i^{th}$ quadcopter, resolved in an inertial frame with a downward pointing $Z$ axis. Denote $\bm{q}_i$ to be the attitude quaternion of the $i^{th}$ quadcopter with associated Euler angles: Yaw ($\psi_i$), Pitch ($\theta_i$), and Roll ($\phi_i$) with rotation order $ZYX$. The angular velocity of the $i^{th}$ quadcopter with respect to the inertial frame, resolved in the quadcopter body frame, is given by $\bm{\omega}_i$. Let $R_{ib}^i$ denote the rotation matrix to transform vectors in the inertial frame to the $i^{th}$ quadcopter body frame. Finally let $m_{q_i}$ and $I_i$ denote the mass and moment of inertia tensor, respectively, for the $i^{th}$ quadcopter. 

The translational dynamics of the quadcopter are given by:
\begin{equation}
\begin{split}
	\dot{\bm{x}}_i &= \bm{v}_i, \\
	\dot{\bm{v}}_i &= R_{ib}^{i^T} \begin{bmatrix} 0 \\ 0 \\ -T_i/m_{q_i} \end{bmatrix} + \begin{bmatrix} 0 \\ 0 \\ g \end{bmatrix},
\label{quad_trans}
\end{split}
\end{equation}
where $T_i$ is the net thrust produced by all four propellers and $g$ is the gravitational acceleration. Note that the direction of positive thrust is opposite the body $z-$axis. The rotational dynamics of the quadcopter are given by:
\begin{equation}
\begin{split}
	\dot{\bm{q}}_i &= \dfrac{1}{2}\begin{bmatrix}0 \\ \bm{\omega}_i \end{bmatrix} \odot \bm{q}_i, \\
	\dot{\bm{\omega}}_{i} &= I_i^{-1}\bigg( \bm{M}_{i} - \bm{\omega}_{i} \times I_i\bm{\omega}_{i} \bigg),
\label{quad_rot}
\end{split}
\end{equation}
where $\odot$ denotes the quaternion product and $\bm{M}_i$ is the net torque on the quadrotor as a result of differential thrust forces and reaction moments generated by the motor/propeller pairs. As the dynamics of the motors are significantly faster than those of the quadcopter as a rigid body, we assume that the net thrust force and moment generated by the propellers can be commanded instantaneously. 

Notice that the rotation matrix $R_{ib}^i$ couples the attitude and translational dynamics, thereby resulting in a fairly complex nonlinear dynamical system. In order to incorporate the symmetric cyclic control laws for formation control of quadcopters, we employ a two-level control hierarchy where, at the high level, the symmetric cyclic control laws in (\ref{eq:symmetric_control}), \eqref{poly_control_robot} or \eqref{size_center_control} are used to generate a \emph{desired} inertial velocity reference for each quadcopter. The lower level control system then simply tracks this reference, and incorporates both thrust and attitude control. Several velocity tracking controllers for quadcopters have already been presented in the literature \cite{RM-VK-PC:12,AK-DM-CP-VK:13}. In \iftoggle{arxiv}{the appendix}{\cite{SS-ES-MP:15EVb}} we present an example controller used for numerical simulations. 

\subsection{Numerical Simulations and Experiments}

We now present simulation and experimental results demonstrating the application of the above two-level control hierarchy on a group of six quadcopters. The quadcopters were initialized randomly in hover mode ($\bm{v}_i = \bm{0}$) around the point $[0,0,-2]^T$. The size control parameters were: desired inter-quadcopter distance $\rho = 2.0m$, function $f_s(\bar{p}) = \tanh(\bar{p})$, and gain $\alpha_{s_0} = 5\pi/180$. The desired formation center $\bm{x}_c$ was selected as the point $[0,0,-10]^{T}m$. The symmetric cyclic controller used a look-ahead horizon $N= 1$ and gain $k_1 = 0.5$. The desired plane of convergence was tilted $42^o$ with respect to the horizontal. We assumed a time lag $\tau = 0.1s$. The collision avoidance parameters were $r_1 = 0.4m$,  $r_2 = 1.2m$, and the controller was slightly modified to take advantage of relative velocity measurements:
\begin{equation}
	\begin{split}
	\bm{u}_{\mathrm{coll}_i} = - &\sum_{j\neq i}^{n} \dfrac{(d_{ij}-r_2)^2}{d_{ij}(d_{ij}-r_1)^2}(\bm{x}_j - \bm{x}_i),\ \\ &\mbox{if } r_1 < d_{ij} \leq r_2 \mbox{ and } v_{s_{ij}} <0,
	\end{split}																		
\end{equation}
where $v_{s_{ij}} =   (\bm{v}_{j} - \bm{v}_i)\cdot (\bm{x}_{j}-\bm{x}_{i})/ d_{ij}$ is the line-of-sight velocity between robots $i$ and $j$. Additionally, the desired velocity (given by the sum of the cyclic, center and collision avoidance controllers) was subject to saturation constraints to prevent aggressive maneuvering and increased collision risk. In particular, each quadcopter was limited to a speed of $3m/s$. Figure \ref{fig:quad_sim} shows the resulting 3D trajectories as well as a shaded plane that outlines the final positions of the quadcopters. Figure \ref{fig:quad_sim2} shows the evolution of the inter-quadcopter distances with respect to quadcopter 1. Notice how two neighboring quadcopters entered the collision detection region for quadcopter 1 before converging to the desired separation.

\begin{figure}[htbp]
\begin{subfigure}[t]{0.22\textwidth}
	\includegraphics[width=1\textwidth]{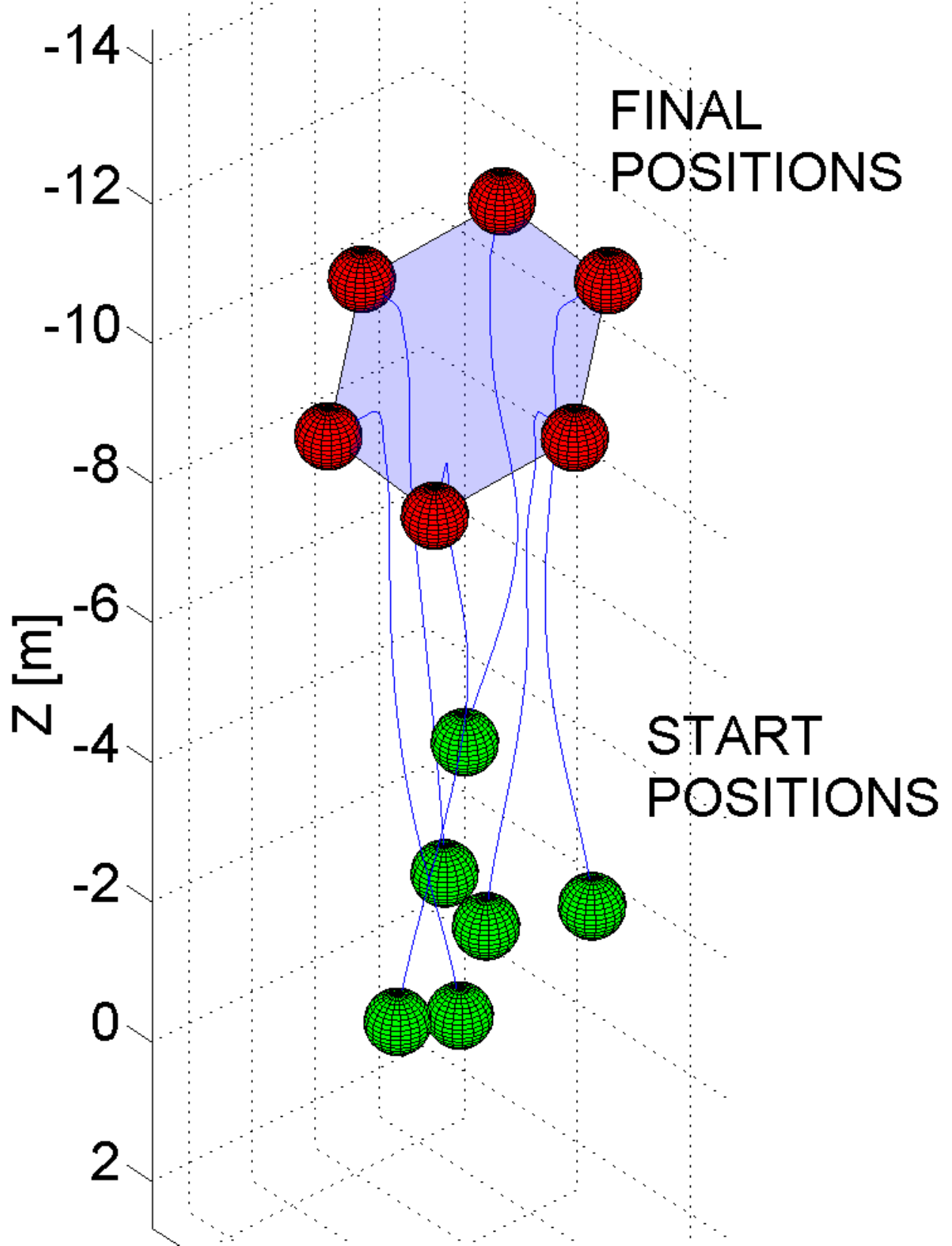}
	\caption{Quadcopter 3D trajectories with size and center control.}
	\label{fig:quad_sim} 
\end{subfigure}
\begin{subfigure}[t]{0.26\textwidth}
	\includegraphics[width=1.1\textwidth]{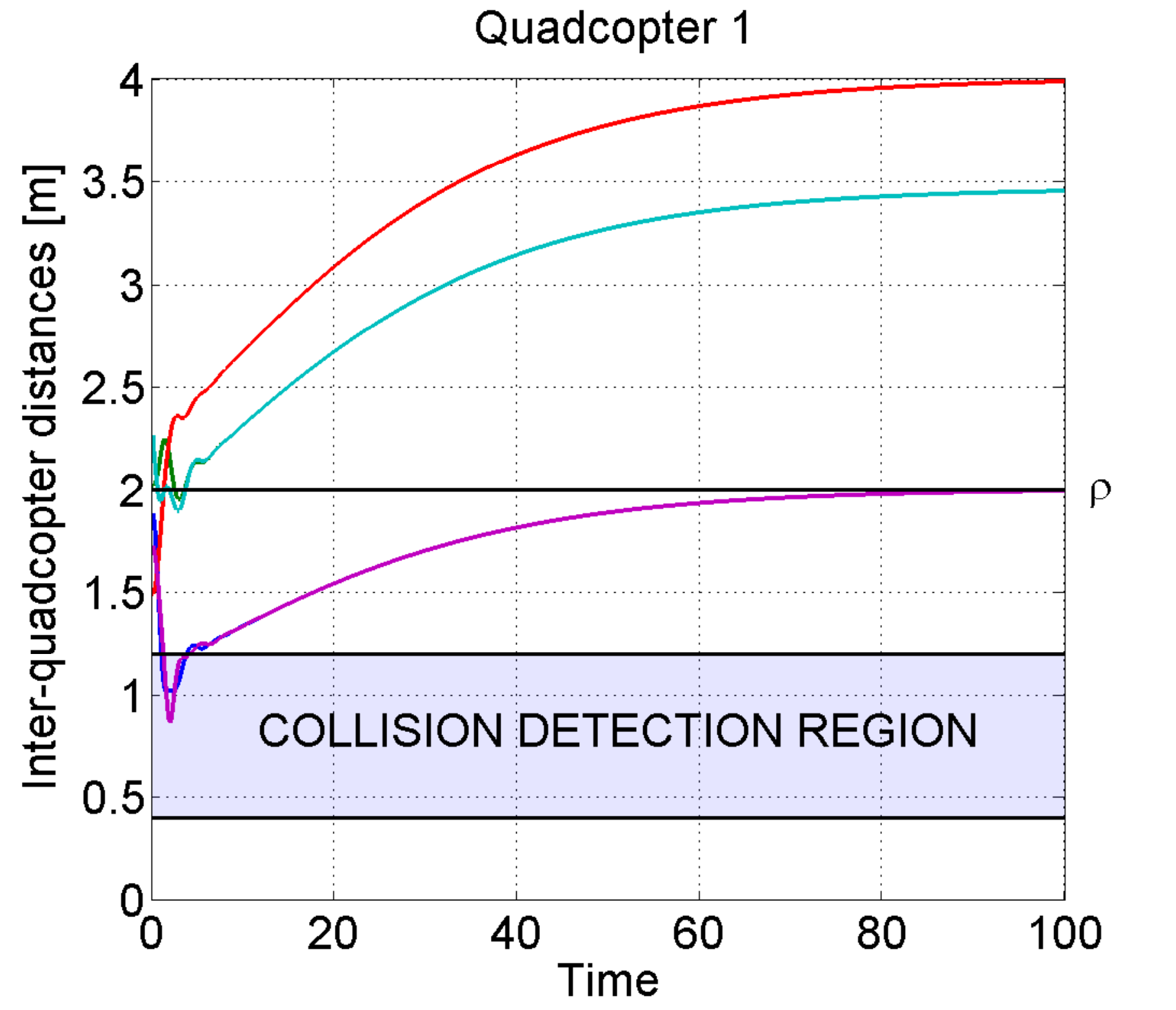}
	\caption{Inter-quadcopter distances (in metres) for Quadcopter 1.}
	\label{fig:quad_sim2} 
\end{subfigure}
\caption{Planar formation simulation.}
\end{figure}

To test the collision avoidance controller, we ran a Monte Carlo simulation with 100 samples randomizing the initial position of the quadcopters within a ball of radius $5m$ centered at $\bm{x}_c$ (all simulations initialized the quadcopters in hover mode). Additionally, upon initialization, the quadcopters were assigned indices such that their projection onto the desired plane of convergence formed a non-intersecting polygon. This is an example of a systematic (and decentralizable) method for the quadcopters to identify their nearest neighbors for symmetric cyclic control. In spite of the same speed constraint of $3m/s$, no collisions were recorded while the formation consistently converged in terms of shape, size and geometric center. 

For experimental validation, we implemented the control algorithms on a group of four quadcopters, each equipped with a PX4 Pixhawk autopilot \cite{PX4:15}. The low level velocity tracking controller detailed in \iftoggle{arxiv}{the appendix}{\cite{SS-ES-MP:15EVb}} is virtually identical to the on-board Pixhawk controller. Relative position measurements between neighboring quadcopters were provided by a Vicon motion capture system. The desired inter-robot neighbor separation $\rho$ was set to $1.65m$. Due to limitations in testing space, a less aggressive collision avoidance controller was implemented:
\[
	\bm{u}_{\mathrm{coll}_i} = - \sum_{j\neq i}^{n} k_{\mathrm{coll}}\tanh(\rho - d_{ij}) (\bm{x}_j - \bm{x}_i),\  \mbox{if } d_{ij} \leq r_2,
\]
where $k_{\mathrm{coll}} $ is a gain with value $1.2$, and $r_2 = 0.9m$. In addition, the speed constraint was further restricted to $0.7m/s$, while the cyclic ($k_1$) and center ($k_c$) control gains were set to $0.25$ and $0.15$ respectively. Note that in all of the experiments where the quadcopters were initialized outside of the collision detection regions, the collision avoidance controller did not become active.

Figures \ref{fig:quad_exp_1a} and \ref{fig:quad_exp_1b} show the quadcopters in initial and final (tetrahedron) configurations respectively. Figure \ref{fig:quad_exp_1c} shows the time evolution of the formation error $\|\bar{V}\bm{x}\|$ over all three phases of flight -- autonomous takeoff, formation control, and landing. Figures \ref{fig:quad_exp_2b} and \ref{fig:quad_exp_2c} show the time evolution of the inter-quadcopter distances and the error in geometric center. The quadcopters demonstrated relatively quick convergence in terms of formation shape, size and center, thereby validating the two-level formation control methodology \footnote{Videos of the experiments are available at \url{https://www.youtube.com/playlist?list=PL8-2mtIlFIJr4fVjJeyQqymbkKfhsKKYX}}. 
\begin{figure}[htbp]
\centering
\begin{subfigure}[t]{0.4\textwidth}
	\includegraphics[width=1\textwidth]{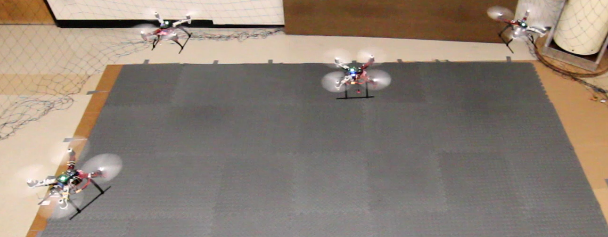}
	\caption{Quadcopters upon initialization in hover mode.}
	\label{fig:quad_exp_1a} 
\end{subfigure}\\
\begin{subfigure}[t]{0.3\textwidth}
	\includegraphics[width=1\textwidth]{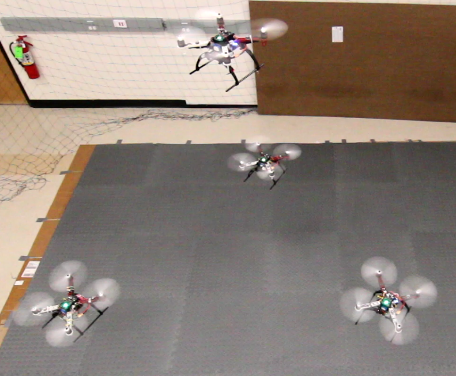}
	\caption{Quadcopters in final tetrahedron configuration.}
	\label{fig:quad_exp_1b} 
\end{subfigure}\\
\begin{subfigure}[t]{0.45\textwidth}
	\includegraphics[width=1\textwidth]{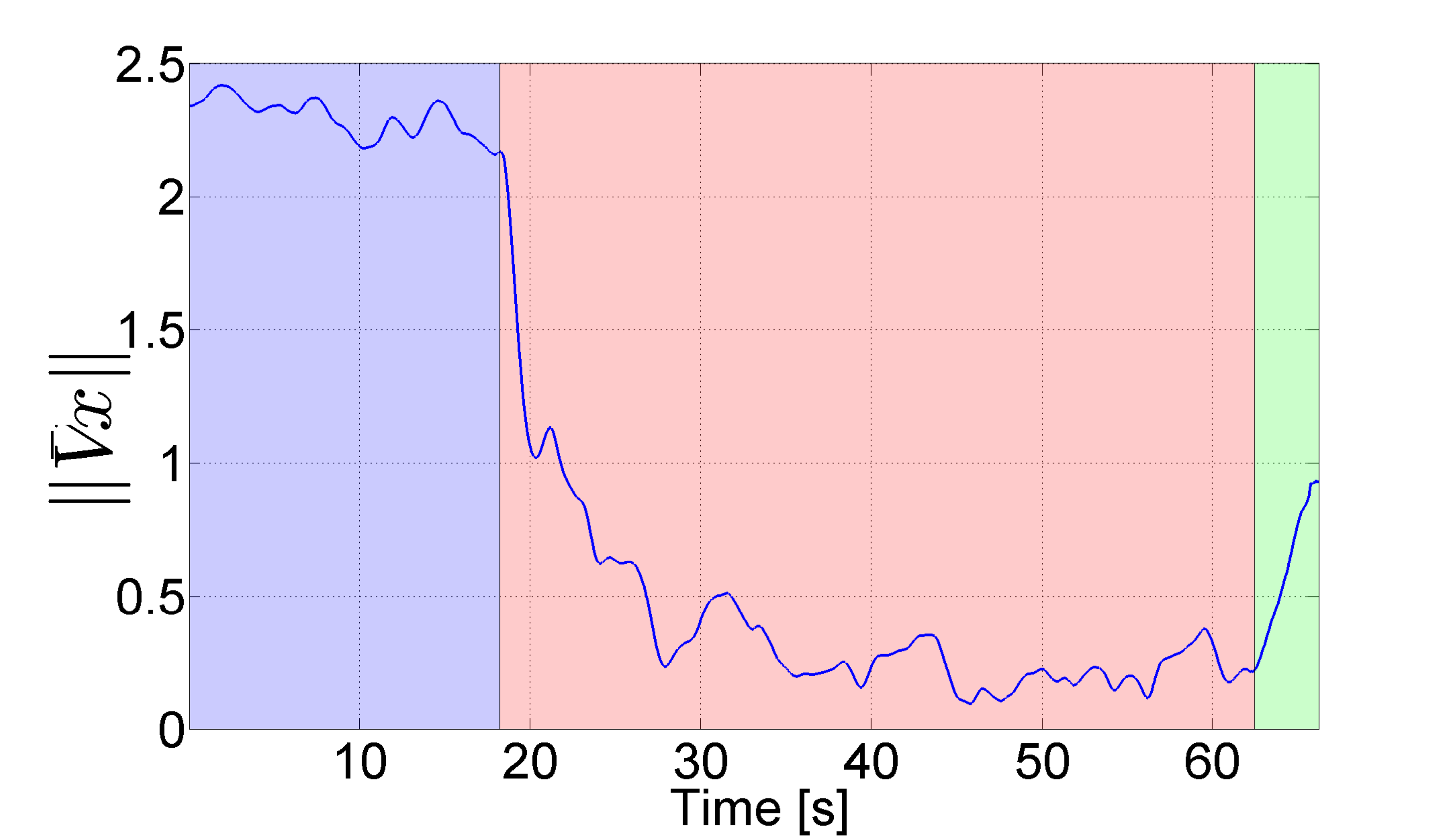}
	\caption{Tetrahedron formation error. The three shaded regions indicate the flight modes (from left to right: takeoff, formation control, landing).}
	\label{fig:quad_exp_1c} 
\end{subfigure}
\caption{Quadcopter tetrahedron formation.}
\end{figure}
\begin{figure}[h]
\centering
\begin{subfigure}[t]{0.5\textwidth}
	\includegraphics[width=1\textwidth]{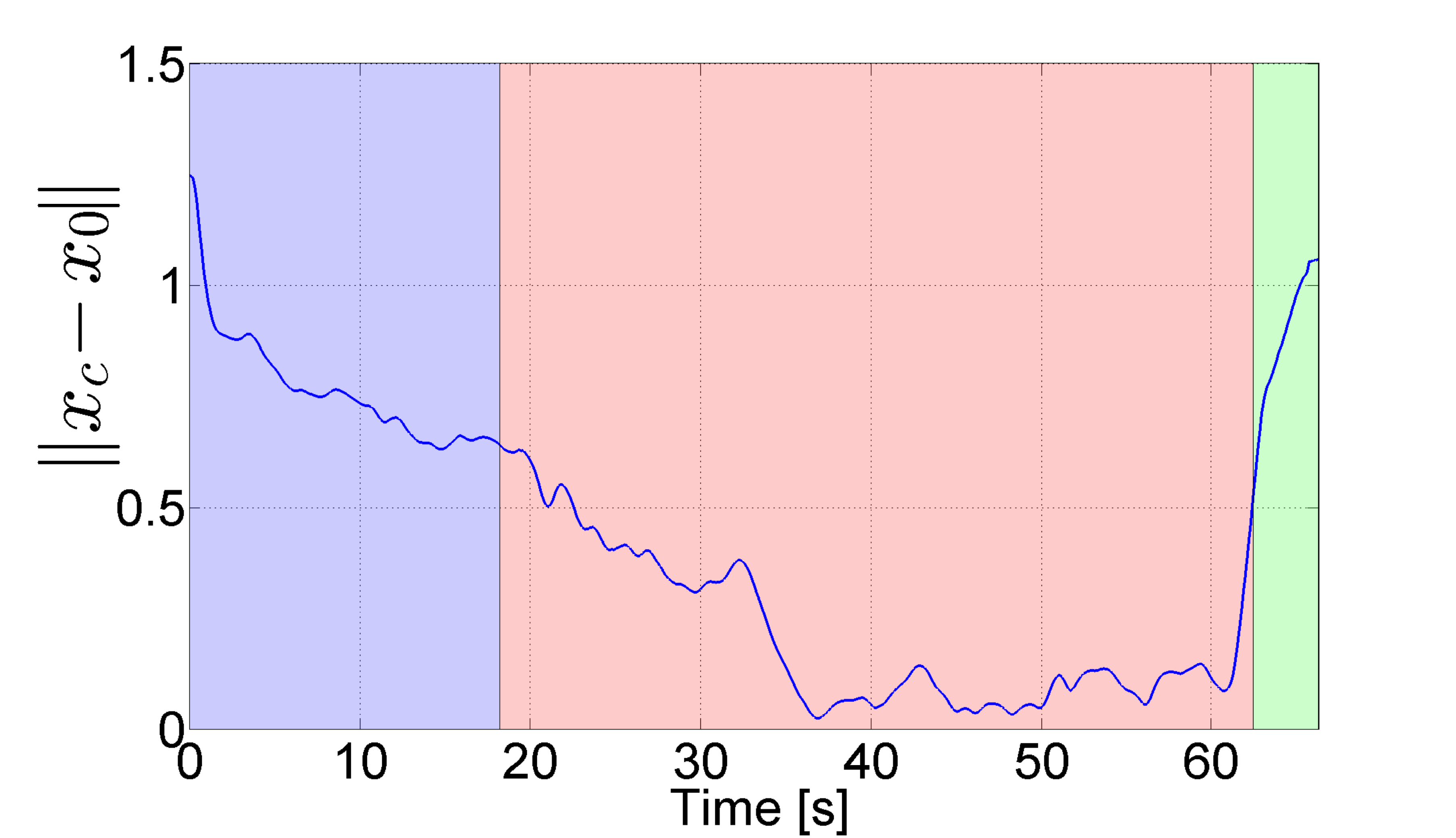}
	\caption{Error in geometric center.}
	\label{fig:quad_exp_2b} 
\end{subfigure}\\
\begin{subfigure}[t]{0.5\textwidth}
	\includegraphics[width=1\textwidth]{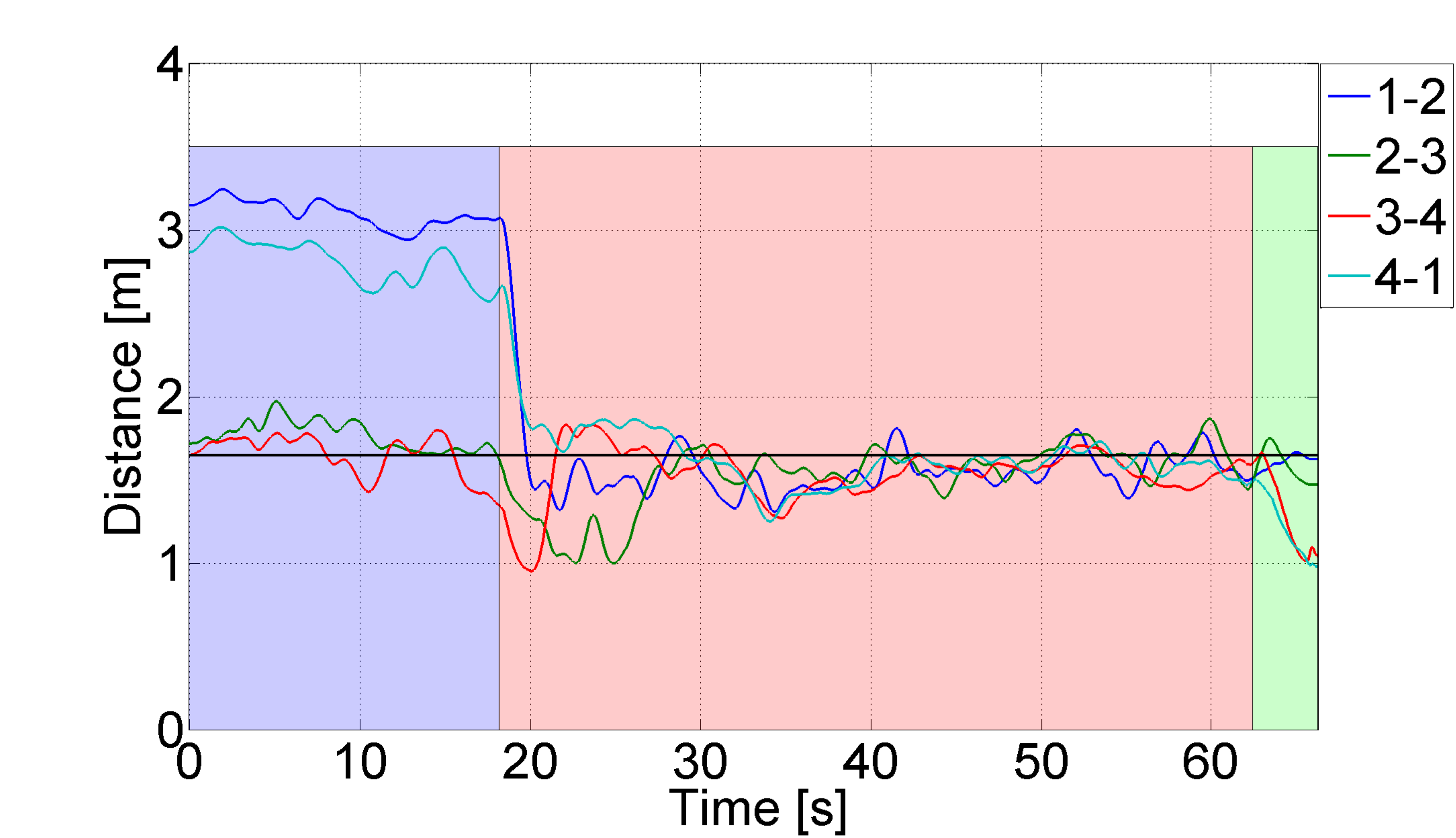}
	\caption{Inter-quadcopter neighbor separations. $\rho = 1.65m$.}
	\label{fig:quad_exp_2c} 
\end{subfigure}
\caption{Formation accuracy.}
\end{figure}

\section{Conclusions}\label{sec:conclusions}
In this paper we studied decentralized control algorithms for 3D formations such as regular polygons, Johnson solids, and more generally, connected regular polygonal meshes. The algorithms are desirable for their simplicity in that they only require relative position measurements between neighbors and an agreement between the robots on the desired orientation in space. Our approach leveraged the mathematical properties of cyclic pursuit along with results from contraction and partial contraction theory. In addition, we provided extensions to the algorithm for controlling the formation size and preventing collisions, and quantified the robustness of the symmetric cyclic controller under bounded additive disturbances. Finally, we validated our algorithms via numerical simulations and hardware experiments on a fleet of quadcopters. 

This work allows several possible avenues for extension. First, we would like to study  (sub-)optimality properties of the symmetric cyclic controller by introducing an objective cost function to be minimized (e.g., time or control effort), and using it to guide gain selection and/or look-ahead horizon. Second, we plan to investigate various examples for internal dynamics $\bm{g}(\bm{x})$ and their interplay with the symmetric cyclic controller when analyzing convergence. Third, the modification of the symmetric cyclic controller for formation size is a little cumbersome in that it requires all robots to compute a shared quantity. Consequently, we would like to investigate size controllers that 
do not rely upon consensus techniques.

\bibliographystyle{IEEEtran}
\bibliography{../../../bib/alias,../../../bib/main}

\begin{thebibliography}{10}
\providecommand{\url}[1]{#1}
\csname url@samestyle\endcsname
\providecommand{\newblock}{\relax}
\providecommand{\bibinfo}[2]{#2}
\providecommand{\BIBentrySTDinterwordspacing}{\spaceskip=0pt\relax}
\providecommand{\BIBentryALTinterwordstretchfactor}{4}
\providecommand{\BIBentryALTinterwordspacing}{\spaceskip=\fontdimen2\font plus
\BIBentryALTinterwordstretchfactor\fontdimen3\font minus
  \fontdimen4\font\relax}
\providecommand{\BIBforeignlanguage}[2]{{%
\expandafter\ifx\csname l@#1\endcsname\relax
\typeout{** WARNING: IEEEtran.bst: No hyphenation pattern has been}%
\typeout{** loaded for the language `#1'. Using the pattern for}%
\typeout{** the default language instead.}%
\else
\language=\csname l@#1\endcsname
\fi
#2}}
\providecommand{\BIBdecl}{\relax}
\BIBdecl

\bibitem{WS-QY-JT-NX:06}
W.~Sheng, Q.~Yang, J.~Tan, and N.~Xi, ``{D}istributed {M}ulti-{R}obot
  {C}oordination in {A}rea {E}xploration,'' \emph{Robotics and Autonomous
  Systems}, vol.~54, no.~12, pp. 945 -- 955, 2006.

\bibitem{RSS-FYH:02}
R.~Smith and F.~Hadaegh, ``{C}ontrol topologies for deep space formation flying
  spacecraft,'' in \emph{{A}merican {C}ontrol {C}onference}, vol.~4, 2002, pp.
  2836--2841.

\bibitem{FA-HH-SC-GS:10}
F.~Arrichiello, H.~Heidarsson, S.~Chiaverini, and G.~Sukhatme, ``{C}ooperative
  caging using autonomous aquatic surface vehicles,'' in \emph{Proc. {IEEE}
  Conf. on Robotics and Automation}, May 2010, pp. 4763--4769.

\bibitem{AJ-JL-ASM:03}
A.~Jadbabaie, J.~Lin, and A.~Morse, ``{C}oordination of groups of mobile
  autonomous agents using nearest neighbor rules,'' \emph{IEEE Transactions on
  Automatic Control}, vol.~48, no.~6, pp. 988--1001, June 2003.

\bibitem{JRTL-RWB-BJY:03}
J.~Lawton, R.~Beard, and B.~Young, ``{A} decentralized approach to formation
  maneuvers,'' \emph{IEEE Transactions on Robotics and Automation}, vol.~19,
  no.~6, pp. 933--941, Dec 2003.

\bibitem{WR-RB:04}
W.~Ren and R.~Beard, ``{D}ecentralized {S}cheme for {S}pacecraft {F}ormation
  {F}lying via the {V}irtual {S}tructure {A}pproach,'' \emph{{AIAA} Journal of
  Guidance, Control, and Dynamics}, vol.~27, no.~1, pp. 73--82, Jan-Feb 2004.

\bibitem{JM:05}
J.~Marshall, ``{C}oordinated {A}utonomy: {P}ursuit {F}ormations of
  {M}ultivehicle {S}ystems,'' Ph.D. dissertation, University of Toronto, 2005.

\bibitem{JC-DS-JY-HC:10}
J.~Chen, D.~Sun, J.~Yang, and H.~Chen, ``{L}eader-{F}ollower {F}ormation
  {C}ontrol of {M}ultiple {N}on-holonomic {M}obile {R}obots {I}ncorporating a
  {R}eceding-horizon {S}cheme,'' \emph{International Journal of Robotics
  Research}, vol.~29, no.~6, pp. 727--747, 2010.

\bibitem{LC-FM-DP-MT:08}
L.~Consolini, F.~Morbidi, D.~Prattichizzo, and M.~Tosques, ``{L}eader follower
  formation control of nonholonomic mobile robots with input constraints,''
  \emph{Automatica}, vol.~44, no.~5, pp. 1343 -- 1349, 2008.

\bibitem{DPS-FYH-SRP:04}
D.~P. Scharf, F.~Y. Hadaegh, and S.~R. Ploen, ``{A} {S}urvey of {S}pacecraft
  {F}ormation {F}lying {G}uidance and {C}ontrol, {P}art {II}: {C}ontrol,'' in
  \emph{{A}merican {C}ontrol {C}onference}, vol.~4, Boston, MA, Jun. 2004, pp.
  2976--2985.

\bibitem{YC-WY-WR-GC:13}
Y.~Cao, W.~Yu, W.~Ren, and G.~Chen, ``{A}n {O}verview of {R}ecent {P}rogress in
  the {S}tudy of {D}istributed {M}ulti-{A}gent {C}oordination,'' \emph{IEEE
  Transactions on Industrial Informatics}, vol.~9, no.~1, pp. 427--438, Feb
  2013.

\bibitem{FG-RS-VL-EN:14}
F.~Guerriero, R.~Surace, V.~Loscri, and E.~Natalizio, ``{A} multi-objective
  approach for unmanned aerial vehicle routing problem with soft time windows
  constraints,'' \emph{{A}pplied {M}athematical {M}odelling}, vol.~38, no.~3,
  pp. 839--852, 2014.

\bibitem{JAM-MEB-BAF:04}
J.~Marshall, M.~Broucke, and B.~Francis, ``{F}ormations of vehicles in cyclic
  pursuit,'' \emph{IEEE Transactions on Automatic Control}, vol.~49, no.~11,
  pp. 1963--1974, Nov 2004.

\bibitem{MP-EF:07}
\BIBentryALTinterwordspacing
M.~Pavone and E.~Frazzoli, ``{D}ecentralized {P}olicies for {G}eometric
  {P}attern {F}ormation and {P}ath {C}overage,'' \emph{ASME Journal on Dynamic
  Systems, Measurement, and Control}, vol. 129, no.~5, pp. 633--643, Oct. 2007.
  [Online]. Available:
  \url{http://dynamicsystems.asmedigitalcollection.asme.org/article.aspx?articleid=1412612}
\BIBentrySTDinterwordspacing

\bibitem{JLR-MP-EF-DWM:10}
\BIBentryALTinterwordspacing
J.~L. Ramirez, M.~Pavone, E.~Frazzoli, and D.~W. Miller, ``{D}istributed
  {C}ontrol of {S}pacecraft {F}ormations via {C}yclic {P}ursuit: {T}heory and
  {E}xperiments,'' \emph{{AIAA} Journal of Guidance, Control, and Dynamics},
  vol.~33, no.~5, pp. 1655--1669, Sep. 2010. [Online]. Available:
  \url{http://arc.aiaa.org/doi/abs/10.2514/1.46511}
\BIBentrySTDinterwordspacing

\bibitem{JLR-JJES:12}
J.~Ramirez-Riberos and J.~Slotine, ``{C}ontraction theory approach to
  generalized decentralized cyclic algorithms for global formation acquisition
  and control,'' in \emph{Proc. {IEEE} Conf. on Decision and Control}, Dec
  2012.

\bibitem{WL-JJES:98}
W.~Lohmiller and J.~Slotine, ``{O}n {C}ontraction {A}nalysis for {N}on-linear
  {S}ystems,'' \emph{Automatica}, vol.~34, no.~6, pp. 683 -- 696, 1998.

\bibitem{QCP-NT-JJES:09}
Q.~Pham, N.~Tabareau, and J.~Slotine, ``{A} {C}ontraction {T}heory {A}pproach
  to {S}tochastic {I}ncremental {S}tability,'' \emph{IEEE Transactions on
  Automatic Control}, vol.~54, no.~4, pp. 816--820, April 2009.

\bibitem{WW-JJES:05}
W.~Wang and J.~Slotine, ``{O}n partial contraction analysis for coupled
  nonlinear oscillators,'' \emph{{B}iological {C}ybernetics}, vol.~92, no.~1,
  pp. 38--53, 2005.

\bibitem{QCP-JJES:07}
Q.~Pham and J.~Slotine, ``{S}table concurrent synchronization in dynamic system
  networks,'' \emph{{N}eural {N}etworks}, pp. 62--77, 2007.

\bibitem{OW-KE-RF:07}
O.~Wallner, K.~Ergenzinger, and R.~Flatscher, ``{X}-array aperture
  configuration in planar or non-planar spacecraft formation for
  {D}arwin/{TPF}-{I} candidate architectures,'' in \emph{{P}roceedings of
  {SPIE} 6693, {T}echniques and {I}nstrumentation for {D}etection of
  {E}xoplanets {III}}, Sep 2007.

\bibitem{SS-ES-MP:15}
S.~Singh, E.~Schmerling, and M.~Pavone, ``{D}ecentralized {A}lgorithms for {3D}
  {S}ymmetric {F}ormations in {R}obotic {N}etworks - {A} {C}ontraction {T}heory
  {A}pproach,'' in \emph{Proc. {IEEE} Conf. on Robotics and Automation},
  Seattle, WA, May 2015.

\bibitem{JJ-JJES:04}
J.~Jouffroy and J.~Slotine, ``{M}ethodological remarks on contraction theory,''
  in \emph{Proc. {IEEE} Conf. on Decision and Control}, vol.~3, Dec 2004, pp.
  2537--2543 Vol.3.

\bibitem{ADA:05}
A.~Alexandrov, \emph{{C}onvex {P}olyhedra}, 1st~ed.\hskip 1em plus 0.5em minus
  0.4em\relax Springer-Verlag Berlin Heidelberg, 2005, english translation with
  additional material of the 1950 Russian monograph.

\bibitem{MJW:74}
M.~J. Wenninger, \emph{{P}olyhedron models}.\hskip 1em plus 0.5em minus
  0.4em\relax Cambridge University Press, 1974.

\bibitem{NL:96}
N.~A. Lynch, \emph{{D}istributed {A}lgorithms}.\hskip 1em plus 0.5em minus
  0.4em\relax San Francisco, CA, USA: Morgan Kaufmann Publishers Inc., 1996.

\bibitem{DHK-HW-SS:06}
D.~H. Kim, H.~Wang, and S.~Shin, ``{D}ecentralized control of autonomous swarm
  systems using artificial potential functions: {A}nalytical design
  guidelines,'' \emph{{J}ournal of {I}ntelligent and {R}obotic {S}ystems},
  vol.~45, no.~4, pp. 369--394, 2006.

\bibitem{HKK:02}
H.~K. Khalil, \emph{{N}onlinear {S}ystems}, 3rd~ed.\hskip 1em plus 0.5em minus
  0.4em\relax Prentice Hall, 2002.

\bibitem{RM-VK-PC:12}
R.~Mahony, V.~Kumar, and P.~Corke, ``{M}ultirotor aerial vehicles: {M}odeling,
  estimation, and control of quadrotor,'' \emph{IEEE Robotics and Automation
  Magazine}, no.~19, pp. 20--32, 2012.

\bibitem{AK-DM-CP-VK:13}
A.~Kushleyev, D.~Mellinger, C.~Powers, and V.~Kumar, ``{T}owards a swarm of
  agile micro quadrotors,'' \emph{Autonomous Robots}, vol.~35, no.~4, pp.
  287--300, 2013.

\bibitem{PX4:15}
\BIBentryALTinterwordspacing
Pixhawk. Px4 autopilot project. [Online]. Available: \url{http://pixhawk.org}
\BIBentrySTDinterwordspacing

\bibitem{RMG-01}
R.~Gray, ``{T}oeplitz and {C}irculant {M}atrices: {A} review,'' Tech. Rep.,
  2001.

\end{thebibliography}

\vspace{-0.9cm}
\begin{IEEEbiography}[{\includegraphics[width=\textwidth]{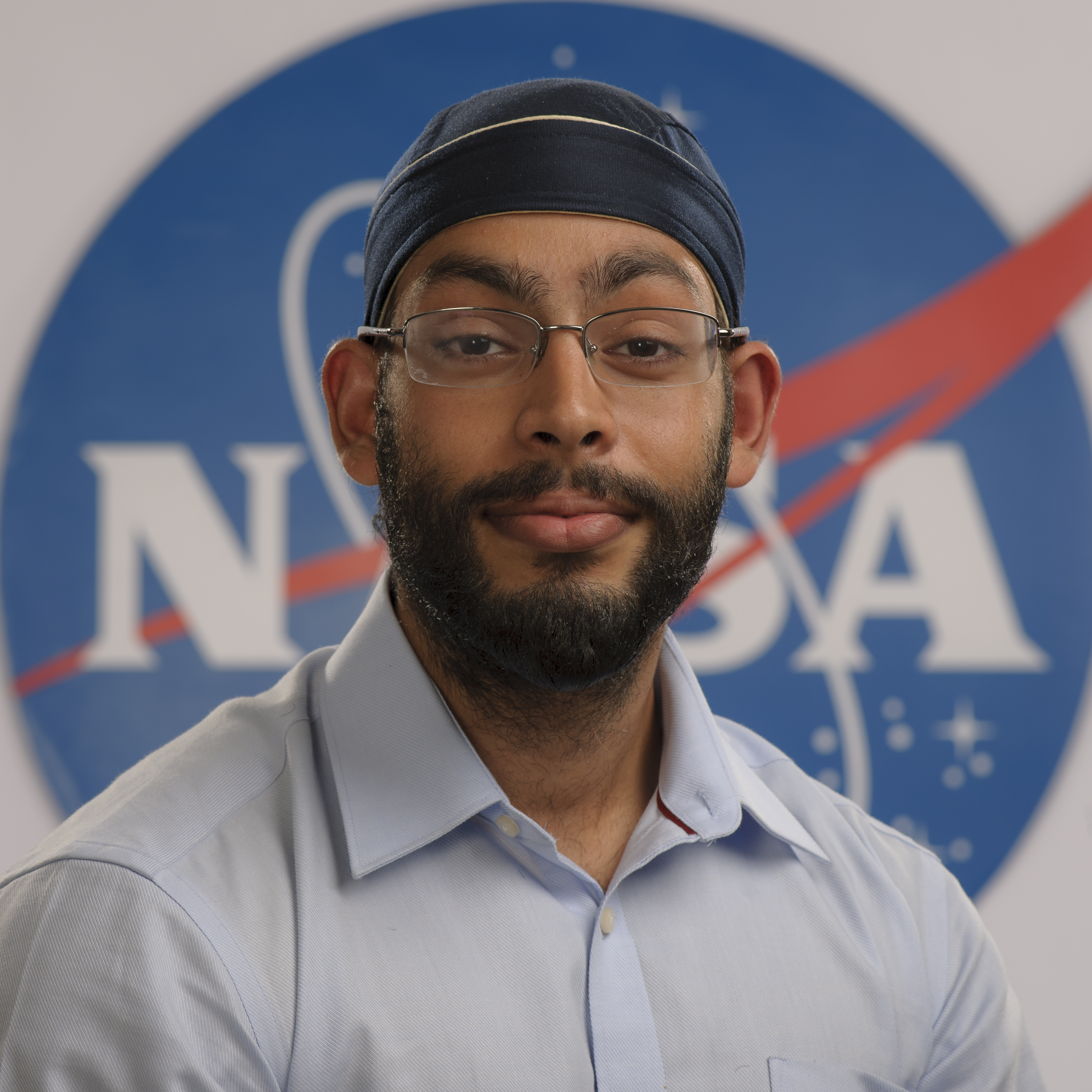}}]%
{Sumeet Singh} is is a Ph.D. candidate in the Autonomous Systems Laboratory at Stanford University. He received a B.Eng. in Mechanical Engineering and a Diploma of Music (Performance) from the University of Melbourne in 2012, and a M.Sc. in Aeronautics and Astronautics from Stanford University in 2015. Prior to joining Stanford, Sumeet worked in the Berkeley Micromechanical Analysis and Design lab at University of California Berkeley in 2011, and the Aeromechanics Branch at NASA Ames in 2013. Sumeet�'s current research interests include decentralized algorithms for formation control, and stochastic and robust nonlinear model predictive control, with specific applications to spacecraft control.
\end{IEEEbiography}
\begin{IEEEbiography}[{\includegraphics[width=\textwidth]{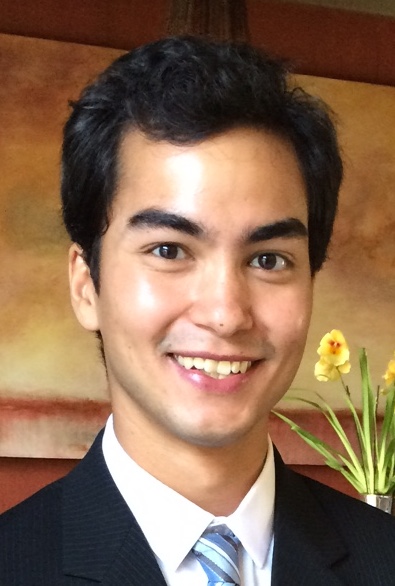}}]%
{Edward Schmerling} received the B.Sc. degree in mathematics and physics from Stanford University, Stanford, CA, USA, in 2010. He is currently working towards the Ph.D. degree at the Institute for Computational and Mathematical Engineering, Stanford University. His research interests include developing algorithms for robotic motion planning subject to general dynamic and cost constraints.
\end{IEEEbiography}
\begin{IEEEbiography}[{\includegraphics[width=\textwidth]{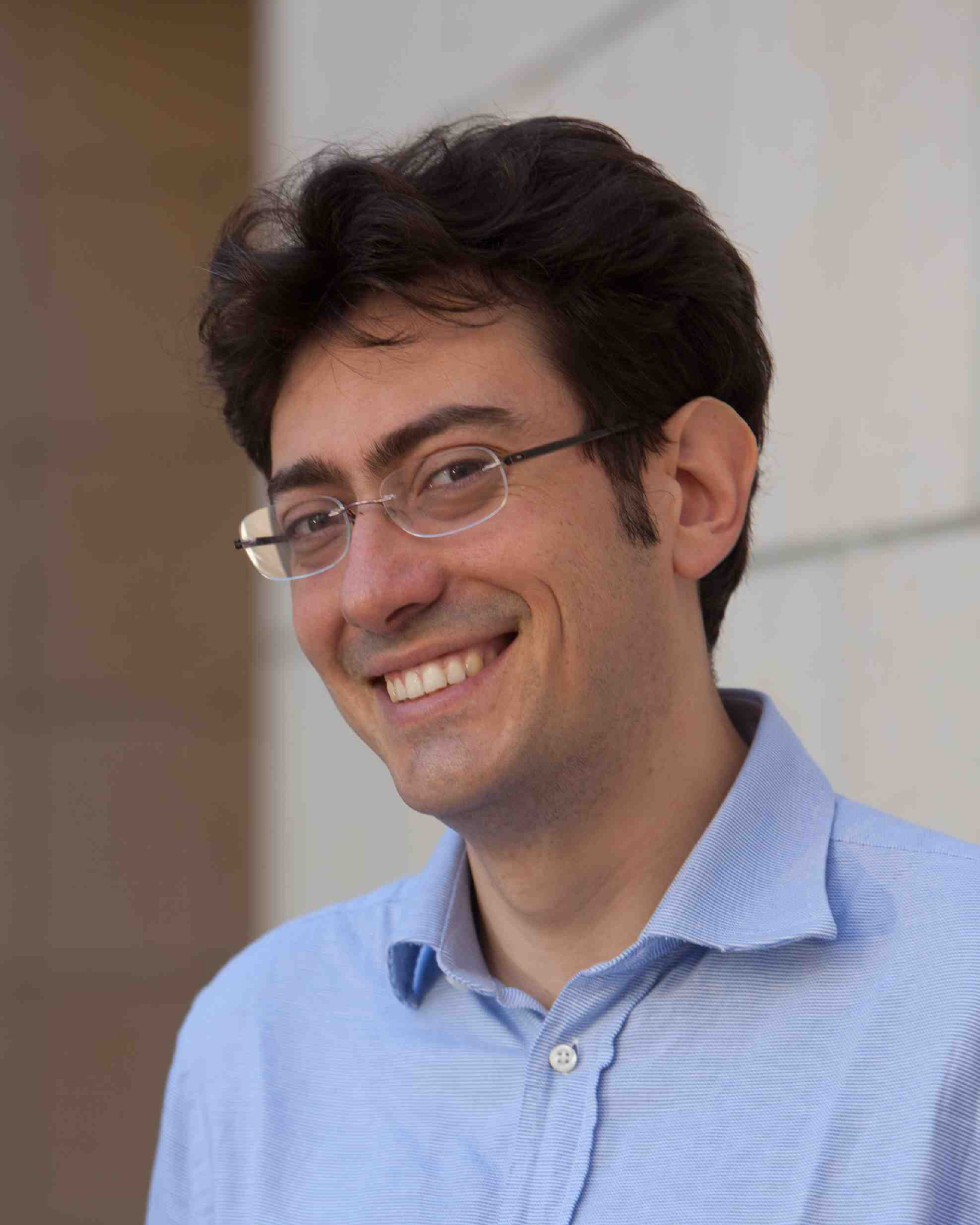}}]%
{Dr. Marco Pavone} is an Assistant Professor of Aeronautics and Astronautics at Stanford University, where he is the Director of the Autonomous Systems Laboratory. Before joining Stanford, he was a Research Technologist within the Robotics Section at the NASA Jet Propulsion Laboratory. He received a Ph.D. degree in Aeronautics and Astronautics from the Massachusetts Institute of Technology in 2010.  Dr. Pavone's areas of expertise lie in the fields of controls and robotics. His main research interests are in the development of methodologies for the analysis, design, and control of autonomous systems, with an emphasis on autonomous aerospace vehicles and large-scale robotic networks. He is a recipient of an NSF CAREER Award, a NASA Early Career Faculty Award, a Hellman Faculty Scholar Award, and was named NASA NIAC Fellow in 2011. He is currently serving as an Associate Editor for the IEEE Control Systems Magazine.
\end{IEEEbiography}

\iftoggle{arxiv}{

\clearpage

\section*{Appendix}
\section{Mathematic Preliminaries}

In this section we present the eigendecomposition of circulant matrices. The following theorem will be instrumental to the proof for Theorem \ref{contraction_th_orig}.

\begin{theorem}[Adapted from Theorem 3.1 in~\cite{RMG-01}]
\label{circulant_eig}
The eigenvalues of a circulant matrix are the Discrete Fourier Transform of the first row of the matrix. Specifically, every circulant matrix $C= \mathrm{circ}[c_1 \  c_2 \  \ldots \  c_{n}]$ has eigenvectors:
\begin{equation}
\label{eq:circulant_evectors}
\begin{split}
{\bf v_k} = \dfrac{1}{\sqrt{n}} (1, e^{-2\pi j (k-1)/n},\ldots, e^{-2\pi j (k-1)(n-1)/n})^{T}, \\ k= 1,2,\ldots, n, \end{split}
\end{equation}
and corresponding eigenvalues:
\begin{equation}
\label{eq:circulant_evalues}
\lambda_k = \sum\limits_{p=1}^{n}c_p e^{2\pi j(k-1)(p-1)/n}, 
\end{equation}
and can be expressed in the form $C = U\Lambda U^{*}$ where $U$ is a unitary matrix with the $k^{th}$ column equal to the $k^{th}$ eigenvector and $\Lambda$ is a diagonal matrix of corresponding eigenvalues. Note that the same matrix $U$ diagonalizes all circulant matrices. Thus if $C$ and $B$ are $n\times n$ circulant matrices, with eigenvalues $\{\lambda_{B,k}\}_{k=1}^{n}$ and $\{\lambda_{C,k}\}_{k=1}^{n}$ respectively, then:
\begin{enumerate}
\item $C$ and $B$ commute ($CB=BC$) and $CB$ is also a circulant matrix with eigenvalues equal to $\{\lambda_{B,k}\lambda_{C,k}\}_{k=1}^{n}$
\item $C+B$ is also circulant with eigenvalues $\{\lambda_{B,k}+\lambda_{C,k}\}_{k=1}^{n}$
\end{enumerate}
\end{theorem}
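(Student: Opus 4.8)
The plan is the classical Fourier-diagonalization argument: express every circulant of order $n$ as a polynomial in a single shift matrix, diagonalize that one matrix once and for all, and read off everything else. Let $S := \mathrm{circ}[0,1,0,\ldots,0]$ be the $n\times n$ cyclic shift. Inspecting the structure displayed in Section~\ref{circulant_section}, the matrix $S^{p-1}$ is exactly the circulant with a single $1$ in the $p$-th slot of its first row, so any $C = \mathrm{circ}[c_1\ c_2\ \cdots\ c_n]$ decomposes as $C = \sum_{p=1}^{n} c_p\, S^{p-1}$. This already yields two of the conclusions for free: if $B = \sum_p b_p S^{p-1}$ and $C = \sum_q c_q S^{q-1}$, then $BC = \sum_{p,q} b_p c_q\, S^{p+q-2} = CB$ (the monomials $S^r$ commute), and $B+C = \sum_p (b_p+c_p) S^{p-1}$; since every polynomial in $S$ is a sum of circulants and hence circulant, both $BC$ and $B+C$ are circulant.

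Next I would diagonalize $S$ alone. A direct computation shows that each $\bm v_k$ of \eqref{eq:circulant_evectors} is an eigenvector of $S$ — shifting the geometric sequence $\bigl(1,\zeta_k,\zeta_k^2,\ldots\bigr)$ by one index multiplies it by $\zeta_k$, where $\zeta_k$ is the $n$-th root of unity dictated by the chosen convention — with eigenvalue $\zeta_k$ satisfying $\zeta_k^n = 1$. Because the $\zeta_k$ exhaust the $n$ distinct $n$-th roots of unity, the $\bm v_k$ are linearly independent; moreover the geometric-series identity $\sum_{\ell=0}^{n-1}\zeta^\ell = 0$ for $\zeta$ a nontrivial root of unity shows $\bm v_k^*\bm v_{k'} = \delta_{kk'}$, so the matrix $U$ with columns $\bm v_1,\ldots,\bm v_n$ is unitary and $S = U\Lambda_S U^*$ with $\Lambda_S = \mathrm{diag}(\zeta_1,\ldots,\zeta_n)$. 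The key point to emphasize is that $U$ depends only on $n$, not on $C$.

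The eigendecomposition of a general circulant is then immediate: $S^{p-1} = U\Lambda_S^{p-1}U^*$, so $C = \sum_p c_p S^{p-1} = U\bigl(\sum_p c_p \Lambda_S^{p-1}\bigr)U^* = U\Lambda U^*$, where $\Lambda$ is diagonal with $k$-th entry $\lambda_k = \sum_{p=1}^n c_p \zeta_k^{p-1}$ — precisely \eqref{eq:circulant_evalues}, the discrete Fourier transform of the first row — and the $k$-th column of $U$ is $\bm v_k$, as required. Since $U$ is the universal diagonalizer, writing $B = U\Lambda_B U^*$, $C = U\Lambda_C U^*$ with $\Lambda_B,\Lambda_C$ diagonal re-derives the commutativity and, because diagonal matrices multiply (resp.\ add) entrywise, shows that the eigenvalues of $BC$ are the products $\lambda_{B,k}\lambda_{C,k}$ and those of $B+C$ are the sums $\lambda_{B,k}+\lambda_{C,k}$.

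I do not anticipate a genuine obstacle; this is textbook material, and the only thing needing care is index bookkeeping — one must fix the sign convention in the exponent and the labeling of the roots of unity so that eigenvectors and eigenvalues are paired consistently with \eqref{eq:circulant_evectors}--\eqref{eq:circulant_evalues}, and verify orthonormality via the root-of-unity summation identity. Note that the \emph{set} of eigenvalues is convention-independent in any case, which is all that the downstream eigenvalue analyses (e.g.\ for Theorem~\ref{contraction_th_orig}) actually require.
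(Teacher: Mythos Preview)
Your argument is correct and is the standard shift-matrix/Fourier diagonalization proof of this classical fact. Note, however, that the paper does not supply its own proof of this theorem---it is quoted as a known result from~\cite{RMG-01} and used as a black box in the subsequent eigenvalue computations---so there is nothing in the paper to compare your approach against.
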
 

\section{Proofs}

\subsection{Proofs for Section \ref{sec:planar}}

\begin{proof}[Proof of Lemma \ref{Lemma:manifold_defined}]
Necessity is trivial. We then consider sufficiency. First, we prove that constraints \eqref{manifold_def1} and \eqref{manifold_def2} ensure that all robots lie in a common plane, i.e., 
\begin{equation}\label{ed:des_p}
\bm{e}^{T}_{z} \, \bm{x}_i= a, 
\end{equation} 
for all $i\in \{1,\ldots,n\}$ and some $a\in \reals$. Since $\bm{e}_{z}$ is a left eigenvector of $R_{2\pi/n}$, with eigenvalue equal to 1, the rotational constraints imply
$\bm{e}_{z}^{T}(\bm{x}_{i+1}\! -\! \bm{x}_i) = \bm{e}_{z}^{T}( {\bm{x}_{i+2}\! - \!\bm{x}_{i+1}} )$,
for $i = 1,\ldots,n-2$. Combining this set of equations with the in-plane constraint, one can write
\[
\bm{e}_{z}^{T}(\bm{x}_{i+1}\! -\! \bm{x}_i) = \bm{e}_{z}^{T}( {\bm{x}_{i+2}\! - \!\bm{x}_{i+1}} ),
\]
for $i = 1,\ldots,n-1$ and where the indices are modulo $n$.  One can then readily show that 
\begin{equation} 
\bm{e}_{z}^{T} \bm{x}_{i+2} = \bm{e}_{z}^{T}\bm{x}_{2} +i\, \bm{e}_{z}^{T}(\bm{x}_2- \bm{x}_1), 
\label{manifold_defined_1}
\end{equation}
for $i = 1,\ldots,n-1$ and where the indices are modulo $n$. Hence, for $i=n-1$, one obtains
\[
\bm{e}_{z}^{T} \bm{x}_{1} = \bm{e}_{z}^{T}\bm{x}_{2} +(n-1)\, \bm{e}_{z}^{T}(\bm{x}_2- \bm{x}_1), 
\]
which implies that $\bm{e}_{z}^{T}(\bm{x}_2- \bm{x}_1) = \bm{0}$. Hence, by setting $\bm{e}_{z}^{T}\bm{x}_{2} :=a$, one immediately obtains equation \eqref{ed:des_p}. We have then proven that all robots lie in a common plane. Now, for $n$ points lying in a common plane, the rotational constraints represent the definition of a regular polygon.

For independence, we note that each successive rotational constraint involves at least one robot that does not belong to any of the previous constraints, proving that the rotational constraints are independent. The in-plane constraint is necessary to ensure that the robots do not form a spiral formation out of the desired plane, thereby reducing the dimensionality of the solution subspace. Thus, the in-plane constraint must be independent to the rotational constraints.  
 \end{proof}

\begin{proof}[Proof of Lemma \ref{Lemma:rot}]
Equations \eqref{manifold_def1} and \eqref{manifold_def2} can be written in matrix form as $V \, \bm{x} = \bm{0}$, where
\begin{align}
&V = \left[
								\begin{array}{ccc} I_3 & -(I_3 + R_{2\pi/n}) & R_{2\pi/n}  \\
								0_{3\times3} & I_3 & -(I_3+ R_{2\pi/n})  \\
								\vdots       & \vdots  &  \vdots    \\
								\bm{e}_{z}^{T}R_{2\pi/n} & 0_{1\times 3} & 0_{1\times 3}                     
								\end{array}\right. \ldots \nonumber \\
								&\qquad \qquad  \left. 
								\begin{array}{cccc}
								0_{3\times3} & \cdots & \cdots &  0_{3\times3} \\
								R_{2\pi/n} & 0_{3\times3} & \cdots & \vdots  \\
								\vdots            & \vdots   &\vdots &\vdots  \\
								\cdots & \cdots &\bm{e}_{z}^{T} & -\bm{e}_{z}^{T}(I_{3} + R_{2\pi/n} )
								\end{array}
								\right].
\label{eq:Vr_def}								
\end{align}
Simplifying the above expression, one obtains
\begin{align}
V =  & \mathcal{W}_{n}\mathrm{circ}[I_3, -I_3, 0_{3\times3}, \ldots, 0_{3\times3}] + \\
&\quad\mathcal{W}_{n}\mathrm{circ}[0_{3\times3}, -R_{2\pi/n},  R_{2\pi/n},   0_{3\times3}, \ldots, 0_{3\times3}]  \nonumber \\
 =  &\mathcal{W}_{n}(L_1 \otimes I_3) + \mathcal{W}_{n}((L_1 - L_2)  \otimes R_{2\pi/n}I_{3}) \nonumber \\
  			=& \mathcal{W}_{n}\left( L_{1}I_{n}\otimes I_{3}I_{3} + (L_1 - L_2)I_{n}\otimes R_{2\pi/n}I_{3}\right) 	\nonumber \\
				=&\mathcal{W}_{n}(L_1 \otimes I_3 + (L_1 - L_2)\otimes R_{2\pi/n})(I_n\otimes I_3), \nonumber		     
\end{align}
where the last equality follows from the properties of the Kronecker product. The claim then follows.										  												
\end{proof}

\begin{proof}[Proof of Lemma \ref{cyclic_invariant}]
Note that the rotational constraints in \eqref{manifold_def1} and the in-plane constraint \eqref{manifold_def2} define the subspace $\mathcal{M}_n$ to be a regular polygon with normal $\bm{e}_{z}$. The following analysis therefore assumes that all robots lie in a single plane and satisfy the rotational constraints. Then, 
\begin{align}
\bm{\dot{x}}_{i+1} - \bm{\dot{x}}_{i} =&  \sum\limits_{m=1}^{N} k_m(\bm{x},t)\big[R_m(\bm{x},t)(\bm{x}_{i+1+m} - \bm{x}_{i+1}) \nonumber \\
									&	+ R_m^T(\bm{x},t)(\bm{x}_{i+1-m} - \bm{x}_{i+1})\big] \nonumber \\
									&	-\sum\limits_{m=1}^{N} k_m(\bm{x},t)\big[R_m(\bm{x},t)(\bm{x}_{i+m} - \bm{x}_{i}) \nonumber \\
									&	+ R_m^T(\bm{x},t)(\bm{x}_{i-m} - \bm{x}_{i})\big].
\label{eq:prove_invariant_1}
\end{align}										
Now by the constraints described by \eqref{manifold_def1}, we can state:
\begin{align*}
\bm{x}_{i+1+m} - \bm{x}_{i+1} &= R_{2\pi/n}(\bm{x}_{i+2+m} - \bm{x}_{i+2}), \text{ and } \\
\bm{x}_{i+1-m} - \bm{x}_{i+1} &= R_{2\pi/n}(\bm{x}_{i+2-m} - \bm{x}_{i+2}).
\end{align*}
Thus, the first term in  (\ref{eq:prove_invariant_1}) can be written as:
\begin{align}
\bm{\dot{x}}_{i+1}  =& \sum\limits_{m=1}^{N} k_m(\bm{x},t)\big[R_{m}R_{2\pi/n}(\bm{x},t)(\bm{x}_{i+2+m} - \bm{x}_{i+2})  \nonumber\\
				 &	+ R_m^TR_{2\pi/n}(\bm{x},t)(\bm{x}_{i+2-m} - \bm{x}_{i+2})\big] \nonumber \\
  		 		= &  R_{2\pi/n}\sum\limits_{m=1}^{N} k_m(\bm{x},t)\big[R_{m}(\bm{x},t)(\bm{x}_{i+2+m} - \bm{x}_{i+2}) \nonumber \\
				&	+ R_m^T(\bm{x},t)(\bm{x}_{i+2-m} - \bm{x}_{i+2})\big]  \nonumber \\
				= & R_{2\pi/n}\bm{\dot{x}}_{i+2},
\end{align}					
where the order of $R_{m}$ and $R_{2\pi/n}$ can be swapped since they are rotation matrices about the same axis. Similarly, it can be shown that $\bm{\dot{x}}_{i} = R_{2\pi/n}\bm{\dot{x}}_{i+1}$. Thus,  rewriting (\ref{eq:prove_invariant_1}):
\[
\bm{\dot{x}}_{i+1} - \bm{\dot{x}}_{i} = R_{2\pi/n}(\bm{\dot{x}}_{i+2} - \bm{\dot{x}}_{i+1}) .
\]
To prove the arbitrary normal case, one need only replace $R_{m}$ and $R_{2\pi/n}$ with their similarity transformed versions with respect to $R_{\eta} \neq I_3$. 
\end{proof}

\begin{proof}[Proof of Theorem \ref{contraction_th_orig}]
We provide the proof for the general case where the desired formation plane is not necessarily the horizontal plane. The proof leverages the following Lemma.
\begin{lemma}\label{W_n_Lemma}
Given a symmetric matrix $\mathcal{X}$ such that $\mathcal{X} \prec 0$, then $\mathcal{W}_{n}\mathcal{X}\mathcal{W}^T_{n} \prec 0$.
\end{lemma}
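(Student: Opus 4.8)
The plan is to recognize $\mathcal{W}_n$ as a full-row-rank matrix and then invoke the standard fact that congruence by a matrix with trivial right null space preserves negative definiteness.

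First I would note that $\mathcal{W}_n\mathcal{X}\mathcal{W}_n^T$ is symmetric whenever $\mathcal{X}$ is: $(\mathcal{W}_n\mathcal{X}\mathcal{W}_n^T)^T = \mathcal{W}_n\mathcal{X}^T\mathcal{W}_n^T = \mathcal{W}_n\mathcal{X}\mathcal{W}_n^T$, so its definiteness is well posed. Next I would establish that $\mathcal{W}_n$ has full row rank. From the definitions $W_{r_n} = [I_{n-2},\,0_{(n-2)\times 2}]$ and $W_{p_n} = [0_{1\times(n-2)},\,1,\,0]$, the block $W_{r_n}\otimes I_3$ has, in its row indexed by $3(i-1)+l$ (for $i=1,\dots,n-2$, $l\in\{1,2,3\}$), a single $1$ in column $3(i-1)+l$ and zeros elsewhere, while the single row $W_{p_n}\otimes\bm{e}_z^T$ has a single $1$ in column $3(n-2)+3$ and zeros elsewhere. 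Hence the $3(n-2)+1$ rows of $\mathcal{W}_n$ are distinct standard basis (row) vectors of $\reals^{3n}$; they are linearly independent, so $\mathcal{W}_n$ has full row rank and, equivalently, $\mathcal{W}_n^T$ has trivial null space, i.e. $\mathcal{W}_n^T\bm{v} = \bm{0} \Rightarrow \bm{v} = \bm{0}$.

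Finally, for any $\bm{v}\neq\bm{0}$ of the appropriate dimension, $\bm{v}^T\mathcal{W}_n\mathcal{X}\mathcal{W}_n^T\bm{v} = (\mathcal{W}_n^T\bm{v})^T\mathcal{X}(\mathcal{W}_n^T\bm{v})$, and since $\mathcal{W}_n^T\bm{v}\neq\bm{0}$ while $\mathcal{X}\prec 0$, this quantity is strictly negative. Therefore $\mathcal{W}_n\mathcal{X}\mathcal{W}_n^T\prec 0$, as claimed.

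There is no genuinely hard step here: the only point requiring care is writing out the block structure of $\mathcal{W}_n$ to confirm that it is (a permutation of) a selection matrix and hence full row rank, after which the conclusion is the textbook statement that a congruence by a matrix with trivial right null space preserves sign-definiteness. I note in passing that the same argument applies verbatim to the general-plane version in which $\mathcal{W}_n$ is effectively composed with the orthogonal factor $\mathcal{R}_\eta$, since $\mathcal{R}_\eta$ is invertible and thus does not enlarge the null space.
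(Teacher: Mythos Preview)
Your proof is correct. It differs from the paper's, however, in its key tool. The paper observes that the rows of $\mathcal{W}_n$ are not merely linearly independent but orthonormal (i.e., $\mathcal{W}_n$ is sub-unitary, $\mathcal{W}_n\mathcal{W}_n^T=I$), and then invokes the Cauchy Interlacing Theorem to conclude $\lambda_{\max}(\mathcal{W}_n\mathcal{X}\mathcal{W}_n^T)\le\lambda_{\max}(\mathcal{X})<0$. Your argument is more elementary and in a sense more general: it needs only full row rank, not orthonormality, and avoids any eigenvalue interlacing machinery. On the other hand, the paper's route yields a quantitative bound on $\lambda_{\max}$ of the compressed matrix in terms of $\lambda_{\max}(\mathcal{X})$, which is relevant downstream when one wants an explicit convergence rate rather than mere negative definiteness. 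Your observation that the rows of $\mathcal{W}_n$ are distinct standard basis vectors in fact establishes the stronger sub-unitary property as well, so either conclusion is available from your computation.
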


\begin{proof}
If $\mathcal{X} \prec 0$ then $\lambda_{max}(\mathcal{X}) < 0$. Noting that $\mathcal{W}_{n}$ is sub-unitary, by the Cauchy Interlacing Theorem, $\lambda_{max}(\mathcal{W}_n\mathcal{X}\mathcal{W}_n^T) \leq \lambda_{max}(\mathcal{X}) < 0$.
\end{proof}
Having proved Lemma \ref{W_n_Lemma}, we are now in a position to prove Theorem \ref{contraction_th_orig}. Specifically, the strategy is to only consider the uniform negative-definiteness of the inner term (not involving $\mathcal{W}_{n}$) in (\ref{eq:convergence_cond}). 

For the scenario where the desired formation plane is not the horizontal plane, by the discussion in Remark \ref{Remark:Reta}, we must use the similarity transformed versions of the projection matrix $V$ and the symmetric cyclic controller. From \eqref{cyclic_modified} and \eqref{Vr_modified}, note that $\mathcal{P}_{n}\mathcal{R}_{\eta}\mathcal{L}_{m\eta}\allowbreak\mathcal{R}^T_{\eta}\mathcal{P}^T_{n} = \mathcal{P}_{n}\mathcal{L}_{m}\mathcal{P}^T_{n}$ since $\mathcal{L}_{m\eta} = \mathcal{R}^{T}_{\eta}\mathcal{L}_{m}\mathcal{R}_{\eta}$. Thus, the analysis does not depend on the arbitrary choice of the orientation of the desired plane normal in the global coordinate system.

Now, given the matrices $\mathcal{P}_n$ and $\mathcal{L}_m$ are $\in \CR$, they have the same set of eigenvectors. Then, for any eigenvector $\bm{v}_i$ in this set, we have the following relation for the corresponding eigenvalues:

\[
\lambda_i\left(\mathcal{P}_{n}\mathcal{L}_{m}\mathcal{P}_{n}^T\right) = \lambda_i(\mathcal{P}_{n})\lambda_i(\mathcal{P}_{n}^T)\lambda_i(\mathcal{L}_{m}).
 \]

To derive the eigenvalues of $\mathcal{P}_{n}$, we note that $\mathcal{P}_{n} = L_1 \otimes I_3 + (L_1 - L_2)\otimes R_{2\pi/n}$ which is the sum of two $\CR$ matrices and thus the eigenvalues of $\mathcal{P}_{n}$ must be the sum of the eigenvalues of $L_1 \otimes I_3$ and $(L_1 - L_2)\otimes R_{2\pi/n}$. From Theorem~\ref{circulant_eig}, we have the general result:
\begin{equation}
\label{eq:m_circulant_eig}
\lambda_i(L_{m}) = 1 - e^{\frac{2m\pi}{n}(i-1)j},\  i=1,\ldots,n.
\end{equation}
Additionally, from Theorem~\ref{circulant_eig}, the eigenvalues of $L_1 - L_2$ are the sum of eigenvalues of $L_1$ and $-L_2$. The eigenvalues of $R_{2\pi/n}$ are $\{1,e^{\pm\frac{2\pi}{n}j}\} = e^{\frac{2k\pi}{n}j}$ where $k\in\{-1,0,1\}$. Thus, the eigenvalues of $\mathcal{P}_{n}$ and $\mathcal{P}_{n}^T$ are the set $i\in\{1,\ldots,n\},k\in\{-1,0,1\}$:
\begin{align}
\lambda(\mathcal{P}_{n},\mathcal{P}_{n}^T) =& \left(1- e^{\pm \frac{2\pi}{n}(i-1)j}\right) \nonumber \\
	&+ \left( e^{\pm \frac{4\pi}{n}(i-1)j} - e^{\pm \frac{2\pi}{n}(i-1)j}\right)e^{\pm \frac{2k\pi}{n}j} ,
\label{eq:eig_L}
\end{align}
where the positive and negative signs are for $\mathcal{P}_{n}$ and $\mathcal{P}_{n}^T$ respectively. In a similar fashion, the eigenvalues for $\mathcal{L}_{m}$ are:
\begin{align}
\label{eq:eig_Lalpha}
\lambda(\mathcal{L}_{m}) =& \left(1- e^{\frac{2m\pi}{n}(i-1)j} \right)e^{k\alpha_mj} \nonumber \\
& + \left(1- e^{\frac{-2m\pi}{n}(i-1)j} \right)e^{-k\alpha_mj} \nonumber \\
=& 2\left(\cos(k\alpha_m) - \cos\left(k\alpha_m + \frac{2\pi m(i-1)}{n}\right)\right).
\end{align}
Multiplying (\ref{eq:eig_L}) and (\ref{eq:eig_Lalpha}) gives the expression for $\lambda\left(\mathcal{P}_{n}\mathcal{L}_{m}\mathcal{P}_{n}^T\right) = \lambda^{(m)}_{ik}(\bm{x},t)$. To obtain the summation form in (\ref{eq:eig_Theorem}), note that $\mathcal{P}_{n}\mathcal{L}_{m}\mathcal{P}_{n}^T \in \CR$ for all $m$. Thus $\lambda\left(\mathcal{P}_{n}\sum\limits_{m=1}^{N} k_m\mathcal{L}_{m}\mathcal{P}^T_{n}\right) = \sum\limits_{m=1}^{N} k_m\left[ \lambda\left( \mathcal{P}_{n}\mathcal{L}_{m}\mathcal{P}^T_{n} \right)\right]$. Having obtained the summation form, (\ref{eq:eig_Theorem}) follows directly from trying to show (\ref{eq:convergence_cond}) given Lemma~\ref{W_n_Lemma}.

\end{proof}

\subsection{Proofs for Section \ref{sec:complex}}

\begin{proof}[Proof of Corollary \ref{corr:manifold_defined_red}]
Necessity is straightforward. We then consider sufficiency. Let us assume without loss of generality that $\bm{n}_k = \bm{e}_z$; the general case follows through a rotation under $R_{\eta_k}$. To prove the claim, we need to show that the rotational constraints are sufficient to ensure that all robots lie in the desired plane for $j = 1,\ldots,|\mathcal{V}_k|$. The claim then follows directly.  Now, the rotational constraints imply
\begin{equation}
	\bm{e}_{z}^{T}(\bm{x}_{i+1}\! -\! \bm{x}_i) = \bm{e}_{z}^{T}( {\bm{x}_{i+2}\! - \!\bm{x}_{i+1}} ),
\label{manifold_defined_2}
\end{equation}
for $i = 1,\ldots,|\mathcal{V}_k|-2$. For two robots with indices $j$ and $j+1$ that lie in the desired plane, we know that $\bm{e}_{z}^{T}\bm{x}_j = \bm{e}_{z}^{T}\bm{x}_{j+1} = a \in \reals$. If $j \in \{1,\ldots,|\mathcal{V}_k|-1\}$, we can recursively use \eqref{manifold_defined_2} to show
\begin{equation}
	\bm{e}_{z}^{T}\bm{x}_i= \bm{e}_{z}^{T}\bm{x}_j,
\label{manifold_defined_3}
\end{equation}
for $i = 1, \ldots, |\mathcal{V}_k|$. For the case where $j=|\mathcal{V}_k|$, we have:
\[
	\bm{e}_z^T\bm{x}_n = \bm{e}_z^T\bm{x}_1. 
\]
However, we note that \eqref{manifold_defined_1} still holds for $i = 1,\ldots,|\mathcal{V}_k|-2$ using the rotational constraints alone. Then, setting $i = |\mathcal{V}_k|-2$, we obtain
\[
	(\bm{e}_z^T\bm{x}_n - \bm{e}_z^T\bm{x}_1) + (|\mathcal{V}_k|-1)\bm{e}_z^T(\bm{x}_2 - \bm{x}_1) = (|\mathcal{V}_k|-1)\bm{e}_z^T(\bm{x}_2 - \bm{x}_1),
\]
which gives $\bm{e}_z^T\bm{x}_2 = \bm{e}_z^T\bm{x}_1$. We can now again use \eqref{manifold_defined_2} recursively to show \eqref{manifold_defined_3}, completing the proof.
\end{proof}

\subsection{Proofs for Section \ref{sec:additional}} 

\subsubsection{Control Over Formation Size}

In order to prove Theorem \ref{thm:size_final}, we first require several intermediate results detailed in the following lemmas. We first prove that $\bm{x}$ converges to the subspace $\mathcal{M}_n$.

\begin{lemma}[Convergence to $\mathcal{M}_n$ with Size Control]
\label{contraction_size}
Assume
\begin{equation}
\begin{split}
&\inf_{\alpha_m\in[m\pi/n-\alpha_{s_0}, m\pi/n+\alpha_{s_0}]}\bigg(\min\limits_{\substack{ 1\leq i\leq n \\ k\in\{-1,0,1\}}} \  \sum\limits_{m=1}^{N} k_m \, \lambda^{(m)}_{ik}\ \bigg)   > \! 0, 
\end{split}
\label{eig_Theorem_size}
\end{equation}
where $\lambda^{(m)}_{ik}$ has the form given in Theorem \ref{contraction_th_orig}.
Then the system $\dot{\bm{x}} = \bm{u}$ under the controller \eqref{symmetric_control_simplified_size} globally converges to $\mathcal{M}_n$.
\end{lemma}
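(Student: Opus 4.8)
The plan is to exploit that, although \eqref{symmetric_control_simplified_size} is a hybrid system, on each sampling interval $[k\tau,(k+1)\tau)$ the offset $\alpha_s$ is \emph{frozen} at the value $\alpha_s^{(k)} := \alpha_{s_0}f_s\!\left(\bar p(\bm x((k-1)\tau))\right)$, which by property (b) of $f_s$ satisfies $|\alpha_s^{(k)}|\le\alpha_{s_0}$. Hence on that interval the closed loop is exactly the linear system $\dot{\bm x} = -\mathcal{L}^{(k)}\bm x$ with $\mathcal{L}^{(k)} = \sum_{m=1}^{N}k_m\big(L_m\otimes R_{m\pi/n+\alpha_s^{(k)}} + L_m^T\otimes R_{m\pi/n+\alpha_s^{(k)}}^T\big)\in\CR$, i.e., precisely the object analyzed in Theorem \ref{contraction_th_orig} but with control angle $\alpha_m = m\pi/n+\alpha_s^{(k)}$ ranging over the compact interval $I_m := [\,m\pi/n-\alpha_{s_0},\ m\pi/n+\alpha_{s_0}\,]$. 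First I would invoke Lemma \ref{Lemma:size_invariant}, so that $\mathcal{M}_n$ is flow-invariant for every such mode; equivalently $\bar V\mathcal{L}^{(k)}\bar U^T\bar U = 0$, so that for $\bm z := \bar V\bm x$ one has $\bar V\mathcal{L}^{(k)}\bm x = \bar V\mathcal{L}^{(k)}\bar V^T\bm z$.

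Second, I would establish a \emph{uniform} contraction estimate over all modes. Repeating the eigenvalue computation in the proof of Theorem \ref{contraction_th_orig} verbatim — factor $V = \mathcal{W}_n\mathcal{P}_n$, use that $\mathcal{P}_n$ and each $\mathcal{L}_m$ lie in $\CR$ and are simultaneously unitarily diagonalizable so that $\mathcal{P}_n\mathcal{L}^{(k)}\mathcal{P}_n^T$ has the real eigenvalues $\sum_m k_m\lambda^{(m)}_{ik}$, then apply Lemma \ref{W_n_Lemma} — the requirement $\bar V\mathcal{L}^{(k)}\bar V^T\succ0$, equivalently (Remark \ref{V_Vbar_T}) $V\mathcal{L}^{(k)}V^T\succ0$, holds iff $\min_{i,k}\sum_m k_m\lambda^{(m)}_{ik} > 0$ with the angles at $\alpha_m = m\pi/n+\alpha_s^{(k)}$. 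The map $(\alpha_1,\dots,\alpha_N)\mapsto \lambda_{\min}\big((\bar V\mathcal{L}\bar V^T)_{\mathrm{sym}}\big)$ is continuous on the compact box $\prod_m I_m$, so Assumption \eqref{eig_Theorem_size} — an infimum taken over exactly this box — yields a single constant $\beta>0$ with $(\bar V\mathcal{L}^{(k)}\bar V^T)_{\mathrm{sym}}\succeq\beta I$ for every $k$, independent of the (data-dependent) switching sequence.

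Third, I would stitch the modes together using the common quadratic $W(\bm x) := \|\bar V\bm x\|^2 = \bm z^T\bm z$, which is the constant contraction metric $\bar V^T\bar V$ underlying the partial-contraction argument (Theorem \ref{manifolds_and_partial}, Corollary \ref{co:applied_contraction}); a direct Lyapunov computation is used here rather than a bare appeal to those theorems, since they are stated for smooth systems. Because the control is bounded on bounded sets and switching occurs only at the fixed times $k\tau$, the trajectory $\bm x(\cdot)$ (hence $\bm z(\cdot)$) is continuous and globally defined. On each interval $\dot{\bm z} = -\bar V\mathcal{L}^{(k)}\bar V^T\bm z$, so
\[
\frac{d}{dt}W = -2\,\bm z^T(\bar V\mathcal{L}^{(k)}\bar V^T)_{\mathrm{sym}}\bm z \le -2\beta\,W ,
\]
hence $W(\bm x(t))\le e^{-2\beta(t-k\tau)}W(\bm x(k\tau))$ on $[k\tau,(k+1)\tau)$; by continuity of $W$ at the switching instants and induction over $k$, $W(\bm x(t))\le e^{-2\beta t}W(\bm x(0))$ for all $t\ge0$. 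Since $\bar V\bm x = \bm 0 \Leftrightarrow \bm x\in\mathcal{M}_n$, this is global exponential convergence of $\bm x$ to $\mathcal{M}_n$, as claimed.

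The main obstacle is the second step: one must verify that the negative-definiteness of the projected Jacobian, obtained in Theorem \ref{contraction_th_orig} for a fixed control angle, degrades \emph{uniformly} as the angle sweeps the admissible interval — this is exactly why \eqref{eig_Theorem_size} takes an $\inf$ over $\alpha_m$ rather than imposing a pointwise condition, and where compactness of $\prod_m I_m$ plus continuity of the eigenvalues is essential. A minor secondary point, inherited from the proof of Theorem \ref{contraction_th_orig}, is that $\mathcal{P}_n$ is rank-deficient (its five zero modes correspond to the translation/scaling/in-plane-rotation freedoms of Remark \ref{rot_freedom}), so Lemma \ref{W_n_Lemma} is applied after restricting to the range of $\mathcal{W}_n^T$, which is complementary to $\mathcal{N}(\mathcal{P}_n^T)$ precisely because $V$ is full row rank.
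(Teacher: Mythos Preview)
Your proof is correct and follows essentially the same route as the paper: both invoke flow-invariance (Lemma \ref{Lemma:size_invariant}), reduce the positive-definiteness requirement $\bar V\mathcal L\bar V^T\succ0$ to the eigenvalue condition of Theorem \ref{contraction_th_orig}, and then account for the unknown offset $\alpha_s$ by taking the infimum over the compact interval $[m\pi/n-\alpha_{s_0},\,m\pi/n+\alpha_{s_0}]$.

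The one substantive difference is in how the hybrid switching is handled. The paper simply writes the auxiliary system of Theorem \ref{manifolds_and_partial} with $\mathcal L(\bm x_p,\tau)$ treated as a time-varying coefficient and asks that \eqref{convergence_cond_size} hold ``uniformly, $\forall\,\bm x_p,\tau$'', leaving the piecewise-constant nature of $\alpha_s$ implicit. You instead make the switching explicit, extract a single constant $\beta>0$ valid for every mode via compactness, and use $W=\|\bar V\bm x\|^2$ as a common quadratic Lyapunov function, stitching the intervals by continuity of $\bm z$ at the switching instants. This is the same underlying estimate but packaged more carefully for a switched system; it buys you an honest treatment of the jumps that the paper's appeal to Corollary \ref{co:applied_contraction} (stated for smooth $\bm f$) technically glosses over.
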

\begin{proof}
To show convergence to $\mathcal{M}_n$, consider the following auxiliary system for the transformed closed-loop dynamics $\dot{\bm{z}}_p = \bar{V}\dot{\bm{x}_p} = \bar{V}\bm{u}$ where $\bm{u}$ is as given in \eqref{symmetric_control_simplified_size}:
\[
\dot{\bm{y}} =  -\bar{V}\mathcal{L}(\bm{x}_p,\tau)\,(\bar{V}^T\bm{y} + \bar{U}^T\bar{U}\bm{x}_p) , 
\]
Note that $\bm{y}=\bm{0}$ is a solution of this auxiliary system since $\bar{V}\mathcal{L}(\bm{x}_p,\tau)\,( \bar{U}^T\bar{U}\bm{x}_p)=\bm{0},\ \forall \bm{x}_p$. Then to show that the auxiliary system is contracting in $\bm{y}$, we require:
\[
\bar{V} \bigg( \mathcal{L}(\bm{x}_p,\tau)\bigg)\bar{V}^T \succ 0, \quad \text{ uniformly},\ \forall \bm{x}_p, \tau. 
\]
By Remark \ref{V_Vbar_T}, the above stability requirement can be re-formulated in terms of $V$ directly, i.e.,  

\begin{equation} \label{convergence_cond_size}
\mathcal{W}_{n} \mathcal{P}_{n} \bigg( \mathcal{L}(\bm{x}_p,\tau)\bigg)  \mathcal{P}^{T}_{n} \mathcal{W}^{T}_{n}  \succ 0,\quad \text{ uniformly} ,\ \forall \bm{x}_p, \tau.
\end{equation}
The above condition is equivalent to (\ref{eq:convergence_cond}) with $\bm{g}=\bm{0}$, $\mathcal{L} = \mathcal{L}(\bm{x}_p,\tau)$ and the additional stipulation that \eqref{convergence_cond_size} holds for all possible $\bm{x}_p$ and time lags $\tau$. Temporarily neglecting this additional condition and applying Theorem \ref{contraction_th_orig} with $\bm{g}=\bm{0}$ gives the expression within the infimum in \eqref{eig_Theorem_size}. Then to account for all possible  $\bm{x}_p$ and time lags $\tau$, we note that the dependance of $\mathcal{L}$ on $\bm{x}_p$ and $\tau$ stems from the rotation angle $\alpha_m = m\pi/n + \alpha_{s}(\bm{x}_p,\tau)$, which changes every $\tau$ seconds. As $|f_{s}(\bar{p})|\leq1$,  it follows that $\alpha_m \in [m\pi/n-\alpha_{s_0}, m\pi/n+\alpha_{s_0}]$, for all time (despite the discrete switches every $\tau$ seconds). Choosing the worst case value for $\alpha_m$ within this interval then completes the expression given in \eqref{eig_Theorem_size}. 
  \end{proof}

\begin{remark}
Note that the continuous/discrete hybrid dynamics induced by the time lag parameter $\tau$ do not affect the convergence analysis for $\mathcal{M}_n$. By taking the infimum in \eqref{eig_Theorem_size}, we have that for any modification of the standard rotation angle $m\pi/n$ by the angle $\alpha_s$, condition \eqref{convergence_cond_size} is still satisfied.
\end{remark}

We now study the behavior of the system within the invariant subspace $\mathcal{M}_n$, with respect to the desired subset $\mathcal{M}_{n\rho}$. Here, the time lag $\tau$ will play a crucial role in governing the convergence of the inter-robot separation to the desired distance. 

\begin{lemma}[Formation Size Discrete Dynamics]\label{Lemma:Mn_to_Mnrho}
Assume $\bm{x}$ lies in the subspace $\mathcal{M}_n$. Let $\beta =\sqrt{2(1-\cos(2\pi/n))}$, $\Gamma = \sum\limits_{m=1}^{N}k_m\gamma_m$ where $\gamma_m$ are fixed constants derivable from geometry, specifically,
\[
	\gamma_1 = 1,\text{ and } \gamma_m \in \begin{cases} \left(1, \dfrac{1}{\sin(\frac{\pi}{n})}\right] &\mbox{ if }  $n$ \text{ is even,} \\
					     \left(1, \dfrac{1}{2\sin(\frac{\pi}{2n})}\right] &\mbox{ if }  $n$ \text{ is odd,}
			\end{cases}
\]
for $m = 2,\ldots,N$, and let $C = 2\beta\Gamma$. Then, under the action of the modified cyclic controller given in \eqref{symmetric_control_simplified_size}, each inter-robot distance error $p_i$ admits the following recursive solution:
\begin{equation}
\label{size_discrete}
	p_{i_{k+1}}= (p_{i_k} - 1)e^{C\sin(\alpha_s(p_{i_{k-1}}))\tau}+1,
\end{equation}
where $p_{i_k} = p_i(k\tau)$.
\end{lemma}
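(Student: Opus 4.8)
The plan is to use flow--invariance to reduce the claim to a scalar radial dynamics for the circumradius of the polygon. By Lemma~\ref{Lemma:size_invariant} the trajectory stays in $\mathcal{M}_n$, so at every instant the robots form a regular polygon; summing the control law \eqref{symmetric_control_simplified_size} over $i$ gives $\sum_i\bm{u}_i=\bm{0}$ (each pursuit pair contributes a telescoping term), so the centroid is stationary and we may translate coordinates so that it sits at the origin. Using the complex--plane identification $\reals^2\cong\mathbb{C}$ for the (horizontal) plane of the formation, every configuration in $\mathcal{M}_n$ can then be written $\bm{x}_i = r\,e^{j\phi}\,\omega^{-i}$, $\omega:=e^{j2\pi/n}$, with $r>0$ the circumradius and $\phi$ an overall phase; in particular $\ell:=\|\bm{x}_{i+1}-\bm{x}_i\| = r\,|\omega^{-1}-1| = r\beta$ with $\beta=\sqrt{2(1-\cos(2\pi/n))}=2\sin(\pi/n)$, and this value is the same for all $i$, so that all the $p_i$ coincide and equal $\bar p = 1-\ell/\rho$. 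Thus $\alpha_s$ computed from $\bm{x}((k-1)\tau)$ is precisely $\alpha_{s_0}f_s(p_{i_{k-1}})$, the quantity denoted $\alpha_s(p_{i_{k-1}})$ in \eqref{size_discrete}.

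The core of the argument is to evaluate the control law on $\mathcal{M}_n$ during one time slot $[k\tau,(k+1)\tau)$, on which $\alpha_s$ is a fixed constant and every $R_m$ acts as multiplication by $e^{j\alpha_m}$, $\alpha_m=m\pi/n+\alpha_s$. Since $\bm{x}_{i\pm m}-\bm{x}_i=\bm{x}_i(\omega^{\mp m}-1)$, a short trigonometric reduction (via $\cos A-\cos B=-2\sin\tfrac{A+B}{2}\sin\tfrac{A-B}{2}$, using $\alpha_m-m\pi/n=\alpha_s$) collapses the forward/backward pursuit sum to
\[
\bm{u}_i \;=\; \Big(4\sin(\alpha_s)\sum_{m=1}^{N}k_m\sin(m\pi/n)\Big)\,\bm{x}_i \;=:\; \mu\,\bm{x}_i .
\]
The key point is that $\mu$ is a \emph{real} scalar independent of $i$: all the rotational (imaginary) contributions cancel precisely because the controller is symmetric in the $i+m$ and $i-m$ neighbors, so $\dot{\bm{x}}_i=\mu\bm{x}_i$ is a pure radial scaling of the polygon (consistent with its staying in $\mathcal{M}_n$). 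Setting $\gamma_m:=\sin(m\pi/n)/\sin(\pi/n)$ and $\Gamma:=\sum_m k_m\gamma_m$ one reads off $\mu = 2\beta\Gamma\sin(\alpha_s) = C\sin(\alpha_s)$.

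It then only remains to integrate over the slot and convert back to normalized errors: $\ell$ obeys $\ell(t)=\ell(k\tau)e^{\mu(t-k\tau)}$, hence $\ell((k+1)\tau)=\ell(k\tau)e^{C\sin(\alpha_s)\tau}$, and substituting $\ell=\rho(1-p)$ yields $1-p_{i_{k+1}} = (1-p_{i_k})e^{C\sin(\alpha_s(p_{i_{k-1}}))\tau}$, i.e.\ exactly \eqref{size_discrete}. Finally, the stated form of the constants follows from elementary facts about $\sin(m\pi/n)$: $\gamma_1=1$ is immediate; for $2\le m\le N$ (recall $N\le n-2$) one has $\sin(m\pi/n)>\sin(\pi/n)$ and hence $\gamma_m>1$; and the largest admissible $\gamma_m$ occurs where $m\pi/n$ is nearest $\pi/2$, namely $m=n/2$ for even $n$ (giving $\gamma_m\le 1/\sin(\pi/n)$) and $m=(n\pm1)/2$ for odd $n$ (giving $\gamma_m\le\cos(\pi/(2n))/\sin(\pi/n)=1/(2\sin(\pi/(2n)))$).

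I expect the only genuine work to be the trigonometric collapse of $\bm{u}_i$ to the scalar multiple $\mu\bm{x}_i$ --- in particular verifying that the imaginary parts cancel --- together with the bookkeeping that matches $2\beta\Gamma$ with $4\sum_m k_m\sin(m\pi/n)$; everything else is routine. An alternative is to specialize the circulant eigenvalue formulas $\lambda^{(m)}_{ik}$ from the proof of Theorem~\ref{contraction_th_orig} to the single ``polygon mode'' of the block--circulant operator, but the direct complex--number computation is more transparent here.
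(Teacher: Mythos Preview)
Your proof is correct and follows essentially the same route as the paper: flow--invariance of $\mathcal{M}_n$ forces a regular polygon at all times, the symmetric cyclic control then reduces to a pure radial scaling $\dot{\bm{x}}_i=\mu\,\bm{x}_i$ with $\mu=C\sin(\alpha_s)$, and integrating over one $\tau$-slot gives \eqref{size_discrete}. The only difference is cosmetic: the paper argues step~3 via vector geometry and a figure (writing $\dot{\bm{x}}_i$ along the outward unit vector $\hat{\bm{x}}_{o_i}$ and leaving $\gamma_m$ as ``derivable from geometry''), whereas you use the complex identification $\bm{x}_i=r e^{j\phi}\omega^{-i}$ and a direct trigonometric collapse, which has the bonus of producing the explicit formula $\gamma_m=\sin(m\pi/n)/\sin(\pi/n)$ and making the bounds on $\gamma_m$ transparent.
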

\begin{proof}
First note that by Lemma \ref{Lemma:size_invariant}, $\mathcal{M}_n$ is flow-invariant with respect to the control law given in \eqref{symmetric_control_simplified_size}. Thus, we will exploit the symmetry properties attributed to regular polygons. In particular, recall the definition $p_{i}:= 1 - \|\bm{x}_{i+1} - \bm{x}_i\|/\rho$. We note that for $\bm{x}\in\mathcal{M}_n$, $p_1 = p_2 = \ldots = p_n = \bar{p}$. Consider the time-derivative of $p_i$ within the time-range $[k\tau, (k+1)\tau)$:

\[
	\dot{p}_i = -\dfrac{(\bm{x}_{i+1}-\bm{x}_i)^T(\dot{\bm{x}}_{i+1}-\dot{\bm{x}}_i)}{\rho\|\bm{x}_{i+1}-\bm{x}_i\|}.
\]
To evaluate $(\dot{\bm{x}}_{i+1}-\dot{\bm{x}}_i)$, consider Figure \ref{fig:cyclic_size}. 
\begin{figure}[htbp]
\vspace{-2mm}
\begin{center}
	\includegraphics[scale=0.7]{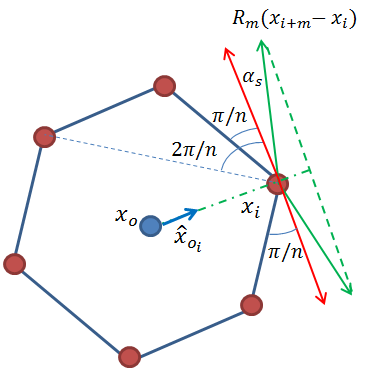}
\end{center}	
	\caption{Cyclic control with $\alpha_s$}
	\label{fig:cyclic_size}
\vspace{-2mm}
\end{figure}

From Remark \ref{Remark:cyclic_fixed}, we see the geometric realization of the forward and rear components of the cyclic controller cancelling for $\bm{x} \in \mathcal{M}_n$, when $\alpha_m = m\pi/n$. We can also see that the vector $R_m(\bm{x}_{i+m}-\bm{x}_i)$ where $R_m$ is a rotation matrix with angle $m\pi/n + \alpha_s$, is simply a rotation of $R_{m\pi/n}(\bm{x}_{i+m}-\bm{x}_i)$ about the plane normal by the angle $\alpha_s$. Thus, we can write:

\[
	\begin{split}
		\dot{\bm{x}}_i &= 2\sin(\alpha_s)\left(\sum\limits_{m=1}^{N}k_m\|\bm{x}_{i+m}-\bm{x}_i\|\right) \hat{\bm{x}}_{o_i} \\
		\dot{\bm{x}}_{i+1} &= 2\sin(\alpha_s)\left(\sum\limits_{m=1}^{N}k_m\|\bm{x}_{i+1+m}-\bm{x}_{i+1}\|\right) \hat{\bm{x}}_{o_{i+1}} \\
					    &= 2\sin(\alpha_s)\left(\sum\limits_{m=1}^{N}k_m\|\bm{x}_{i+m}-\bm{x}_i\|\right) \hat{\bm{x}}_{o_{i+1}},
	\end{split}
\]
where $\hat{\bm{x}}_{o_i}$ is the unit vector pointing outward from the instantaneous geometric center $\bm{x}_0$ towards robot $i$. Exploiting symmetry, we can re-write $\|\bm{x}_{i+m}-\bm{x}_i\|$ as $\gamma_m\|\bm{x}_{i+1}-\bm{x}_{i}\|$, where $\gamma_1 = 1$ and $\gamma_m$ for $m = 2,\ldots,N$ may be derived from geometry. Using these simplifications, we obtain 
\[ 
	\begin{split}
 	(\dot{\bm{x}}_{i+1}-\dot{\bm{x}}_i) &= 2\sin(\alpha_s)\left(\underbrace{\sum\limits_{m=1}^{N}k_m\gamma_m}_{:=\Gamma>0}\right)\|\bm{x}_{i+1}-\bm{x}_i\| \\ &\times \underbrace{\sqrt{2(1-\cos(2\pi/n))}}_{:=\beta>0} \dfrac{(\bm{x}_{i+1}-\bm{x}_i)}{\|\bm{x}_{i+1}-\bm{x}_i\|}. 
	\end{split}
\]
We now recall that $\alpha_s = \alpha_s(\bar{p}_{k-1})$, and is constant over the time range $[k\tau, (k+1)\tau)$. Thus, we can write $\dot{p}_i$ as: 
\begin{equation}
\begin{split}
\dot{p}_i &= -2\beta\Gamma\sin\big(\alpha_s(\bar{p}_{k-1})\big)\dfrac{\|\bm{x}_{i+1}-\bm{x}_{i}\|}{\rho}\\
	       &= C\sin\big(\alpha_s(p_{i_{k-1}})\big)(p_{i}-1).
\label{size_cont}
\end{split}
\end{equation}
Solving the above equation exactly gives the recursive formula given in \eqref{size_discrete}.
 \end{proof}

We are now ready to derive sufficient conditions for the time lag parameter $\tau$ to ensure that the recursive solution given in \eqref{size_discrete} converges to 0. 

\begin{lemma}[Convergence to $\mathcal{M}_{n\rho}$ within $\mathcal{M}_n$] \label{Mn_Mnrho}
Assume $\bm{x}$ lies in the subspace $\mathcal{M}_n$. Let $T$ denote a positive constant such that $(1/2)T|p_i| \leq |\sin\left(\alpha_{s_0}f_s(p_i)\right)| \leq T |p_i|$ for $|p_i|<1$ (if $f_s$ is the saturation function, $T = \alpha_{s_0}$). Suppose the time lag parameter $\tau$ satisfies the following bound:
\begin{equation}
	\tau < \min \left\{ \dfrac{1}{C}, \dfrac{1}{8CT}\right\}.
\label{tau_constraint}
\end{equation}
Then the neighbor inter-robot distance governed by the discrete dynamics given in \eqref{size_discrete} converges to the desired distance $\rho$. That is, $\bm{x}$ converges to the subspace $\mathcal{M}_{n\rho}$.
\end{lemma}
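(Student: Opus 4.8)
I would prove this by reducing the statement to a scalar delayed recursion and showing that its equilibrium $0$ is globally attracting under \eqref{tau_constraint}. Since $\mathcal{M}_n$ is flow-invariant with respect to \eqref{symmetric_control_simplified_size} (Lemma~\ref{Lemma:size_invariant}) and regular polygons are fully symmetric, as in the proof of Lemma~\ref{Lemma:Mn_to_Mnrho} every neighbor error coincides at all times, $p_1(t)=\cdots=p_n(t)=\bar p(t)$. Writing $p_k:=\bar p(k\tau)$ and $\phi(p):=C\tau\sin\!\big(\alpha_{s_0}f_s(p)\big)$, the discrete dynamics \eqref{size_discrete} collapse to the scalar recursion $1-p_{k+1}=(1-p_k)\,e^{\phi(p_{k-1})}$, so it suffices to show $p_k\to 0$. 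I would first record the elementary facts used throughout: (i) $1-p_k=\|\bm{x}_{i+1}-\bm{x}_i\|/\rho>0$, hence $p_k<1$ for all $k$; (ii) by properties (a)--(b) of $f_s$ (and $\alpha_{s_0}<\pi$), $\operatorname{sign}\phi(p)=\operatorname{sign}(p)$ and $|\phi(p)|\le C\tau$, so $e^{\phi(p_{k-1})}$ is $\gtrless 1$ according as $p_{k-1}\gtrless 0$; i.e.\ the controller always drives $\|\bm{x}_{i+1}-\bm{x}_i\|$ monotonically toward $\rho$, the only role of the lag being a possible overshoot past $\rho$, and whenever $\|\bm{x}_{i+1}-\bm{x}_i\|<\rho$ the robots expand, so they never collapse.

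The next step is to pass to the regime $|p_k|<1$. For $|p|\ge 1$, property (a) and continuity give $|\phi(p)|\ge c_0>0$ (for the saturation choice $\phi(p)=C\tau\operatorname{sign}(p)\sin\alpha_{s_0}$ exactly), so while $|p_{k-1}|\ge 1$ the quantity $\big|\,\|\bm{x}_{i+1}-\bm{x}_i\|-\rho\,\big|$ strictly decreases by a fixed multiplicative factor per step; combined with $\tau<1/C$ from \eqref{tau_constraint} (which keeps $e^{\phi}\in[e^{-1},e]$, so no single overshoot past $\rho$ is large) this shows $p_k$ enters $\{|p|<1\}$ in finitely many steps and thereafter never leaves. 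This reduction, together with the control of the overshoot, is the step I expect to be the main obstacle: it is the sampled (delayed) feedback that could a priori produce growing oscillations, and the conservative constant in $\tau<1/(8CT)$ is what forecloses this — this delay is also why a naive bound of order $1/(CT)$ does not suffice.

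Finally, within $\{|p_k|<1\}$ I would linearise the delayed recursion. Writing it as $p_{k+1}=p_k-(1-p_k)\big(e^{\phi(p_{k-1})}-1\big)=p_k-a_k\,p_{k-1}$ with
\[
a_k:=C\tau\,(1-p_k)\,\frac{\sin\!\big(\alpha_{s_0}f_s(p_{k-1})\big)}{p_{k-1}}\cdot\frac{e^{\phi(p_{k-1})}-1}{\phi(p_{k-1})},
\]
the two-sided sandwich $\tfrac12 T|p|\le|\sin(\alpha_{s_0}f_s(p))|\le T|p|$ (valid for $|p|<1$, with matching sign and, at $p_{k-1}=0$, interpreted in the limit, which is positive by property (c)), the bound $0<1-p_k<2$ with $1-p_k$ bounded below (robots do not collapse), and $\tfrac{e^x-1}{x}\in(\tfrac12,2)$ for $|x|\le C\tau T<\tfrac18$, together yield $0<\underline a\le a_k\le\bar a<1$ on this region, the upper bound $\bar a<1$ being exactly what $\tau<1/(8CT)$ buys. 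The scalar delayed recursion $p_{k+1}=p_k-a_k p_{k-1}$ then has frozen companion matrices $\left(\begin{smallmatrix}0&1\\-a_k&1\end{smallmatrix}\right)$ that are uniformly Schur (the roots of $z^2-z+a_k$ lie strictly inside the unit disc for every $a_k\in(0,1)$) and, since $a_{k}$ varies arbitrarily slowly once $p_k$ is small, a standard discrete-time Lyapunov / slowly-varying argument gives $p_k\to 0$ geometrically. Hence $\|\bm{x}_{i+1}-\bm{x}_i\|\to\rho$ for every $i$, i.e.\ $\bm{x}$ converges to $\mathcal{M}_{n\rho}$, which is the claim.
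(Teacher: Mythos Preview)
Your overall strategy---reduce to the scalar delayed recursion on $p_k$, then analyze it---matches the paper's, but your execution of the core step diverges and has a genuine gap at the end. The paper does \emph{not} linearize to $p_{k+1}=p_k-a_k p_{k-1}$ and invoke LTV stability; instead it argues by sign and monotonicity. First it shows there must exist two consecutive terms of the same sign; then (using $e^{\zeta}\le 1+2\zeta$ for $\zeta\in[0,1]$ and $A:=2CT\tau<1/4$, which is exactly $\tau<1/(8CT)$) it proves that once two consecutive positive terms occur one has $p_{k+1}/p_k>1/2$, so the sequence stays positive and strictly decreases, and a comparison with the continuous inequality $\dot p\le \tfrac{CT}{2}p(p-1)$ forces the limit to be~$0$. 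The negative case is handled analogously (either it stays negative and the same comparison works, or it crosses once to positive and is trapped there). No time-varying linear stability machinery is used.

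Your LTV route breaks at the final step. Uniform Schur stability of each frozen companion matrix $\bigl(\begin{smallmatrix}0&1\\-a_k&1\end{smallmatrix}\bigr)$ does \emph{not} by itself imply stability of the time-varying recursion; you need either a common quadratic Lyapunov function for the whole interval $[\underline a,\bar a]$ or a genuine slow-variation bound. You appeal to the latter (``$a_k$ varies arbitrarily slowly once $p_k$ is small''), but this is circular: smallness of $p_k$ is precisely what you are trying to establish, and in the initial regime $|p_k|$ is of order~$1$, so $a_k$ is not slowly varying. Moreover, your lower bound $\underline a>0$ rests on ``$1-p_k$ bounded below (robots do not collapse),'' which is plausible (indeed true) but is asserted rather than proved; without it $a_k$ can approach $0$, pushing an eigenvalue of the frozen matrix to the unit circle and killing any uniform contraction estimate. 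A smaller point: the claim that the sequence ``enters $\{|p|<1\}$ and thereafter never leaves'' also needs a short argument tracking two or three consecutive iterates, not just one. To repair your approach you would have to exhibit an explicit common Lyapunov function for the companion family on $[\underline a,\bar a]$---possible, but real work---whereas the paper's monotonicity-and-ratio argument sidesteps all of this.
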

\begin{proof}
The layout of the proof is as follows: we first show that there exists some time index $k'$ such that the inter-robot distance error $p_{i_k}$ has the same sign in two consecutive time steps. Next, we prove that if $p_{i_k}$ has the same sign \emph{for all} $k>k'$, then $p_{i_k}$ converges to zero. Finally, we provide sufficient conditions to ensure that either (a) $p_{i_k}$ possesses the same sign for all $k>k'$, or (b) if there exists a time index $k''$ where $p_{i_k}$ changes sign, then it never crosses 0 again. We begin with the following claim:
\begin{claim}
Given the inter-robot distance error discrete dynamics given in \eqref{size_discrete}, there exists a time index $k'$ such that $p_{i_{k'-1}}p_{i_{k'}}>0$. That is, the distance error has the same sign in two consecutive time steps. 		
\end{claim}
\begin{proof}
Assume by way of contradiction that $p_{i_{k-1}}p_{i_{k}}<0$,  $p_{i_{k-1}}<0$ and $1>p_{i_k}>0$. Then, $\sin(\alpha_s(p_{i_{k-1}}))<0$. From \eqref{size_discrete}, we have $1>p_{i_{k+1}}>p_{i_{k}}>0$. Alternatively, assume that  $p_{i_{k-1}}p_{i_{k}}<0$,  $p_{i_{k-1}}>0$ and $p_{i_k}<0$. Then, $\sin(\alpha_s(p_{i_{k-1}}))>0$. Using \eqref{size_discrete}, we have $p_{i_{k+1}}<p_{i_{k}}<0$. Setting $k' = k+1$ gives the desired contradiction.
 \end{proof}
Given there exists a time index $k'$, such that $p_{i_{k'-1}}p_{i_{k'}}>0$, consider the following result:
\begin{claim} \label{z_mono}
Let $k'$ be a time index such that $p_{i_{k'-1}}p_{i_{k'}}>0$. If
\begin{enumerate}
	\item $p_{i_{k'}}>0$ and $p_{i_{k}}>0$ for all $k>k'$, or
	\item $p_{i_{k'}}<0$ and $p_{i_{k}}<0$ for all $k>k'$,
\end{enumerate}
then $p_{i_k}$ converges to 0. 
\end{claim}
\begin{proof}
Consider the first case. Without loss of generality, assume $p_{i_{k'-1}}>p_{i_{k'}}$. If this is not true, then take $k''$ to be $k'+1$. Then given \eqref{size_discrete}, we have $p_{i_{k''}} =  p_{i_{k'+1}} < p_{i_{k'}} = p_{i_{k''-1}}$, and we proceed with our analysis for $k>k''$. 

Now if $p_{i_k}>0$ for all $k>k'$, and $p_{i_k}p_{i_{k-1}}>0$ for all $k>k'$, then given \eqref{size_discrete}, we deduce that $p_{i_k}$ is a strictly decreasing monotonic sequence lower-bounded by 0, and therefore has a limit $l\geq 0$. We will show that $l=0$.  

Consider the continuous time differential equation in \eqref{size_cont}. Given the recursive solution in \eqref{size_discrete}, we know that $p_i(t) < p_{i_{k'-1}}$ (and consequently $\alpha_s(p_i(t)) < \alpha_s(p_{i_{k'-1}})$) for all $t> (k'-1)\tau$. Furthermore, since $p_i(t)-1<0$, we obtain the following differential inequality for $t\in [k'\tau, (k'+1)\tau]$:
\begin{equation}
	\begin{split}
	\dot{p_i}(t) &= C\sin\left(\alpha_s(p_{i_{k'-1})}\right)(p_i(t) - 1) \\
			 &< C\sin\left(\alpha_s(p_i(t)) \right)(p_i(t)-1).
	\end{split}
\label{diff_ineq}
\end{equation}
We now use the lower bound: $\sin\left(\alpha_s(p_i(t))\right)> (1/2)Tp_i(t)$. It follows then:
\[
	\dot{p_i}(t)< \dfrac{CT}{2}p_i(t)(p_i(t)-1).
\]
By the comparison theorem \cite{HKK:02}, we obtain for $t \in [k'\tau, (k'+1)\tau]$:
\[
	p_i(t) < \dfrac{1}{1+\left(\dfrac{1}{p_{i_{k'}}}- 1\right)e^{\frac{CT}{2}(t-k'\tau)}}:= \xi(t).
\]
In particular, we note that
\[
	p_{i_{k'+1}} < \xi((k'+1)\tau).
\]
Proceeding inductively, it follows that
\[
	p_{i_{k}} < \xi(k\tau), \forall k> k'.
\]
That is, the recursive solution is bounded above by $\xi(t)$, a Class-$\mathcal{L}$ function \cite{HKK:02}, for all $k>k'$. Consequently, the limit $l$ for $p_{i_k}$ must equal 0.

Consider now the second case. Without loss of generality, assume $p_{i_{k'}}>p_{i_{k'-1}}$. If this is not true, then take $k''$ to be $k'+1$. Then given \eqref{size_discrete}, we have $p_{i_{k''}} =  p_{i_{k'+1}} > p_{i_{k'}} = p_{i_{k''-1}}$, and we proceed with our analysis for $k>k''$. 

Now if $p_{i_k}<0$ for all $k>k'$ and $p_{i_k}p_{i_{k-1}}>0$ for all $k>k'$, then given \eqref{size_discrete}, we deduce that $p_{i_k}$ is a strictly increasing monotonic sequence upper-bounded by 0, and therefore has a limit $-l \leq 0$. We will show that $l=0$.  

We first prove that $l <1$. That is, there exists a time index $k^{*}$ such that for all $k>k^{*}$, $|p_{i_k}|<1$. By way of contradiction, assume that $l \geq1$ and suppose $p_{i_k} =-l - \epsilon$ where $\epsilon>0$ is an infinitesimally small positive number. Given \eqref{size_discrete}, we know that $p_{i_{k-1}} < p_{i_k} < -1$. Then since $C\tau > 0$, we know that $C\sin(\alpha_s(p_{i_{k-1}}))\tau < C\sin(\alpha_{s_0}f_s(-1))\tau<0$, and thus $e^{C\sin(\alpha_s(p_{i_{k-1}}))\tau} < 1$. Since the sequence $p_{i_k}$ is increasing monotonically, for the limit to be $-l$, $p_{i_k}< -l$ for all $k$. To obtain the desired contradiction, we will show that there exists a small enough $\epsilon$ such that $p_{i_{k+1}} > -l$. Using \eqref{size_discrete}, we seek an $\epsilon>0$ such that:
\[
	(-l - \epsilon - 1)e^{C\sin(\alpha_s(p_{i_{k-1}}))\tau} > -l -1. 
\]
Re-arranging the above expression, we require:
\[
	e^{C\sin(\alpha_s(p_{i_{k-1}}))\tau} < \dfrac{l+1}{l + 1 + \epsilon},
\]
Since $e^{C\sin(\alpha_s(p_{i_{k-1}}))\tau} < e^{C\sin(\alpha_{s_0}f_s(-1))\tau}<1$, there exists an $\epsilon>0$ small enough such that the above condition will be satisfied. We now proceed with our analysis for $k>k^{*}$. For $t>k^{*}\tau$, we have that $|p_i(t)| < |p_{i_{k^{*}}}|$. Then, using \eqref{size_cont} and that $|p_i(t)|<1$, we obtain the following differential inequality for $t\in [(k^{*}+1)\tau, (k^{*}+2)\tau]$:
\[
	\begin{split}
	\dot{p_i}(t) &= -C\left|\sin\left(\alpha_s(p_{i_{k^{*}})}\right)\right|(p_i(t) - 1) \\
			& > -C\left|\sin\left(\alpha_s(p_i(t))\right)\right|(p_i(t)-1).
	\end{split}
\]
Using the lower bound: $\left|\sin\left(\alpha_s(p_i(t))\right)\right|>(1/2)T|p_i(t)|$ for $t>k^{*}\tau$, it follows:
\[
		\dot{p_i}(t) > -\dfrac{CT}{2}|p_{i}(t)|(p_i(t)-1) .
\]
Once again, applying the Comparison Theorem \cite{HKK:02}, we obtain for $t\in [(k^{*}+1)\tau, (k^{*}+2)\tau]$:
\[
	|p_{i}(t)| < \left| \dfrac{1}{1 + \left(\frac{1}{p_{i_{k^{*}+1}}}-1\right)e^{\frac{CT}{2}(t-(k^{*}+1)\tau)}} \right|:= \eta(t).
\]
We note that:
\[
	|p_{i_{k^{*}+2}}| < \eta((k^{*}+2)\tau). 
\]
Proceeding inductively, it follows that
\[
	|p_{i_k}| < \eta(k\tau),\ \forall k>k^{*}+1.
\]
Thus, $|p_{i_k}|$ is bounded above by $\eta(t)$, a Class-$\mathcal{L}$ function \cite{HKK:02}, for all $k> k^{*}+1$. Thus, the limit $-l$ for $p_{i_k}$ must equal 0.
 \end{proof}

Now suppose $p_{i_{k'-1}}p_{i_{k'}}>0$ and $p_{i_{k'}}>0$. We now derive sufficient conditions to ensure that $p_{i_k}>0$ for all $k>k'$. Let $k = k'$. Since $\tau<1/C$, and $0<\sin(\alpha_s(p_{i_{k-1}}))<1$, we have $0<C\sin(\alpha_s(p_{i_{k-1}}))\tau<1$. Using the inequality $e^{\zeta}\leq 1 + 2\zeta$ for $\zeta \in [0,1]$, we may deduce:
\[
	e^{C\sin(\alpha_s(p_{i_{k-1}}))\tau} \leq 1 + 2C\sin(\alpha_s(p_{i_{k-1}}))\tau.
\]
Additionally, since $|\sin\left(\alpha_{s_0}f_s(p_{i_{k-1}})\right)| \leq T |p_{i_{k-1}}|$, we have: 
\[
	1 + 2C\sin(\alpha_s(p_{i_{k-1}}))\tau \leq 1 + \underbrace{2CT\tau}_{:=A>0} p_{i_{k-1}}.
\]
Since, $p_{i_k}-1< 0$, it follows
\[
	p_{i_{k+1}} \geq (p_{i_k} - 1)\left(1 + A p_{i_{k-1}}\right) + 1 = p_{i_k} + Ap_{i_{k-1}}p_{i_k} - Ap_{i_{k-1}}.
\]
Equivalently, 
\[
	\dfrac{p_{i_{k+1}}}{p_{i_k}} \geq 1 + Ap_{i_{k-1}} - A\dfrac{p_{i_{k-1}}}{p_{i_k}}.
\]
Note that $p_{i_k}/p_{i_{k-1}}>1>1/2$ (i.e. $p_{i_{k-1}}/p_{i_k}<2$) and $Ap_{i_{k-1}}>0$. Then, if $Ap_{i_{k-1}}/p_{i_k}<1/2$ (i.e. $A<1/4$), then,
\[
	0<(1/2)p_{i_k}<p_{i_{k+1}}<p_{i_k}. 
\]
By induction, we conclude that $p_{i_k}>0$ for all $k>k'$. By Claim \ref{z_mono}, $p_{i_k}$ converges to 0. The requirement $A<1/4$ is guaranteed by \eqref{tau_constraint}.

Finally, we address the scenario where  $p_{i_{k'-1}}p_{i_{k'}}>0$ and $p_{i_{k'}}<0$. There are two sub-cases:
\begin{enumerate}
	\item For all $k>k'$, $p_{i_{k}}<0$. Then by Claim \ref{z_mono} above, $p_{i_k}$ converges to 0. 
	\item There exists a time index $k''$ such that $p_{i_{k''-1}}p_{i_{k''}}<0$ (i.e. $p_{i_{k''}}>0$).
\end{enumerate}
Analyzing the second sub-case, using \eqref{size_discrete}, we have that $p_{i_{k''+1}}>p_{i_{k''}}>0$. Then given $A<1/4$, $p_{i_k}>0$ for all $k>k''$. Thus, $p_{i_k}$ does not change sign again, completing the proof.

\end{proof}
%
We are now ready to prove Theorem \ref{thm:size_final} which links convergence analysis both on and off the subspace $\mathcal{M}_n$.

\begin{proof}[Proof of Theorem \ref{thm:size_final}]
By Lemmas \ref{Lemma:size_invariant}, \ref{Lemma:Mn_to_Mnrho} and \ref{Mn_Mnrho}, we know that if $\bm{x}(0)$ lies in the flow-invariant subspace $\mathcal{M}_n$, then $\bm{x}$ converges to the manifold $\mathcal{M}_{n\rho}$.

We now address the scenario where $\bm{x}(0) \in \reals^{3n}\setminus \mathcal{M}_n$. The state $\bm{x}$ may be decomposed into projections both on and off $\mathcal{M}_n$ as follows:
\[
	\bm{x} = \underbrace{\bar{U}^{T}\bar{U}\bm{x}}_{:=\bm{x}_{M} \in \mathcal{M}_n} + \underbrace{\bar{V}^{T}\bar{V}\bm{x}}_{:=\bm{x}_{M^\perp} \in \mathcal{M}_n^{\perp}}.
\] 
By Lemma \ref{contraction_size}, we have that $\|\bm{x}_{M^\perp}\|$ is decreasing exponentially to 0. Thus $\bm{x} \rightarrow \bm{x}_M$. All that remains is to show that $\bm{x}_M \rightarrow \mathcal{M}_{n\rho}$. 
To prove this result, let us examine the dynamics of $\bm{x}_M$:
\[
	\begin{split}
	\dot{\bm{x}}_M &= \bar{U}^T\bar{U} \bm{u} = -\bar{U}^T\bar{U}\mathcal{L}(\bm{x},\tau)\bm{x}\\
				&= -\mathcal{L}(\bm{x},\tau) \bar{U}^T\bar{U}\bm{x} \\
				&= -\mathcal{L}(\bm{x},\tau)\bm{x}_{M} \\
				&= -\mathcal{L}(\bm{x}_M,\tau)\bm{x}_{M} - \underbrace{\left( \mathcal{L}(\bm{x},\tau) - \mathcal{L}(\bm{x}_M,\tau)\right) \bm{x}_{M}}_{:=\bm{u}_p} ,
	\end{split}
\]
where the order of $\bar{U}^T\bar{U}$ and $\mathcal{L}$ can be swapped since the order of cyclic control and projection onto the subspace $\mathcal{M}_n$ is irrelevant. Note that from the analysis conducted in Lemma \ref{Lemma:Mn_to_Mnrho}, the center control plays no part in the convergence analysis to $\mathcal{M}_{n\rho}$. Thus we see that the dynamics for $\bm{x}_M$ are perturbed by the term $\bm{u}_p$ due to formation error. However since $\|\bm{x}-\bm{x}_M\|$ is decreasing exponentially to 0, we have that $\bm{u}_p$ is bounded and also decreasing exponentially to $\bm{0}$. Thus, there exists some time $t^\star$ following which the conditions given in Lemmas \ref{Lemma:Mn_to_Mnrho} and \ref{Mn_Mnrho} are \emph{sufficient} to ensure that $\bm{x}_M \rightarrow \mathcal{M}_{n\rho}$. 

 \end{proof}
\begin{remark}[Time Lag Parameter Bound]
Note that the bound for the time lag parameter used in the proof above (derived in the proofs for Lemmas \ref{Lemma:Mn_to_Mnrho} and \ref{Mn_Mnrho}) is less restrictive than the bound given in the statement of Theorem \ref{thm:size_final} in the main body of the text. Thus, satisfaction of the constraint in \eqref{tau_constraint_final} is sufficient for the convergence results to hold. We present the more restrictive condition in the main body of the text to simplify the exposition.
\end{remark}

Finally, we present the decoupled framework for controlling the formation size of a Johnson polyhedron by decoupling $\mathcal{Q}_1$ from the set $\mathcal{Q}$. Suppose the tuple $(\bm{n}_1, R_{\eta_1})$ describes the plane of the first polygon $\mathcal{Q}_1$ and let $(\bm{n}_2, R_{\eta_2})$ define the plane of an adjacent polygon $\mathcal{Q}_2$ with robots $j$ and $j+1$ forming the shared edge. Define the variables $\nu_f, \nu_b$ and $k_r$ as follows:
\begin{align}
	k_r &=  (\bm{x}_{j+1}-\bm{x}_j)\cdot R_{\eta_2}^T\begin{bmatrix} 0\\ 0 \\ 1\end{bmatrix} \\
	\nu_f &= \begin{cases}  1 & \mbox {if } k_r > 0, \\
				            0 & \mbox {else}. 
			\end{cases} \\
	\nu_b &= \begin{cases}  1 & \mbox {if } k_r < 0, \\
				            0 & \mbox {else}. 
			\end{cases} 
\end{align}
Then, the control law for robot $i$ in $\mathcal{Q}_1$ is given by:
\begin{equation}
	\begin{split}
	\bm{u}_i^{(1)} &\!=\! \sum\limits_{m=1}^{N_1} k^{(1)}_m \big[R_{m_s}^{(1)}(\bm{x}^{(1)}_{i+m} - \bm{x}^{(1)}_{i})  + R_{m_s}^{(1)^T}(\bm{x}^{(1)}_{i-m} - \bm{x}^{(1)}_{i})\big] \\
				&+ |k_r|\big[\nu_f R_{r_s}^{(1)} (\bm{x}^{(1)}_{i+1} - \bm{x}^{(1)}_{i}) + \nu_b R_{r_s}^{(1)^T} (\bm{x}^{(1)}_{i-1} - \bm{x}^{(1)}_{i})\big],
	\end{split}
\label{cyclic_rotational}
\end{equation}
where $R_{m_s} = R_{\eta_1}^TR_{m\pi/|\mathcal{V}_1| + \alpha_s(\bm{x}^{(1)})}R_{\eta_1}$, and $R_{r_s} = R_{\eta_1}^TR_{\pi/|\mathcal{V}_1|}R_{\eta_1}$. Thus, the control for all robots in $\mathcal{Q}_1$ is augmented by a ``one-sided" cyclic controller (that is flow-invariant with respect to the polygon $\mathcal{M}^{(1)}_{|\mathcal{V}_1|}$) that aligns $\mathcal{Q}_1$ with its neighboring face $\mathcal{Q}_2$. Note that only one of $\nu_f$ or $\nu_b$ will be equal to 1 if $\mathcal{Q}_1$ does not possess the correct in-plane rotation. 

The proof of convergence for $\bm{x}^{(1)}$ to the subspace $\mathcal{M}^{(1)}_{|\mathcal{V}_1|}$ is a trivial extension of Lemma \ref{Lemma:Mn_to_Mnrho} by noting that the net control for $\mathcal{Q}_1$ may be written as:
\[
	\begin{split}
	\dot{\bm{x}}^{(1)} = \bm{u}^{(1)} = &-\mathcal{L}_{\eta}^{(1)}(\bm{x}^{(1)},\tau)\bm{x}^{(1)} \\&- |k_r(\bm{x}^{(1)})|\left(\nu_f L_1\otimes R_{r_s} + \nu_b L_1^T \otimes R_{r_s}^T \right)\bm{x}^{(1)}. \end{split}
\]
A suitable auxiliary system for the above dynamics is given by:
\[
	\begin{split}
	\dot{\bm{y}}^{(1)} = -\bar{V}^{(1)}\bigg[&\mathcal{L}_{\eta}^{(1)}(\bm{x}^{(1)}_{p},\tau) \\+&|k_r(\bm{x}^{(1)}_p)|\left( \nu_f L_1\otimes R_{r_s} + \nu_b L_1^T \otimes R_{r_s}^T \right)\bigg]\\ \times &\left(\bar{V}^{(1)^T}\bm{y}^{(1)} + \bar{U}^{(1)^T}\bar{U}^{(1)}\bm{x}_{p}^{(1)}\right),
	\end{split}
\]
where $\bar{V}^{(1)}$ is the orthonormal counterpart of $\widetilde{V}^{(1)}$ without the matrix $E^{(1)}$. Application of Corollary \ref{co:applied_contraction} then requires:
\[ \begin{split}
	\bar{V}^{(1)}\big[\mathcal{L}_{\eta}^{(1)}(\bm{x}^{(1)}_{p},\tau) + \dfrac{1}{2}|k_r(\bm{x}^{(1)}_p)|\mathcal{L}_{1\eta}^{(1)} \big] \bar{V}^{(1)^T} \succ 0,\ &\text{uniformly}, \\  &\forall \bm{x}_p, \tau. \end{split}
\]
Note that $\bar{V}^{(1)}|k_r(\bm{x}^{(1)}_p)|\mathcal{L}_{1\eta}^{(1)} \bar{V}^{(1)^T} \succ 0,\ \forall \bm{x}_p$. Thus, the above condition reduces to \eqref{convergence_cond_size} in Lemma \ref{contraction_size}. The proof of convergence for polygon size is unchanged since the rotational controller introduced in \eqref{cyclic_rotational} does not affect the expression for $\dot{p}_i$ given in \eqref{size_cont}.

The control law for the remaining robots in the polyhedron is given as before by \eqref{poly_cyclic_ind} and \eqref{poly_control_robot}. Conditions for convergence for the reduced PPS (defined by the set of polygons $\mathcal{Q}\setminus \mathcal{Q}_1$) can be derived by eliminating the first block in the net constraint matrix given in \eqref{V_poly}. The robots in $\mathcal{V}_1$ converge to the polygon $\mathcal{Q}_1$ with the desired size and an orientation consistent with the reduced PPS (as dictated by $\mathcal{Q}_2$). In turn, all polygons in the reduced PPS will then possess the desired size $\rho$ as dictated by the polygon $\mathcal{Q}_1$.

\section{Quadcopter Velocity Tracking Control}
In this section we present the quadcopter velocity tracking controller used for the numerical simulations presented in the main body of the paper. For notational convenience we drop the $i$ subscript notation referring to the $i^{th}$ quadcopter. In this section we assume that all quantities refer to a single quadcopter. 

Let $\bm{v}_d$ denote the desired velocity for a quadcopter, as dictated by the cyclic control laws, and let $\bm{v}$ be the current velocity. A simple proportional-integral-derivative controller converts the velocity error $\bm{e}_v$ given by $\bm{v}_d - \bm{v}$ into a desired acceleration $\bm{a}_d$. To achieve this desired acceleration, a reference attitude quaternion $\bm{q}_d$ must also be generated since the quadcopter only posseses thrust capability in one direction (opposite its body $z$ axis). To do this, we equate the desired acceleration to the actual acceleration dynamics given in \eqref{quad_trans} and linearize the resulting equation assuming small pitch and roll angles. This yields the following system of equations:

\begin{subequations}
\begin{align}
\begin{bmatrix} -T\sin(\psi) & -T\cos(\psi) \\ T\cos(\psi) & -T\sin(\psi) \end{bmatrix} \begin{bmatrix} \phi \\ \theta \end{bmatrix} &= m_q \begin{bmatrix} a_{d_x} \\ a_{d_y} \end{bmatrix} \\
		-T + m_q g &= m a_{d_z},
\end{align}
\end{subequations}
where $T$ is the desired net thrust, and $[a_{d_x},\ a_{d_y},\ a_{d_z}]$ are the components of $\bm{a}_d$. Given a desired yaw angle $\psi$, we can solve the above system of equations for the required thrust ($T$), roll ($\phi$) and pitch ($\theta$). The roll, pitch and yaw angles are then converted into a desired attitude quaternion $\bm{q}_d$ which is fed into the attitude control loop. 

The attitude control loop uses the desired attitude $\bm{q}_d$ and the current measured attitude $\bm{q}$ to compute a desired correction quaternion $\Delta \bm{q}$ given by:
\[
	\begin{split}
	\Delta \bm{q} &= \bm{q}_d \odot \bm{q}^{-1}\\ &= \begin{bmatrix} \cos(\beta/2) & \sin(\beta/2) \hat{\bm{n}}_x & \sin(\beta/2) \hat{\bm{n}}_y & \sin(\beta/2) \hat{\bm{n}}_z \end{bmatrix}^T   .\end{split}
\]
The desired moment is then given by the following proportional-derivative control law:
\[
	\bm{M} = K_p \beta \hat{\bm{n}} - K_d \bm{\omega},
\]
where $K_p$ and $K_d$ are positive semi-definite gain matrices. Finally, the desired thrust and moment are fed into the quadcopter motor controller to compute the desired motor speeds. A block diagram for this controller is shown below in Figure \ref{fig:quad_vel_control}.
\begin{figure}[htbp]
\begin{center}
	\includegraphics[scale=0.33]{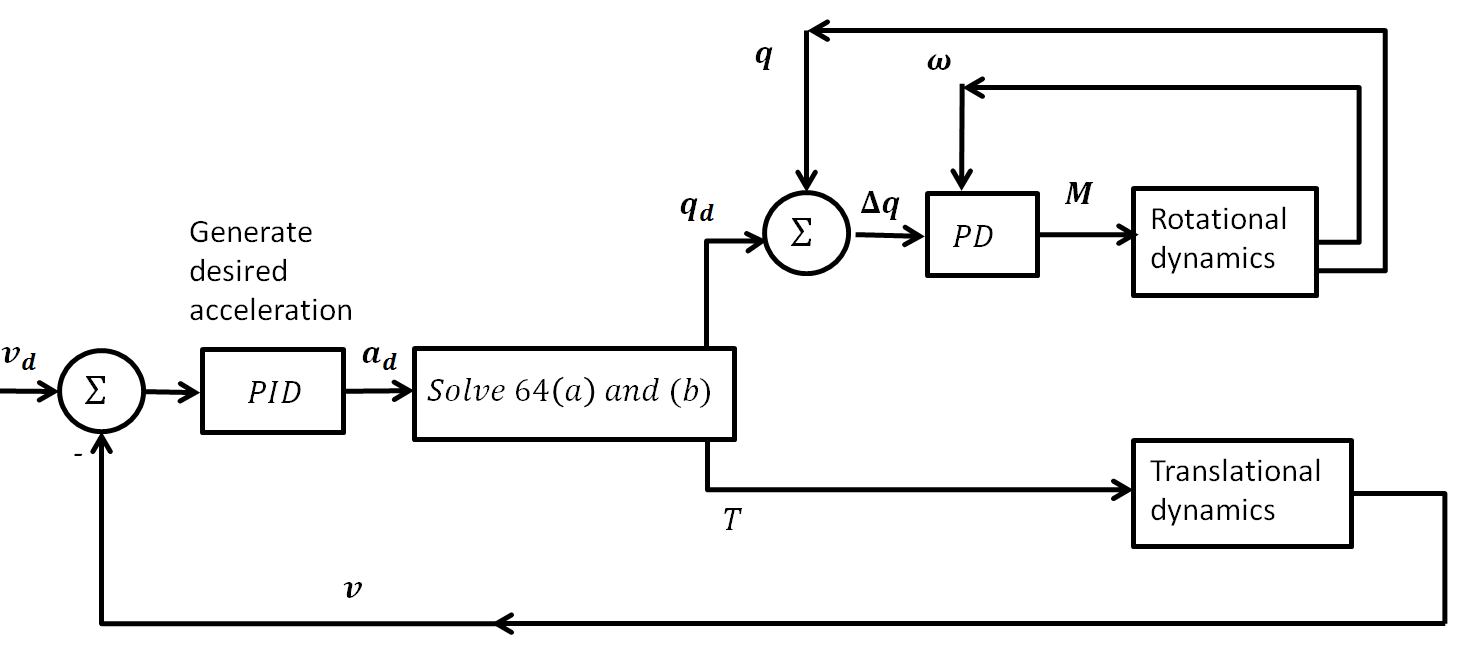}
\end{center}	
	\caption{Quadcopter Velocity Tracking Controller.}
	\label{fig:quad_vel_control}
\end{figure}

}{}
\end{document}